\def\Pl{{\mathcal{P}_\lambda}}
\def\Pgd{{\mathcal{P}_\gamma^d}}
\def\Qld{{\mathcal{Q}^d_\lambda}}
\def\Qgd{{\mathcal{Q}^d_\gamma}}
\def\Qgdp{{\mathcal{Q}^d_{\gamma,p_\gamma}}}
\def\pg{{{p}_\gamma}}
\def\Cdm{{(\mathbb{C}^d)^{\otimes m}}}
\def\Cdmn{{\big(\mathbbm{C}^{d}\big)^{\otimes m}\otimes \big(\overline{\mathbbm{C}^{d}}\big)^{\otimes n}}}
\newcommand{\RNum}[1]{\uppercase\expandafter{\romannumeral #1\relax}}
\theoremstyle{plain}
\newtheorem{theorem}{Theorem}
\newtheorem{corollary}[theorem]{Corollary}
\newtheorem{lemma}[theorem]{Lemma}
\newtheorem{claim}{Claim}
\newtheorem{fact}[claim]{Fact}
\theoremstyle{definition}
\newtheorem{definition}{Definition}
\newtheorem{example}{Example}
\theoremstyle{remark}
\newtheorem{remark}{Remark}
\def \id {\mathbbm{I}}
\begin{document}
\title{A memory and gate efficient algorithm for unitary mixed Schur sampling}

\author{Enrique Cervero-Mart\'{i}n}
\email{enrique.cervero@u.nus.edu}
\affiliation{Centre for Quantum Technologies, National University of Singapore}

\author{Laura Mančinska}
\email{mancinska@math.ku.dk}
\affiliation{Centre for the Mathematics of Quantum Theory, University of Copenhagen}

\author{Elias Theil}
\email{edmt@math.ku.dk}
\affiliation{Centre for the Mathematics of Quantum Theory, University of Copenhagen}

\begin{abstract}
    We formalize the task of unitary Schur sampling ---an extension of weak Schur sampling--- which is the process of measuring the Young label and the unitary group register of an input $m$ qudit state. Intuitively, this task is equivalent to applying the Schur transform, projecting onto the isotypic subspaces of the unitary and symmetric groups indexed by the Young labels, and discarding of the permutation register. As such unitary Schur sampling is the natural task in processes such as quantum state tomography or spectrum estimation. We generalize this task to unitary mixed Schur sampling to account for the recently introduced mixed Schur-Weyl transform. We provide a streaming algorithm which achieves an exponential reduction in the memory complexity and a polynomial reduction in the gate complexity over naïve algorithms for the task of unitary (mixed) Schur sampling. Further, we show that if the input state has limited rank, the gate and memory complexities of our streaming algorithm as well as the algorithms for the full Schur and mixed Schur transforms are further reduced. Our work generalizes and improves on the results in arXiv2309.11947.
\end{abstract}

\maketitle

\tableofcontents

\section{Introduction}
The symmetries arising from the actions of the unitary and permutation groups play a crucial role in the understanding of quantum systems. For systems composed of multiple copies of a quantum state, an identical unitary action on all copies commutes with an arbitrary permutation of the copies.
At the core of the unitary and permutational symmetries lies \emph{Schur-Weyl duality}.
Individually, the action of the unitary/symmetric group decomposes the state space of $n$ qudits into an \emph{isotypic sum} of \emph{minimal invariant subspaces} which are left unchanged under the unitary/symmetric group actions, each occuring with some multiplicity. 
Schur-Weyl duality states that the minimal invariant subspaces of the unitary group coincide precisely with the multiplicity spaces of the symmetric group, and vice-versa. 
Explicitly, Schur-Weyl duality is captured by the relation \cite{GW98}
\begin{align}\label{SWDualityIntro}
    \Cdm \cong \bigoplus_{\lambda\vdash_{d} m} \Pl\otimes\Qld,
\end{align}
where $\Pl$ and $\Qld$ correspond to the minimal invariant subspaces (also known as \emph{irreducible representations}, or \emph{irreps}) of the symmetric and unitary groups, respectively, and the partitions $\lambda$ of $m$ are known as the \emph{Young labels}. Any basis that spans $\Cdm$ and respects the decomposition on the right hand side is known as a \emph{Schur basis}. A \emph{(quantum) Schur transform} is therefore a unitary mapping from the computational basis of $\Cdm$ to a Schur basis, and can be implemented as a quantum circuit following \cite{BCH05,HarrowTh05,KiSt18,Krovi19}.
In the qubit case, there is an implementation with only a logarithmic number of auxiliary registers following \cite{WS23}.
Quantum Schur transform is a fundamental building block in many quantum computation and information protocols that seek to exploit the symmetries of quantum mechanical systems. Examples include algorithms for spectrum estimation and related problems \cite{KeyWer01, ChMi06, DoWr15}, quantum state tomography \cite{DoWr16, Haetal17}, quantum data compression \cite{HaMa03, Hayashi16}, distortion-free entanglement concentration \cite{MaHa07, BlCrGo14}, encoding into decoherence-free subspaces \cite{ZaRa97, KnLaVi00, KeBaLiWha01,BaconThesis03}, superreplication \cite{ChiYa15, ChiYa16}, de-Finetti theorems \cite{KoeMit09, Gross21}, geometric quantum machine learning\cite{Ragoneetal22, Nguyenetal22, Schatzkietal22}, and most recently quantum majority voting \cite{Buhrmanetal22}.

Usually the Schur transform is used as a subroutine to measure the irrep label $\lambda$, and then use the post-measurement state on $\Pl\otimes\Qld$ for further computation. 
This process is typically called \emph{weak Schur sampling}, and it can be naïvely implemented by first applying the Schur transform on $m$ qubits $U_{\rm Sch}^m$ and then applying the PVM $\{\Pi_\lambda = \mathbb{I}_{\Pl}\otimes\mathbb{I}_{\Qld}\}_{\lambda\vdash_{d} m}$. For many applications, however, we are only interested in the post-measurement state on the unitary register $\mathcal{Q}_{\lambda}^{d}$ and the permutation register can be discarded (e.g.  \cite{DoWr15,Haetal17,BlCrGo14,Buhrmanetal22}). This motivates us to introduce the task of \emph{unitary Schur sampling}, which is the first contribution of this paper.
\newline

\noindent
\textbf{I: Unitary Schur Sampling.} We formally introduce the task of \emph{unitary Schur sampling} in section \ref{subs:unitary_schur_sampling}. This is equivalent to first performing weak Schur sampling to obtain a state on $\mathcal{P}_{\lambda}\otimes\mathcal{Q}_{\lambda}^{d}$, and subsequently discarding the permutation register to obtain a state on $\mathcal{Q}_{\lambda}^{d}$.
\newline

For many algorithms using this reduced form of Schur sampling, it is also natural to assume that the inputs are received sequantially in a \emph{streaming manner}. Prime examples for this are spectrum estimation or state tomography, where the input in question consists of i.i.d. states generated by a fixed circuit queried repeatedly. 
We can therefore expect that this streaming setting, together with the fact that unitary Schur sampling does not need the full power of the Schur transform, will allow us to construct more memory- or time-efficient algorithms as compared to existing approaches. Our second contribution is therefore:
\newline

\noindent
\textbf{II: An Algorithm for Unitary Schur Sampling.} We present Algorithm \ref{alg:streamalgo}, which performs unitary Schur sampling in a streaming manner, and which has an \emph{exponential improvement in memory complexity} in the regime of constant $d$ compared to naïve implementations via the Schur transform. The results are summarized in Theorem \ref{thm:memory_gate_complexity} and a comparison with other algorithms is given in Tables \ref{tab:comparison_algorithms_unitary_Schur_sampling} and \ref{tab:comparison_similar_algorithms_unitary_Schur_sampling}.
\newline

In the setting of spectrum estimation and state tomography, rank promises on the input have been widely investigated. If the input state $\rho$ has rank $r\ll d$, then there are a number of results showing that the sample complexity decreases by a factor of $(r/d)$ (e.g. \cite{DoWr16, Haetal17}). This is due to a close interplay between rank and irrep label $\lambda$. It is therefore natural to ask whether similar improvements can be obtained for other algorithms using Schur-Weyl duality. Our third contribution concerns this.
\newline

\noindent
\textbf{III: Rank Promise.}
If every input qudit is restricted to a subspace of $\mathbbm{C}^{d}$ with dimension $r\leq d$, we obtain an \emph{improvement in memory and time complexity} compared to known implementations via the Schur transform. In particular, we improve the scaling by $(r/d)$ and $(r/d)^{3}$ respectively. We obtain this improvement in time complexity also for the implementation of the full Schur transform given in \cite{HarrowTh05}, see Corollary \ref{cor:rank_improvement_Schur_transform}.
\newline

Schur-Weyl duality is generalized by \emph{mixed Schur-Weyl duality} which considers systems of $m+n$ qudits over state space $\Cdmn$, where $\overline{\mathbbm{C}^{d}}$ is the dual space of $\mathbbm{C}^{d}$. This space has the natural action of $SU(d)$ given by $U\mapsto U^{\otimes m}\otimes \overline{U}^{\otimes n}$, where $\overline{U}$ denotes the complex conjugate. Here, the action of the symmetric group is replaced by the action of the so-called \emph{walled Brauer algebra} $\mathcal{B}_{m,n}^d$ ---a generalization of the symmetric group algebra. Mixed Schur-Weyl duality states that under these two actions
\begin{align}\label{mSWDualityIntro}
    \Cdmn \cong \bigoplus_{\gamma\vdash_{d}(m,n)} \Pgd\otimes \Qgd,
\end{align}
where the spaces $\Pgd$ and $\Qgd$ are the irreps of $\mathcal{B}_{m,n}^d$ and $SU(d)$ respectively, and the $\gamma\vdash_{d}(m,n)$ indexing the irreps are known as \emph{staircases}. In particular, Schur-Weyl duality is a special case of mixed Schur-Weyl duality for $n=0$. The mixed Schur transform can be implemented on a quantum computer following \cite{Nguyen_2023,Grinko_2023} and may be used to efficiently implement unitary equivariant channels \cite{GO23}, which have applications in port-based teleportation \cite{IH08,KM+21}, asymetric cloning \cite{NPR21}, covariant quantum error correction \cite{KL22}, entanglement detection \cite{HKMV22} or black-box transformations \cite{YSM23}. It is therefore interesting to develop new algorithms for mixed Schur-Weyl duality. As our fourth contribution, we generalize the previous two contribution to the mixed Schur-Weyl duality setting.
\newline

\noindent
\textbf{IV: Unitary Mixed Schur Sampling.} We generalize the task of unitary Schur sampling and our algorithm to mixed Schur-Weyl duality. In particular, we obtain the same exponential improvement in memory complexity, and $(r/d)$ improvement in time complexity if restricted to input qudits of reduced rank $r\leq d$. This improvement translates to the algorithm for the full mixed Schur transform given in \cite{Grinko_2023,Nguyen_2023}. A comparison with other algorithms is given in Table \ref{tab:comparison_algorithms_mixed_unitary_Schur_sampling}.
\newline

Since mixed Schur-Weyl duality is a generalization of Schur-Weyl duality, we state all results in Theorem \ref{thm:memory_gate_complexity} and Corollary \ref{cor:rank_improvement_Schur_transform} in terms of mixed Schur-Weyl duality. The reduced case of Schur-Weyl duality can be obtained by setting $n=0$.

This paper is structured as follows: In Section \ref{sec:results} we present the task of unitary (mixed) Schur sampling and state the mathematical results. The main part of this paper then starts with an exposition on the mathematical preliminaries of mixed Schur-Weyl duality in Section \ref{sec:prelims}. In Section \ref{sec:steaming_algorithm}, we present our algorithm and in particular, we prove its correctness in Section \ref{sec:alg_proof}. We further discuss its gate and memory complexity in \ref{sec:complexity}, relegating some mathematical details to the Appendices.

\section{Technical Results}
\label{sec:results}
\subsection{Unitary Schur sampling}
\label{subs:unitary_schur_sampling}

In this paper, we introduce the task of \emph{unitary (mixed) Schur sampling}. Our motivation is that many algorithms using Schur-Weyl duality do not need the full Schur transform, but only the state on the unitary irrep $\Qld$. Examples for this are spectrum estimation and related problems \cite{KeyWer01, ChMi06, DoWr15}, state tomography \cite{DoWr16,Haetal17}, entanglement concentration \cite{MaHa07,BlCrGo14}, purification \cite{Cirac_99}, optimal cloning \cite{Werner_1998}, and quantum majority vote \cite{Buhrmanetal22}. In general, any algorithm that uses Schur-Weyl duality and where the output is invariant under permutations of the input has this property. Formally, we have the following definition:

\begin{definition}[Unitary Schur sampling]
    \label{def:unitary_Schur_sampling}
    Let $\rho$ be a state on $\Cdm$. Let further $p(\lambda,\rho)$ be the probability of performing weak Schur sampling on $\rho$ and measuring the irrep label $\lambda\vdash_{d} m$, and let $\rho_{\lambda}$ be the post-measurement state on $\mathcal{P}_{\lambda}\otimes \mathcal{Q}_{\lambda}^{d}$. Then we define \emph{unitary Schur sampling} as the task of taking $\rho$ as input and obtaining $\lambda$ and $\Tr_{\mathcal{P}_{\lambda}}[\rho_{\lambda}]$ with probability $p(\lambda,\rho)$.
\end{definition}

It is important to note here that while unitary Schur sampling is defined via weak Schur sampling and subsequent tracing out of the $\mathcal{P}_{\lambda}$ register, this is not the only possible implementation. In particular, our algorithm described in Section \ref{sec:alg_proof} follows a different method and leads to the improvements in memory complexity described in Theorem \ref{thm:memory_gate_complexity} and Table \ref{tab:comparison_algorithms_unitary_Schur_sampling}.

I the context of the nrecentlyintroduced mixed Schur-Weyl duality \cite{Nguyen_2023,Grinko_2023} of Eq.~\eqref{mSWDualityIntro},  we define\emph{unitary mixed Schur sampling} cin a similar manner

\begin{definition}[Unitary mixed Schur sampling]
    \label{def:mixed_unitary_Schur_sampling}
    Let $\rho$ be a state on $\Cdmn$. Let further $p(\gamma,\rho)$ be the probability of applying the mixed Schur transform on $\rho$ and measuring the irrep label $\gamma\vdash_{d} (m,n)$, and let $\rho_{\gamma}$ be the post-measurement state on $\mathcal{P}_{\gamma}^{d}\otimes \mathcal{Q}_{\gamma}^{d}$. Then we define \emph{unitary mixed Schur sampling} as the task of taking $\rho$ as input and obtaining $\gamma$ and $\Tr_{\mathcal{P}_{\gamma}^{d}}[\rho_{\gamma}]$ with probability $p(\gamma,\rho)$.
\end{definition}

\subsection{Time and memory complexity}
We propose Algorithm \ref{alg:streamalgo} detailed in Section \ref{sec:alg_proof} to perform unitary (mixed) Schur sampling, as introduced in the previous section.

We take as gate complexity the number of elementary ($CNOT$, $H$, $Z^{\nicefrac{1}{4}}$) gates needed to perform the algorithm, and we take as memory complexity the maximum number of qubits needed to store the state at any point during the algorithm. For the gate and memory complexity of our algorithm we obtain the following result, which is a combination of Lemmata \ref{lem:wmSsCorrectness}, \ref{lem:memory_gate_complexity_1} and \ref{lem:memory_gate_complexity_2}.

\begin{theorem}[Complexity of Algorithm \ref{alg:streamalgo}]
    \label{thm:memory_gate_complexity}
    Algorithm \ref{alg:streamalgo} performs the task of unitary mixed Schur sampling up to accuracy $\epsilon$ using a stream of $m+n$ input qudits, $M$ qubits of memory and $T$ elementary ($CNOT$, $H$, $Z^{\nicefrac{1}{4}}$) gates, where
    \begin{align}
        &M=O\big(d^2\log_{2}^{p}(d,m,n,1/\epsilon)\big) \, ,\\
        &T=O\big((m+n)d^4\log_{2}^{p}(d,m,n,1/\epsilon)\big) \, ,
    \end{align}
    with $p\approx 1.44$. Let further $S,S'\subseteq\mathbbm{C}^{d}$ be subspaces with $\dim S=r$ and $\dim S'=r'$. If the input is restricted to the subspace
    \begin{align}\label{rank_restriction}
        S^{\otimes m}\otimes \overline{S'}^{\otimes n}\subseteq \big(\mathbbm{C}^d\big)^{\otimes m}\otimes \big(\overline{\mathbbm{C}^d}\big)^{\otimes n} \, ,
    \end{align}
    then
    \begin{align}
        \label{equ:memory_complexity_reduced}
        &M=O\big((r+r')d\log_{2}^{p}(d,m,n,1/\epsilon)\big) \, ,\\
        \label{equ:gate_complexity_reduced}
        &T=O\big((m+n)(r+r')^3d\log_{2}^{p}(d,m,n,1/\epsilon)\big) \, .
    \end{align}
\end{theorem}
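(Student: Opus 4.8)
The plan is to derive the theorem from its three ingredients: correctness of Algorithm~\ref{alg:streamalgo} (Lemma~\ref{lem:wmSsCorrectness}), the unrestricted memory and gate bounds (Lemma~\ref{lem:memory_gate_complexity_1}), and the improvement under the rank promise~\eqref{rank_restriction} (Lemma~\ref{lem:memory_gate_complexity_2}). The common backbone is that the algorithm is a streaming loop over the $m+n$ input qudits which, at step $k$, holds only a staircase label $\gamma_k$ together with a state on $\mathcal{Q}^d_{\gamma_k}$, and which absorbs the next qudit by a single Clebsch--Gordan transform $\CGg$ (for the first $m$ qudits, living in $\mathbbm{C}^d$) or a single dual Clebsch--Gordan transform $\dCGg$ (for the last $n$, living in $\overline{\mathbbm{C}^d}$), immediately measuring and discarding the freshly created multiplicity register.

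\emph{Correctness.} I would first analyze the exact-arithmetic version. The representation-theoretic content is the branching rule $\mathcal{Q}^d_\gamma\otimes\mathbbm{C}^d\cong\bigoplus_{\gamma'}\mathcal{Q}^d_{\gamma'}$ (respectively $\mathcal{Q}^d_\gamma\otimes\overline{\mathbbm{C}^d}\cong\bigoplus_{\gamma'}\mathcal{Q}^d_{\gamma'}$), with $\gamma'$ ranging over the staircases reachable from $\gamma$ by the walled-Brauer box-adding/removing rule; iterating this through all $m+n$ steps reproduces exactly the decomposition~\eqref{mSWDualityIntro}, the accumulated tensor factor carrying the $\mathcal{B}_{m,n}^d$-action on the multiplicity spaces $\mathcal{P}^d_\gamma$. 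Since every multiplicity register is discarded as soon as it is produced, the output classical--quantum state is exactly $\sum_\gamma p(\gamma,\rho)\,\proj{\gamma}\otimes\Tr_{\mathcal{P}^d_\gamma}[\rho_\gamma]$, i.e.\ the channel of Definition~\ref{def:mixed_unitary_Schur_sampling}. Passing to the approximate implementation, if the $k$-th (dual) Clebsch--Gordan subroutine is within diamond-norm distance $\delta$ of its ideal version, then by subadditivity of the diamond norm under composition the whole run is within $(m+n)\delta$, so it suffices to compile each subroutine at precision $\delta=\Theta(\epsilon/(m+n))$; this is what injects the $\log_2(m+n)$ and $\log_2(1/\epsilon)$ dependence into $M$ and $T$.

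\emph{Memory.} At any point the stored state consists of: (i) the label register holding $\gamma_k$ (and, under the rank promise, a pointer to the currently active subspace of $\mathbbm{C}^d$), using $O(d^2\log_2(m{+}n))$ qubits in general and $O((r{+}r')d\log_2(m{+}n))$ under~\eqref{rank_restriction}; (ii) the $\mathcal{Q}^d_{\gamma_k}$-register, encoded in a Gelfand--Tsetlin basis so that a basis vector is a triangular array whose only nonzero part has $\le r{+}r'$ rows once the input lies in $S^{\otimes m}\otimes\overline{S'}^{\otimes n}$ (and $\le d$ rows in general), hence $\lceil\log_2\dim\mathcal{Q}^d_{\gamma_k}\rceil=O((r{+}r')d\log_2(m{+}n))$ (respectively $O(d^2\log_2(m{+}n))$) qubits by the Weyl dimension formula; and (iii) the workspace of the single active (dual) Clebsch--Gordan circuit, which is polynomial in $d$, polylogarithmic in $(d,m,n,1/\epsilon)$, and shrinks by the factor $(r{+}r')/d$ under the promise. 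Maximizing the sum of (i)--(iii) over the run gives the stated $M$; the polylogarithmic exponent $p\approx1.44$ tracks the cost of compiling the reduced Wigner rotations inside (iii) into the elementary gate set, the same source as in the gate bound below.

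\emph{Gates, and the main obstacle.} The loop runs $m+n$ iterations, each a (dual) Clebsch--Gordan transform followed by a free projective measurement, so it suffices to bound one such transform. Following the Gelfand--Tsetlin reduction of \cite{HarrowTh05,BCH05}, the $SU(d)$ (dual) Clebsch--Gordan transform decomposes into $O(d)$ level-transforms, each a block-diagonal unitary built from reduced Wigner rotations; in exact arithmetic a level-transform costs $O(d^3)$ rotations, and compiling these into $\{CNOT,H,Z^{\nicefrac{1}{4}}\}$ at precision $\delta=\Theta(\epsilon/(m+n))$ incurs a $\log_2^p(d,m,n,1/\epsilon)$ overhead, for per-step cost $O(d^4\log_2^p(d,m,n,1/\epsilon))$ and total $T=O((m+n)d^4\log_2^p(d,m,n,1/\epsilon))$. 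Under~\eqref{rank_restriction} I would argue by induction on $k$ that every occurring staircase has positive part of length $\le r$ and negative part of length $\le r'$ --- because the reachable $SU(d)$-irreps are precisely those appearing in $(\spn S)^{\otimes m_k}\otimes(\overline{\spn S'})^{\otimes n_k}$, to which ordinary (mixed) Schur--Weyl duality for the $r$- and $r'$-dimensional subspaces applies --- so each level-transform acts nontrivially only on $\le(r{+}r')$-dimensional data, costs $O((r{+}r')^3)$ rotations, and the per-step cost drops to $O((r{+}r')^3 d\,\log_2^p(d,m,n,1/\epsilon))$, yielding~\eqref{equ:gate_complexity_reduced}; the identical reduced Clebsch--Gordan analysis, strung along the branching tree of the full transform of \cite{Grinko_2023,Nguyen_2023,HarrowTh05}, gives Corollary~\ref{cor:rank_improvement_Schur_transform}. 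The delicate part I anticipate is exactly this last reduction: showing that the rank promise does not merely compress the input/output registers of the Clebsch--Gordan subroutine but literally reduces its gate count, which requires the inductive invariant on reachable staircases to interact cleanly with the internal structure of the circuit, together with an honest accounting of the gate-set compilation that produces the exponent $p\approx1.44$.
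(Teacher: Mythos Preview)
Your proposal is correct and follows essentially the same approach as the paper: the theorem is obtained by combining Lemma~\ref{lem:wmSsCorrectness} (correctness), Lemma~\ref{lem:memory_gate_complexity_1} (unrestricted complexity), and Lemma~\ref{lem:memory_gate_complexity_2} (rank-restricted complexity), and your sketches of those three ingredients---the iterative (dual) Clebsch--Gordan structure for correctness, the per-step $O(d^4\log_2^p(\cdot))$ gate and $O(d^2\log_2^p(\cdot))$ memory bounds via the Gelfand--Tsetlin recursion of \cite{HarrowTh05,Nguyen_2023,Grinko_2023}, and the inductive invariant bounding the positive/negative lengths of reachable staircases by $r,r'$---match the paper's arguments in Sections~\ref{sec:alg_proof}--\ref{sec:complexity} and Appendices~\ref{app:building_the_CG_transform}--\ref{app:mixed_SW_duality_subspaces}.
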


\begin{remark}
    If the gate complexity is not of concern, we can adapt our algorithm to obtain
    \begin{align}
        M=O\big(d(r+r')\log_{2}(n+m)\big) \, .
    \end{align}
    However, this changes the gate complexity to
    \begin{align}
        T=O\big(d(r+r')(m+n)^{2d(r+r')+1}\log^{p}_2(m,n,1/\epsilon)\big) \, .
    \end{align}
    For more details, see Remark \ref{rem:straightforward_implementation_memory} and the precursor work \cite{Cervero_2024}.
\end{remark} 

\begin{remark}
    Since $r,r'\leq d$, the algorithms achieving Eqs. \eqref{equ:memory_complexity_reduced} and \eqref{equ:gate_complexity_reduced} only yield an improvement if $r,r'\ll d$.
\end{remark}

Altogether, Theorem \ref{thm:memory_gate_complexity} represents exponential savings in memory for the regime of constant $d$, and a gate complexity reduction of $(d/(r+r'))^{3}$ for reduced rank input. For a full comparison between our algorithm and other algorithms see Tables \ref{tab:comparison_algorithms_unitary_Schur_sampling}, \ref{tab:comparison_similar_algorithms_unitary_Schur_sampling} and \ref{tab:comparison_algorithms_mixed_unitary_Schur_sampling}.

Table \ref{tab:comparison_algorithms_unitary_Schur_sampling} compares our algorithm for unitary Schur sampling to naïve implementations via the Schur transform. Table \ref{tab:comparison_similar_algorithms_unitary_Schur_sampling} compares our algorithm to generalized phase estimation algorithms proposed in \cite{HarrowTh05}. These algorithms do not perform full unitary Schur sampling, as they don't provide the post-measurement state on $\mathcal{Q}_{\lambda}$. However, they can be used to perform measurements in the irrep label $\lambda$ and the Gelfand-Tsetlin basis (explained in Section \ref{subsubs:GT_Basis}) on $\mathcal{Q}_{\lambda}^{d}$. This is useful for e.g. spectrum estimation, where only $\lambda$ is necessary. Finally, Table \ref{tab:comparison_algorithms_mixed_unitary_Schur_sampling} compares our algorithm for unitary mixed Schur sampling to naïve implementations via the mixed Schur transform.

In the tables, the algorithms marked by $^{*}$ originally had $p\approx 3.97$ from the Solovay-Kitaev Theorem, but using \cite{Kuperberg23} as a primitive instead yields an automatic improvement to $p\approx 1.44$. In addition, we do not have a precise estimation of the memory complexity for the algorithms marked by $\dagger$, however they all require at least $m$ memory-qudits to store the input.

\begin{table*}[htbp!]
\renewcommand{\arraystretch}{1.5}
\begin{center}
    \begin{tabular}{ |p{0.16\linewidth}|p{0.30\linewidth}|p{0.26\linewidth}|p{0.23\linewidth}|} 
    \hline
        Algorithm & Gates & Memory & Notes \\ 
    \hline
        Bacon+ \cite{BCH05} & $O\big(md^4\log_{2}^{p}(d,m,1/\epsilon)\big)$ & $O\big((m+d^2)\log_{2}^{p}(d,m,1/\epsilon)\big)$ & $*$ \\
    \hline
        Krovi \cite{Krovi19} & $O\big(\text{poly}(m,\log(d),\log(1/\epsilon))\big)$ & $\Omega(m)$ & $\dagger$ \\
    \hline    
        Wills+ \cite{WS23} & $O\big(m^3\log(m)\log(m/\epsilon)\big)$ & $O(m)$ & For qubits \\
    \hline
        This Work & $O\big(mr^{3}d\log_{2}^{p}(d,m,1/\epsilon)\big)$ & $O\big(rd\log_{2}^{p}(d,m,1/\epsilon)\big)$ & $r\leq d$ is the reduced input dimension \\
    \hline
\end{tabular}
\end{center}
\caption{Comparison of gate and memory complexity between different algorithms for unitary Schur sampling.
}
\label{tab:comparison_algorithms_unitary_Schur_sampling}
\end{table*}

\begin{table*}[htbp!]
\renewcommand{\arraystretch}{1.5}
\begin{center}
    \begin{tabular}{ |p{0.16\linewidth}|p{0.30\linewidth}|p{0.26\linewidth}|p{0.23\linewidth}|} 
    \hline
        Algorithm & Gates & Memory & Notes \\ 
    \hline
        Harrow (GPE I) \cite{HarrowTh05} &  $O\big((\text{poly}(n)+n\log(d))\log^{p}(1/\epsilon)\big)$ & $\Omega(m)$ & no post-measurement state on $\mathcal{Q}_{\lambda}^{d}$, $\dagger$, $*$ \\
    \hline
        Harrow (GPE II) \cite{HarrowTh05} &  $O\big(d\text{poly}(n,\log(n),\log(1/\epsilon))\big)$ & $\Omega(m)$ & only measurements in Gelfand-Tsetlin basis for $\mathcal{Q}_{\lambda}^{d}$ possible, $\dagger$ \\
    \hline
        This Work & $O\big(mr^{3}d\log_{2}^{p}(d,m,1/\epsilon)\big)$ & $O\big(rd\log_{2}^{p}(d,m,1/\epsilon)\big)$ & $r\leq d$ is the reduced input dimension \\
    \hline
\end{tabular}
\end{center}
\caption{Comparison of gate and memory complexity between our algorithm and generalized phase estimation (GPE).
}
\label{tab:comparison_similar_algorithms_unitary_Schur_sampling}
\end{table*}

\begin{table*}[htbp!]
\renewcommand{\arraystretch}{1.5}
\begin{center}
    \begin{tabular}{ |p{0.15\linewidth}|p{0.35\linewidth}|p{0.30\linewidth}|p{0.15\linewidth}|} 
    \hline
        Algorithm & Gates & Memory & Notes \\ 
    \hline
        Nguyen, Grinko+ \cite{Nguyen_2023,Grinko_2023} & $O\big((n+m)d^4\log_{2}^{p}(d,m,n,1/\epsilon)\big)$ & $O\big((n+m+d^2)\log_{2}^{p}(d,m,n,1/\epsilon)\big)$ & $*$ \\
    \hline
        This Work & $O\big((m+n)(r+r')^{3}d\log_{2}^{p}(d,m,n,1/\epsilon)\big)$ & $O\big((r+r')d\log_{2}^{p}(d,m,n,1/\epsilon)\big)$ & $r,r'\leq d$ are the reduced input dimensions \\
    \hline
\end{tabular}
\end{center}
\caption{Comparison of gate and memory complexity between different algorithms for unitary mixed Schur sampling.
}
\label{tab:comparison_algorithms_mixed_unitary_Schur_sampling}
\end{table*}

Further, noting that the algorithms for the Schur transform and mixed Schur transform of \cite{Bacon_2006,Nguyen_2023,Grinko_2023} are similar to our Algorithm \ref{alg:streamalgo}, we get the following corollary:

\begin{corollary}[Complexity of (mixed) Schur transform]
    \label{cor:rank_improvement_Schur_transform}
    If the input is restricted to the subspaces in Eq.~\eqref{rank_restriction}, the algorithms for the mixed Schur transform described in \cite{Nguyen_2023,Grinko_2023} have gate complexity
    \begin{align}
        T=O\big((m+n)(r+r')^3d\log_{2}^{p}(d,m,n,1/\epsilon)\big) \, ,
    \end{align}
    with $p\approx 1.44$. In particular, setting $n=0$ we obtain that the algorithm for the Schur transform described in \cite{BCH05} has gate complexity
    \begin{align}
        T=O\big(mr^3d\log_{2}^{p}(d,m,1/\epsilon)\big) \, .
    \end{align}
\end{corollary}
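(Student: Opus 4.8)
The plan is to observe that the algorithms for the (mixed) Schur transform in \cite{BCH05,Nguyen_2023,Grinko_2023} share the skeleton of our streaming Algorithm \ref{alg:streamalgo}: they process the $m+n$ input qudits one at a time, the $k$-th step coupling the incoming (co-)qudit to the current unitary register by a Clebsch--Gordan (``add/remove a box'') transform, the only difference being that the full transform retains the walled-Brauer/permutation register rather than discarding it. Since the CG transforms never act on that register, keeping it leaves the gate count unchanged (it only affects the memory footprint). Hence I would simply re-run, under the rank restriction \eqref{rank_restriction}, the per-step gate analysis already carried out for Algorithm \ref{alg:streamalgo} in Lemmata \ref{lem:memory_gate_complexity_1} and \ref{lem:memory_gate_complexity_2}.

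First I would show that the restriction \eqref{rank_restriction} propagates to the irrep labels occurring during the algorithm. If the particle qudits live in $S$ with $\dim S=r$, then $S^{\otimes m}$, viewed as a representation of $U(S)\cong U(r)\subseteq U(d)$, decomposes into irreps indexed by Young diagrams with at most $r$ rows; likewise the co-qudit part $\overline{S'}^{\otimes n}$ only involves diagrams with at most $r'$ rows. By mixed Schur--Weyl duality and the branching rules governing the chain of CG transforms, after coupling any prefix of the stream the state is supported on irreps $\Qgd$ whose staircase $\gamma$ has at most $r+r'$ nonzero rows (equivalently, corresponds to a pair of diagrams with at most $r$ and $r'$ rows). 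Consequently each intermediate CG transform only needs to be controlled on $\gamma$ ranging over this polynomially small set of staircases, rather than over all $\gamma\vdash_{d}(k,l)$.

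With this established, I would re-estimate the cost of each CG step exactly as in the unrestricted analysis. Each step is a $\gamma$-controlled unitary acting on the incoming $\mathbb{C}^d$ register together with the relevant Gelfand--Tsetlin coordinate of the unitary register; in the unrestricted case the per-step count $O\big(d^{4}\log_{2}^{p}(d,m,n,1/\epsilon)\big)$ comes from one factor of $d$ for the incoming register and further factors of $d$ for the number of rows / GT coordinates that must be handled, with the $\log_{2}^{p}$ factor ($p\approx 1.44$) coming from synthesizing the elementary unitaries via \cite{Kuperberg23} instead of Solovay--Kitaev. Under the rank restriction every row-count factor is bounded by $r+r'$, leaving a single factor of $d$ from the incoming qudit, so each step costs $O\big((r+r')^{3} d\log_{2}^{p}(d,m,n,1/\epsilon)\big)$. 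Summing over the $m+n$ coupling steps gives the stated gate complexity for the mixed Schur transform, and specializing to $n=0$ (so the co-qudit part and $r'$ are trivial) recovers the bound for the Schur transform of \cite{BCH05}.

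\textbf{Main obstacle.} The delicate point is not the counting but the claim that the CG circuits of \cite{BCH05,Nguyen_2023,Grinko_2023} can genuinely be \emph{truncated} to the low-row sector at no extra cost. One must inspect their decomposition into elementary gates and verify that it is organized as a $\gamma$-controlled operation whose per-$\gamma$ subcircuit has size polynomial in the number of rows of $\gamma$, so that deleting the subcircuits for high-row $\gamma$ produces a valid circuit with the reduced gate count. Checking this compatibility against the explicit constructions in those papers is where the real work lies; the remainder is a direct re-run of the analysis already done for Algorithm \ref{alg:streamalgo}.
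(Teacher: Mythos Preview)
Your proposal is correct and follows essentially the same route as the paper: the corollary is obtained by noting that the (mixed) Schur transforms of \cite{BCH05,Nguyen_2023,Grinko_2023} are built from the same iterative Clebsch--Gordan steps as Algorithm~\ref{alg:streamalgo} (cf.\ Corollary~\ref{cor:iterative_CG_Schur}), so the reduced-rank per-step analysis of Appendix~\ref{app:reduced_dimension_input} transfers verbatim, and summing over $m+n$ steps gives the stated bound. Your ``main obstacle'' is precisely what the paper handles in Appendix~\ref{app:reduced_dimension_input}, where it is shown that under \eqref{rank_restriction} all intermediate staircases obey \eqref{equ:memory_reduced_dimension_staircase_restriction} and hence the reduced Wigner matrices $(t^{i,\gamma^{i}}_{\rm (d)CG})^{(\gamma^{i-1})'}_{j,j'}$ are effectively $(r+r'+1)\times(r+r'+1)$; one small imprecision in your sketch is the origin of the residual factor of $d$, which comes not from the incoming qudit register but from the depth-$d$ recursion in Algorithm~\ref{alg:CG_algo} (the loop over $1\leq i\leq d$).
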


This corollary gives a gate complexity reduction of $(d/(r+r'))^{3}$ for the implementations of the (mixed) Schur transform described in \cite{Bacon_2006,Nguyen_2023,Grinko_2023}. This is true for general applications of the (mixed) Schur transform outside of the scope of unitary (mixed) Schur sampling.

\section{Preliminaries}
\label{sec:prelims}

\subsection{Mixed Schur-Weyl duality}

\subsubsection{Schur-Weyl duality}
We assume some familiarity with Schur-Weyl duality and refer to \cite{GW98, HarrowTh05} for an in-depth introduction, as well as \cite{HarrowTh05, BCH05, KiSt18, Krovi19} for implementations of the Schur transform. 
In this section, we briefly introduce these concepts for the sake of establishing notation. 

First, we write $\lambda\vdash_{d}m$ for a partition
\begin{align}
    \lambda\in\mathbbm{N}_{0}^{d} \, , \quad m\geq\lambda_1\geq...\geq\lambda_d\geq 0 \, , \quad \lambda_1+...+\lambda_d=m \, .
\end{align}
Partitions are often described by Young diagrams, which are composed of $d$ rows with $\lambda_{k}$ boxes in the $k$-th row. 

Second, we describe the natural actions of $\mathcal{S}_{m}$ and $SU(d)$ on $\Cdm$ given by
\begin{align}
    Q^d_m(U)\ket{i_1,...,i_m}&:= U^{\otimes m}\ket{i_1,...,i_m}\,,\qquad \textrm{for}\qquad U\in SU(d)\,,\\
    P_m(\sigma)\ket{i_1,...,i_m}&:= 
    \ket{i_{\sigma^{-1}(1)},...,i_{\sigma^{-1}(m)}}\,,\qquad 
    \textrm{for}\qquad \sigma\in \mathcal{S}_m.
\end{align}
The decomposition of these actions into irreps are linked by Schur-Weyl duality:
\begin{theorem}[Schur-Weyl duality]
    The vector space $\big(\mathbbm{C}^d\big)^{\otimes m}$ decomposes under the action of $\mathcal{S}_{m}$ and $SU(d)$ in the following way
    \begin{align}\label{sw_duality_def}
        \big(\mathbbm{C}^d\big)^{\otimes m} \stackrel{\mathcal{S}_{m}\times SU(d)}{\cong} \bigoplus_{\lambda\vdash_{d}m}\mathcal{P}_{\lambda}\otimes\mathcal{Q}_{\lambda}^d \, .
    \end{align}
    We call a unitary mapping that realizes the above isomorphism a \textit{Schur transform}, and denote it $U_{\rm Sch}^{m}$.
\end{theorem}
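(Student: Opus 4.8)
The plan is to derive the decomposition from the double commutant theorem. Write $V=\big(\mathbbm{C}^d\big)^{\otimes m}$, let $\mathcal{A}\subseteq\mathrm{End}(V)$ be the linear span of $\{P_m(\sigma):\sigma\in\mathcal{S}_m\}$, and let $\mathcal{B}\subseteq\mathrm{End}(V)$ be the linear span of $\{Q^d_m(U):U\in SU(d)\}$. Both are unital subalgebras of the finite-dimensional matrix algebra $\mathrm{End}(V)$ that are closed under taking adjoints, hence semisimple. The entire content of the theorem then reduces to the single claim that $\mathcal{A}$ and $\mathcal{B}$ are mutual commutants, $\mathcal{A}'=\mathcal{B}$ and $\mathcal{B}'=\mathcal{A}$; once this is known, the decomposition is a formal consequence of the structure theory of semisimple algebras.

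The substantive step is computing $\mathcal{A}'$. Under the canonical identification $\mathrm{End}(V)\cong\mathrm{End}(\mathbbm{C}^d)^{\otimes m}$, conjugation by $P_m(\sigma)$ acts by permuting the $m$ tensor factors, so $X\in\mathcal{A}'$ iff $X$ is invariant under this permutation action, i.e.\ $X\in\mathrm{Sym}^m\!\big(\mathrm{End}(\mathbbm{C}^d)\big)$, which is the span of the pure powers $\{A^{\otimes m}:A\in\mathrm{End}(\mathbbm{C}^d)\}$. A polarization identity shows that a symmetric power is indeed spanned by such powers, and a Zariski-density argument (of $\GL(d)$ in $\mathrm{End}(\mathbbm{C}^d)$, and of $SU(d)$ in $\mathrm{SL}(d)$, together with the scalar freedom $U\mapsto cU$) shows $\mathrm{span}\{A^{\otimes m}:A\in\mathrm{End}(\mathbbm{C}^d)\}=\mathrm{span}\{U^{\otimes m}:U\in SU(d)\}=\mathcal{B}$. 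Hence $\mathcal{A}'=\mathcal{B}$; since $\mathcal{A}$ is a unital subalgebra closed under adjoints, the double commutant theorem gives $\mathcal{A}''=\mathcal{A}$, and therefore $\mathcal{B}'=\mathcal{A}$ as well.

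With $\mathcal{A}$ and $\mathcal{B}=\mathcal{A}'$ semisimple and mutually centralizing, the standard double-centralizer decomposition of the bimodule $V$ produces an orthogonal decomposition $V\cong\bigoplus_\lambda \mathcal{P}_\lambda\otimes\mathcal{Q}^d_\lambda$ in which $\mathcal{P}_\lambda$ runs over the irreducibles of $\mathcal{A}$ occurring in $V$, each paired with a distinct irreducible $\mathcal{Q}^d_\lambda$ of $\mathcal{B}$, the pairing being a bijection onto the irreducibles of $\mathcal{B}$ occurring in $V$; orthogonality of the isotypic components lets us realize the isomorphism by a unitary, which we call $U^m_{\rm Sch}$. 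It then remains to index the summands: the irreducibles of $\mathcal{A}$, the image of $\mathbbm{C}[\mathcal{S}_m]$ in $\mathrm{End}(V)$, are labelled by partitions $\lambda\vdash m$, and a dimension count --- e.g.\ the associated Young symmetrizer annihilates $V$ exactly when $\lambda$ has more than $d$ rows, equivalently $\mathcal{Q}^d_\lambda\neq 0$ iff $\len(\lambda)\leq d$ --- restricts the sum to $\lambda\vdash_d m$. Finally, one identifies $\mathcal{Q}^d_\lambda$ with the polynomial $SU(d)$-irrep of highest weight $\lambda$ by exhibiting a highest-weight vector inside $V$.

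The step I expect to be the main obstacle is the first one: showing that the commutant of the $\mathcal{S}_m$-action is precisely $\mathcal{B}$. This is where the polarization identity and the density argument passing from arbitrary $A^{\otimes m}$ to unitaries $U^{\otimes m}$ are needed, together with the invocation of the double commutant theorem. Everything downstream --- the bimodule decomposition, the passage to a unitary, and the matching of labels with Young diagrams of at most $d$ rows --- is routine representation theory.
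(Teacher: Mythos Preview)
Your proof sketch is the standard double-commutant proof of Schur--Weyl duality and is essentially correct, but there is nothing to compare it to: the paper does not prove this theorem. It is stated in the preliminaries as a classical fact, with a reference to \cite{GW98} for details, and is used as background input for the rest of the paper. So the ``paper's own proof'' you were asked to match does not exist here.

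That said, your outline is sound and is in fact close to the argument one finds in the cited reference. One small point worth tightening: the passage from $\mathrm{span}\{A^{\otimes m}:A\in\mathrm{End}(\mathbbm{C}^d)\}$ down to $\mathrm{span}\{U^{\otimes m}:U\in SU(d)\}$ requires a bit more care than ``Zariski density of $SU(d)$ in $\mathrm{SL}(d)$ plus scalar freedom.'' The clean chain is: $SU(d)$ is Zariski-dense in $\mathrm{SL}(d,\mathbbm{C})$, so their $m$-th tensor-power spans agree; then $\mathbbm{C}^{\times}\cdot\mathrm{SL}(d,\mathbbm{C})=\GL(d,\mathbbm{C})$ and $(cA)^{\otimes m}=c^{m}A^{\otimes m}$ gives equality of spans with $\GL(d,\mathbbm{C})$; finally $\GL(d,\mathbbm{C})$ is Zariski-dense in $\mathrm{End}(\mathbbm{C}^d)$. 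Your parenthetical gestures at exactly this, but as written it could be read as claiming $\mathbbm{C}^{\times}\cdot SU(d)$ is dense in $\mathrm{End}(\mathbbm{C}^d)$, which is false.
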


\subsubsection{Mixed Schur-Weyl duality}
Mixed Schur-Weyl duality is a generalization of Schur-Weyl duality, involving $m$ `input' qudits and $n$ `output' qudits. On the input qudits, unitaries act like before, whereas on the output, they are complex conjugated. On the space $\Cdmn$, this setup describes the representation of $U\in SU(d)$ given by
\begin{align}
    Q_{m,n}^{d}(U):=U^{\otimes m}\otimes\overline{U}^{\otimes n} \, ,
\end{align}
where $\overline{U}$ is the complex conjugate of $U$. As with Schur-Weyl duality, the action of $SU(d)$ via $Q_{m,n}^{d}(\cdot)$ admits a decomposition into irreps $\Qgd$ indexed by \emph{staircases} $\gamma$.
Here, we write $\gamma\vdash_{d} (m,n)$ for the \emph{staircase} of $m,n$ given by
\begin{align}
    \label{equ:staircase_def_1}
    \gamma\in\mathbbm{Z}^{d} \, , \quad \gamma_1\geq\cdots\geq\gamma_d \, , \quad \gamma_1+...+\gamma_d=m-n \, ,
\end{align}
\begin{align}
    \label{equ:staircase_def_2}
    \sum_{\gamma_i>0}\gamma_i\leq m \quad , \quad \sum_{\gamma_i<0}|\gamma_i|\leq n \, .
\end{align}

On the other hand, in the space $\Cdmn$, the action of the symmetric group is replaced by the action of the \emph{walled Brauer algebra} $\mathcal{B}^d_{m,n}$.
The elements of the algebra $\mathcal{B}^d_{m,n}$ are complex combinations of so called \emph{partially transposed permutations} of $m+n$ elements.
A partially transposed permutation $\sigma^{T}\in\mathcal{B}^d_{m,n}$ is obtained from a permutation $\sigma\in\mathcal{S}_{m+n}$ as shown in Figure \ref{fig:wBrauer}, by swapping the last $n$ nodes between the first and second rows of the permutation $\sigma\in\mathcal{S}_{m+n}$.
\begin{figure}[h]
    \centering
    \scalebox{0.8}{\tikzset{every picture/.style={line width=0.75pt}} 

\begin{tikzpicture}[x=0.75pt,y=0.75pt,yscale=-1,xscale=1]

\draw    (50,200) .. controls (86,208) and (127,242) .. (130,250) ;
\draw [shift={(130,250)}, rotate = 69.44] [color={rgb, 255:red, 0; green, 0; blue, 0 }  ][fill={rgb, 255:red, 0; green, 0; blue, 0 }  ][line width=0.75]      (0, 0) circle [x radius= 3.35, y radius= 3.35]   ;
\draw [shift={(50,200)}, rotate = 12.53] [color={rgb, 255:red, 0; green, 0; blue, 0 }  ][fill={rgb, 255:red, 0; green, 0; blue, 0 }  ][line width=0.75]      (0, 0) circle [x radius= 3.35, y radius= 3.35]   ;
\draw    (130,200) .. controls (108.5,216.5) and (101.5,227.5) .. (90,250) ;
\draw [shift={(90,250)}, rotate = 117.07] [color={rgb, 255:red, 0; green, 0; blue, 0 }  ][fill={rgb, 255:red, 0; green, 0; blue, 0 }  ][line width=0.75]      (0, 0) circle [x radius= 3.35, y radius= 3.35]   ;
\draw [shift={(130,200)}, rotate = 142.5] [color={rgb, 255:red, 0; green, 0; blue, 0 }  ][fill={rgb, 255:red, 0; green, 0; blue, 0 }  ][line width=0.75]      (0, 0) circle [x radius= 3.35, y radius= 3.35]   ;
\draw    (170,200) .. controls (190.5,218) and (197,223) .. (210,250) ;
\draw [shift={(210,250)}, rotate = 64.29] [color={rgb, 255:red, 0; green, 0; blue, 0 }  ][fill={rgb, 255:red, 0; green, 0; blue, 0 }  ][line width=0.75]      (0, 0) circle [x radius= 3.35, y radius= 3.35]   ;
\draw [shift={(170,200)}, rotate = 41.28] [color={rgb, 255:red, 0; green, 0; blue, 0 }  ][fill={rgb, 255:red, 0; green, 0; blue, 0 }  ][line width=0.75]      (0, 0) circle [x radius= 3.35, y radius= 3.35]   ;
\draw    (210,200) .. controls (194,216.5) and (184.5,224.5) .. (170,250) ;
\draw [shift={(170,250)}, rotate = 119.62] [color={rgb, 255:red, 0; green, 0; blue, 0 }  ][fill={rgb, 255:red, 0; green, 0; blue, 0 }  ][line width=0.75]      (0, 0) circle [x radius= 3.35, y radius= 3.35]   ;
\draw [shift={(210,200)}, rotate = 134.12] [color={rgb, 255:red, 0; green, 0; blue, 0 }  ][fill={rgb, 255:red, 0; green, 0; blue, 0 }  ][line width=0.75]      (0, 0) circle [x radius= 3.35, y radius= 3.35]   ;
\draw    (90,200) .. controls (68.5,216.5) and (61.5,227.5) .. (50,250) ;
\draw [shift={(50,250)}, rotate = 117.07] [color={rgb, 255:red, 0; green, 0; blue, 0 }  ][fill={rgb, 255:red, 0; green, 0; blue, 0 }  ][line width=0.75]      (0, 0) circle [x radius= 3.35, y radius= 3.35]   ;
\draw [shift={(90,200)}, rotate = 142.5] [color={rgb, 255:red, 0; green, 0; blue, 0 }  ][fill={rgb, 255:red, 0; green, 0; blue, 0 }  ][line width=0.75]      (0, 0) circle [x radius= 3.35, y radius= 3.35]   ;
\draw  [dash pattern={on 4.5pt off 4.5pt}]  (110,190) -- (110,260) ;
\draw    (290,200) .. controls (310.5,214) and (359,214.5) .. (370,200) ;
\draw [shift={(370,200)}, rotate = 307.18] [color={rgb, 255:red, 0; green, 0; blue, 0 }  ][fill={rgb, 255:red, 0; green, 0; blue, 0 }  ][line width=0.75]      (0, 0) circle [x radius= 3.35, y radius= 3.35]   ;
\draw [shift={(290,200)}, rotate = 34.33] [color={rgb, 255:red, 0; green, 0; blue, 0 }  ][fill={rgb, 255:red, 0; green, 0; blue, 0 }  ][line width=0.75]      (0, 0) circle [x radius= 3.35, y radius= 3.35]   ;
\draw    (370,250) .. controls (363,240.5) and (337,240.5) .. (330,250) ;
\draw [shift={(330,250)}, rotate = 126.38] [color={rgb, 255:red, 0; green, 0; blue, 0 }  ][fill={rgb, 255:red, 0; green, 0; blue, 0 }  ][line width=0.75]      (0, 0) circle [x radius= 3.35, y radius= 3.35]   ;
\draw [shift={(370,250)}, rotate = 233.62] [color={rgb, 255:red, 0; green, 0; blue, 0 }  ][fill={rgb, 255:red, 0; green, 0; blue, 0 }  ][line width=0.75]      (0, 0) circle [x radius= 3.35, y radius= 3.35]   ;
\draw    (410,200) .. controls (430.5,218) and (437,223) .. (450,250) ;
\draw [shift={(450,250)}, rotate = 64.29] [color={rgb, 255:red, 0; green, 0; blue, 0 }  ][fill={rgb, 255:red, 0; green, 0; blue, 0 }  ][line width=0.75]      (0, 0) circle [x radius= 3.35, y radius= 3.35]   ;
\draw [shift={(410,200)}, rotate = 41.28] [color={rgb, 255:red, 0; green, 0; blue, 0 }  ][fill={rgb, 255:red, 0; green, 0; blue, 0 }  ][line width=0.75]      (0, 0) circle [x radius= 3.35, y radius= 3.35]   ;
\draw    (450,200) .. controls (434,216.5) and (424.5,224.5) .. (410,250) ;
\draw [shift={(410,250)}, rotate = 119.62] [color={rgb, 255:red, 0; green, 0; blue, 0 }  ][fill={rgb, 255:red, 0; green, 0; blue, 0 }  ][line width=0.75]      (0, 0) circle [x radius= 3.35, y radius= 3.35]   ;
\draw [shift={(450,200)}, rotate = 134.12] [color={rgb, 255:red, 0; green, 0; blue, 0 }  ][fill={rgb, 255:red, 0; green, 0; blue, 0 }  ][line width=0.75]      (0, 0) circle [x radius= 3.35, y radius= 3.35]   ;
\draw    (330,200) .. controls (308.5,216.5) and (301.5,227.5) .. (290,250) ;
\draw [shift={(290,250)}, rotate = 117.07] [color={rgb, 255:red, 0; green, 0; blue, 0 }  ][fill={rgb, 255:red, 0; green, 0; blue, 0 }  ][line width=0.75]      (0, 0) circle [x radius= 3.35, y radius= 3.35]   ;
\draw [shift={(330,200)}, rotate = 142.5] [color={rgb, 255:red, 0; green, 0; blue, 0 }  ][fill={rgb, 255:red, 0; green, 0; blue, 0 }  ][line width=0.75]      (0, 0) circle [x radius= 3.35, y radius= 3.35]   ;
\draw  [dash pattern={on 4.5pt off 4.5pt}]  (350,190) -- (350,260) ;
\draw    (230,223.5) -- (261,223.5)(230,226.5) -- (261,226.5) ;
\draw [shift={(270,225)}, rotate = 180] [fill={rgb, 255:red, 0; green, 0; blue, 0 }  ][line width=0.08]  [draw opacity=0] (8.93,-4.29) -- (0,0) -- (8.93,4.29) -- cycle    ;

\end{tikzpicture}}
    \caption{Element of walled Brauer algebra with $m=2$ and $n=3$ (on the right) corresponding to the partial transpose of the permutation $(132)(45)$ in cyclic notation (on the left).}
    \label{fig:wBrauer}
\end{figure}
It follows that if either $m=0$ or $n=0$, then $\mathcal{B}_{m,n}^d$ is equivalent to the algebra generated by $\mathcal{S}_n$ or $\mathcal{S}_m$ respectively. 

Similar to~\cite{Nguyen_2023}, we define the representation $P^d_{m,n}:\mathcal{B}_{m,n}\rightarrow M(d(m+n))$ acting on $\Cdmn$ as
\begin{align}
    P^d_{m,n}(\sigma)(\ket{i_1,..,i_{m+n}}):=\sum_{j_1,...,j_{m+n}=1}^d\sigma_{j_1,...,j_{m+n}}^{i_1,...,i_{m+n}}\ket{j_1,...,j_{m+n}}\,,
\end{align}
where $\sigma_{j_1,...,j_{m+n}}^{i_1,...,i_{m+n}}=\prod_{(t,b)\in \sigma}\delta_{k_t,k_b}$. Here, the $(t,b)\in \sigma$ are the connections of the diagram representing $\sigma$, and $k_t,k_b$ are the indices corresponding to the node.
In particular, if $n=0$ this equals the action of permutations $\sigma\in\mathcal{S}_m$ on $\Cdm$.

\begin{example}
    For $\sigma$ depicted in Figure \ref{fig:wBrauer}, we get
    \begin{align}
        \sigma_{j_1,...,j_{m+n}}^{i_1,...,i_{m+n}}=\delta_{i_1,i_3}\delta_{i_2,j_1}\delta_{i_4,j_5}\delta_{i_5,j_4}\delta_{j_2,j_3} \, .
    \end{align}
\end{example}

From here on, we focus on the matrix algebra $\mathcal{A}^d_{m,n}$ generated by the matrices $P_{m,n}^d(\sigma)$ for $\sigma\in\mathcal{B}^d_{m,n}$,
as we are only interested in irreps that appear in this algebra. The action of $\mathcal{A}^d_{m,n}$ on $\Cdmn$ admits a decomposition into irreps $\Pgd$ also indexed by \emph{staircases} $\gamma\vdash_{d}(m,n)$.
In contrast to the irreps $\Pl$ of the symmetric group, the irreps $\Pgd$ of $\mathcal{A}^d_{m,n}$ depend on the dimension $d$.
Similarly to Schur-Weyl duality, we have:

\begin{theorem}[Mixed Schur-Weyl duality~\cite{Koike_89, Nguyen_2023,Grinko_2023}]
    \label{thm:mixed_SW_duality}
    The vector space $\Cdmn$ decomposes under the actions of $\mathcal{A}_{m,n}^{d}$ and $SU(d)$ in the following way
    \begin{align}\label{equ:msw_duality_def}
        \Cdmn \stackrel{\mathcal{A}_{m,n}^{d}\times SU(d)}{\cong} \bigoplus_{\gamma\vdash_{d} (m,n)}\mathcal{P}_{\gamma}^{d}\otimes\mathcal{Q}_{\gamma}^d \, .
    \end{align}
    We call a unitary mapping that realizes the above isomorphism a mixed Schur transform, and denote it $U^{m,n}_{\rm mSch}$
\end{theorem}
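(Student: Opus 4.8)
The plan is to deduce the statement from the classical double-centralizer (Schur--Weyl) philosophy, reducing everything to the ordinary Schur--Weyl duality of Eq.~\eqref{sw_duality_def} together with the first fundamental theorem (FFT) of invariant theory for $\GL(d)$, and finally quoting the combinatorial description of the occurring $SU(d)$-irreps from~\cite{Koike_89}. Write $W:=\Cdmn$ and let $\mathcal{A}:=\mathcal{A}^d_{m,n}=P^d_{m,n}(\mathcal{B}^d_{m,n})$ and $\mathcal{Q}:=Q^d_{m,n}(SU(d))''\subseteq\mathrm{End}(W)$.

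First I would check that the two actions commute, $[P^d_{m,n}(\sigma),Q^d_{m,n}(U)]=0$ for all $\sigma\in\mathcal{B}^d_{m,n}$, $U\in SU(d)$. It suffices to verify this on algebra generators of $\mathcal{B}^d_{m,n}$: the adjacent transpositions among the first $m$ strands and among the last $n$ strands, which commute with $U^{\otimes m}\otimes\overline U^{\otimes n}$ by the same argument as in ordinary Schur--Weyl duality; and the single contraction element joining strand $m$ to strand $m+1$, whose image is (up to normalization) $\proj{\Omega}$ on those two tensor factors with $\ket{\Omega}=\sum_i\ket{i}\ket{i}$, and this commutes because $(U\otimes\overline U)\ket{\Omega}=\ket{\Omega}$. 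Hence $\mathcal{A}$ and $\mathcal{Q}$ are mutually commuting $*$-closed (so semisimple) subalgebras of $\mathrm{End}(W)$. The technical heart is then the \emph{double-centralizer} property $\mathcal{A}=\mathrm{End}_{SU(d)}(W)$. Since $SU(d)$ is Zariski-dense in $SL(d,\mathbb{C})$ and the central torus of $\GL(d)$ acts on $W$ by the scalar $z\mapsto z^{m-n}$, we have $\mathrm{End}_{SU(d)}(W)=\mathrm{End}_{\GL(d)}(W)$. Now $\mathrm{End}_{\GL(d)}(W)\cong (W\otimes W^{*})^{\GL(d)}$, and since $W^{*}=(\mathbbm{C}^{d*})^{\otimes m}\otimes(\mathbbm{C}^{d})^{\otimes n}$ as a $\GL(d)$-module, after reordering tensor factors this is $\big((\mathbbm{C}^d)^{\otimes(m+n)}\otimes(\mathbbm{C}^{d*})^{\otimes(m+n)}\big)^{\GL(d)}$. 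The FFT for $\GL(d)$ says this invariant space is spanned by all complete pairings of the $m+n$ upper slots with the $m+n$ lower slots via Kronecker deltas; tracking which original tensor factor (an ``input'' $\mathbbm{C}^d$ or an ``output'' $\overline{\mathbbm{C}^d}$ of $W$) each slot came from shows that these pairings are exactly the walled Brauer diagrams, i.e.\ the partially transposed permutations. Hence $\mathcal{A}$ exhausts the commutant, and by the double-centralizer theorem also $\mathcal{Q}=\mathrm{End}_{\mathcal{A}}(W)$.

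With the double-centralizer property established, Artin--Wedderburn applied to the semisimple algebra $\mathcal{A}$ gives the bimodule decomposition $W\cong\bigoplus_{\gamma\in\Lambda}\mathcal{P}_\gamma^d\otimes\mathcal{Q}_\gamma^d$, where $\Lambda$ indexes the simple $\mathcal{A}$-modules appearing in $W$, the $\mathcal{P}_\gamma^d$ are those simple modules, and the $\mathcal{Q}_\gamma^d$ are the corresponding irreducible $SU(d)$-modules, which are pairwise non-isomorphic precisely because $\mathcal{Q}=\mathrm{End}_{\mathcal{A}}(W)$. It then remains to identify $\Lambda$ with the set of staircases $\gamma\vdash_d(m,n)$. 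The irreducible rational $\GL(d)$-representations occurring in $W$ have highest weights that are weakly decreasing integer sequences $\gamma_1\geq\cdots\geq\gamma_d$ with $\gamma_1+\cdots+\gamma_d=m-n$; Koike's description~\cite{Koike_89} of the decomposition of mixed tensor space --- including the modification rules that collapse the naive labelling by pairs of partitions $(\alpha,\beta)$ when $m+n>d$ --- shows that the ones that actually occur are exactly those obeying the inequalities in Eqs.~\eqref{equ:staircase_def_1}--\eqref{equ:staircase_def_2}. Restricting the torus character then pins down the normalization in \eqref{equ:staircase_def_1}, and the constraints $\sum_{\gamma_i>0}\gamma_i\leq m$, $\sum_{\gamma_i<0}|\gamma_i|\leq n$ encode that such a weight is reachable using only $m$ copies of $V$ and $n$ of $V^{*}$.

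I expect the main obstacle to be the second step --- showing that $\mathcal{A}^d_{m,n}$ genuinely \emph{exhausts} the $SU(d)$-commutant rather than merely embedding into it --- together with the bookkeeping in the final step: for $m+n>d$ the pair-of-partitions labelling overcounts, and folding it onto the staircases via Koike's modification rules, so as to extract exactly the inequalities \eqref{equ:staircase_def_2}, is the delicate combinatorial point. Both parts are standard, however, and can be cited from~\cite{Koike_89,Nguyen_2023,Grinko_2023}; only the commuting-actions check and the assembly via Artin--Wedderburn need be carried out explicitly.
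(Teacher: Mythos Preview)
The paper does not give a proof of Theorem~\ref{thm:mixed_SW_duality}; it is stated as a known result with citations to~\cite{Koike_89,Nguyen_2023,Grinko_2023} and used as a black box thereafter. So there is no ``paper's own proof'' to compare against.

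Your sketch is the standard argument and is sound as an outline: check commutativity on generators, pass from $SU(d)$ to $\GL(d)$ via Zariski density and the scalar action of the center, identify $\mathrm{End}_{\GL(d)}(W)$ with $(V^{\otimes(m+n)}\otimes V^{*\,\otimes(m+n)})^{\GL(d)}$, invoke the FFT to get the $(m+n)!$ complete pairings, and observe that under the bookkeeping these are exactly the partially transposed permutations, so $\mathcal{A}^d_{m,n}$ is the full commutant. Artin--Wedderburn then gives the multiplicity-free bimodule decomposition, and Koike's combinatorics identify the label set with the staircases of Eqs.~\eqref{equ:staircase_def_1}--\eqref{equ:staircase_def_2}. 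The two points you flag as delicate are indeed the right ones. For the first, note that the paper deliberately works with the \emph{image} $\mathcal{A}^d_{m,n}$ rather than the abstract walled Brauer algebra $\mathcal{B}^d_{m,n}$, precisely so that semisimplicity is automatic (it is a $*$-closed subalgebra of $\mathrm{End}(W)$) even when $d<m+n$; this sidesteps any issue with $\mathcal{B}^d_{m,n}$ failing to be semisimple. For the second, the passage from pairs of partitions to staircases via Koike's modification rules is exactly what~\cite{Koike_89} supplies, so citing it is appropriate here.
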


\begin{remark}
    Setting $n=0$ recovers Schur-Weyl duality, so from now on we will only discuss mixed Schur-Weyl duality, and all the following results will hold as well for Schur-Weyl duality.
\end{remark}

\subsection{Bases for $\mathcal{P}_{\gamma}^{d}$ and $\mathcal{Q}_{\gamma}^{d}$}
There is a canonical choice of basis in both spaces  $\mathcal{P}_{\gamma}^{d}$ and $\mathcal{Q}_{\gamma}^{d}$, which is built upon the idea of finding a sequence of subgroups or subalgebras and label each base vector according to the irreps of these subgroups. 
From now on, we will talk about \textit{the} (mixed) Schur transform $U_{\rm mSch}^{m,n}$ which has this specific choice of basis. For a more detailed discussion, we refer to \cite{Grinko_2023}.

\subsubsection{Young-Yamanouchi basis for $\mathcal{P}_{\gamma}^{d}$}
\label{subsubs:Young-Yamanouchi_Basis}
For a staircase $\gamma\vdash_{d}(m,n)$ and $i\in[d]$ we define
\begin{align}
    \gamma+e_{i}:=(\gamma_{1},...,\gamma_{i}+1,...,\gamma_{d}) \, ,\\
    \gamma-e_{i}:=(\gamma_{1},...,\gamma_{i}-1,...,\gamma_{d}) \, .
\end{align}
With this in hand, we can define for $\gamma\vdash_{d}(m,n)$ the sets
\begin{align}
    \{\gamma+\ydiagram{1}\}&:=\{\nu\vdash_{d}(m+1,n)\,:\, \nu=\gamma+e_{i} \text{ for some } i\in[d]\} \, , \\
    \{\gamma-\ydiagram{1}\}&:=\{\nu\vdash_{d}(m,n+1)\,:\, \nu=\gamma-e_{i} \text{ for some } i\in[d]\} \, .
\end{align}

\begin{example}
    Let us take as example the staircase $\gamma\vdash_{3}(3,2)$ given by $\gamma=(2,1,-2)$. Then we have
    \begin{align}
        \gamma-\ydiagram{1}=\{(1,1,-2), (2,0,-2), (2,1,-3)\} \, .
    \end{align}
    In terms of Young diagrams, we can understand the positive entries of $\gamma$ as boxes to the right, and the negative entries of $\gamma$ as boxes to the left. This gives us
    \begin{align}
        \ytableausetup{boxsize=1em}
        \ydiagram{2+2,2+1,2} \quad - \quad \ydiagram{1}=\left\{ \, \ydiagram{2+1,2+1,2} \quad , \quad \ydiagram{2+2,2+0,2} \quad , \quad \ydiagram{3+2,3+1,3} \, \right\} \, .
        \ytableausetup{boxsize=0.5em}
    \end{align}
    We further remark that for all $\nu\in\gamma-\ydiagram{1}$, we have $\nu\vdash_{3}(3,3)$.
\end{example}

A canonical basis for $\mathcal{P}_{\gamma}^{d}$ is the Young-Yamanouchi basis, which is labelled by sequences $p_{\gamma}=(\gamma^{1},...,\gamma^{m+n})$ of staircases with the properties
\begin{align}
    \label{equ:Young-Yamanouchi_basis}
    \begin{split}
    &\gamma^{1}=(1,0,...,0) \quad \text{ if $m\geq 1$}\, ,\\
    &\gamma^{1}=(0,...,0,-1) \quad \text{ if $m=0$}\, ,\\
    &\gamma^{m+n}=\gamma \, ,\\
    &\gamma^{i+1}\in\{\gamma^{i}+\ydiagram{1}\} \quad \text{ if } i<m \, , \\
    &\gamma^{i+1}\in\{\gamma^{i}-\ydiagram{1}\} \quad \text{ if } i\geq m \, .
    \end{split}
\end{align}

\begin{example}
    For $\gamma\vdash_{3}(3,3)$ given by $\gamma=(1,1,-2)$, one such sequence $p_{\gamma}$ is given by
    \begin{align}
        p_{\gamma}=\left((1,0,0), (1,1,0), (2,1,0), (2,1,-1), (2,1,-2), (1,1,-2)\right)\, .
    \end{align}
    In terms of Young diagrams, this is given by
    \begin{align}
        \ytableausetup{boxsize=1em}
        p_{\gamma}=\left( \,  \ydiagram{1,0,0} \quad , \quad \ydiagram{1,1,0} \quad , \quad \ydiagram{2,1,0} \quad , \quad \ydiagram{1+2,1+1,1} \quad , \quad \ydiagram{2+2,2+1,2} \quad , \quad \ydiagram{2+1,2+1,2} \, \right) \, .
        \ytableausetup{boxsize=0.5em}
    \end{align}
    Notice that after adding all $m=3$ boxes, we subtract $n=3$ boxes. Subtracted boxes can either remove an existing box from a row, or create a ``negative'' box in an empty row.
\end{example}

To understand the properties of this basis, we look at the tower of subalgebras
\begin{align}
    \label{equ:walled_Brauer_algebra_subgroups}
    \mathcal{A}_{1,0}^{d}\subseteq \mathcal{A}_{2,0}^{d} \subseteq ... \mathcal{A}_{m,0}^{d} \subseteq \mathcal{A}_{m,1}^{d} \subseteq ... \subseteq \mathcal{A}_{m,n-1}^{d} \subseteq \mathcal{A}_{m,n}^{d} \,.
\end{align}
These inclusions are similar to viewing $\mathcal{S}_{n-k}\subseteq\mathcal{S}_{n}$ by fixing the last $k$ entries. Under the action of $\mathcal{A}_{m-1,n}^{d}\subseteq\mathcal{A}_{m,n}^{d}$, the space $\mathcal{P}_{\gamma}^{d}$ is not an irrep anymore and instead we have:
\begin{align}
    \label{equ:walled_Brauer_algebra_irreps_subgroup_m}
    \mathcal{P}_{\gamma}^{d}\stackrel{\mathcal{A}_{m-1,n}^{d}}{\cong}\bigoplus_{\substack{\nu\vdash_{d}(m-1,n): \\ \gamma\in\{\nu-\ydiagram{1}\}}}\mathcal{P}_{\nu}^{d} \,.
\end{align}
A basis vector labelled by $p_{\gamma}=(\gamma^{1},...,\nu,\gamma)$ is then part of the $\mathcal{A}_{m-1,n}^{d}$ irrep $\mathcal{P}_{\nu}^{d}$  given in Eq. \eqref{equ:walled_Brauer_algebra_irreps_subgroup_m}. 
We can then continue the decomposition according to \eqref{equ:walled_Brauer_algebra_subgroups} to obtain $p_{\gamma}=(\gamma^{1},...,\gamma^{m+n})$ as described above.

\subsubsection{Gelfand-Tsetlin basis for $\mathcal{Q}_{\gamma}^{d}$}
\label{subsubs:GT_Basis}
Let now $\gamma\vdash_{d}(m,n)$ and $\nu\vdash_{d-1}(m',n')$, with $m'\leq m$ and $n'\leq n$. Then we write $\nu\preceq\gamma$ if
\begin{align}
    \gamma_{1}\geq\nu_{1}\geq\gamma_{2}\geq ... \geq \gamma_{d-1}\geq\nu_{d-1}\geq\gamma_{d} \,.
\end{align}
Similar to the Young-Yamanouchi basis, we want to describe the base vectors of $\mathcal{Q}_{\gamma}^{d}$ by investigating how they transform under the tower of subgroups
\begin{align}
    SU(1)\subseteq SU(2) \subseteq ... \subseteq SU(d-1) \subseteq SU(d) \, .
\end{align}
The so-called \textit{Gelfand-Tsetlin basis} has basis vectors which are also labelled by sequences $q_{\gamma}=(\gamma^{1},...,\gamma^{d})$ with
\begin{align}
    &\gamma^{i}\vdash_{i}(m,n) \quad \text{ for $1\leq i \leq d$ } \, \\
    &\gamma^{i}\preceq\gamma^{i+1} \quad \text{ for $1\leq i \leq d-1$ } \, \\
    &\gamma^{d}=\gamma \, .
\end{align}
Again, we can view $\mathcal{Q}_{\gamma}^{d}$ as a (reducible) representation of $SU(d-1)$ which decomposes into the irreps
\begin{align}
    \label{equ:Gelfand-Tsetlin_decomposition}
    \mathcal{Q}_{\gamma}^{d}\stackrel{SU(d-1)}{\cong}\bigoplus_{\nu\preceq\gamma}\mathcal{Q}_{\nu}^{d-1} \,.
\end{align}
The vector labelled by $q_{\gamma}$ lies in $\mathcal{Q}_{\gamma^{d-1}}^{d-1}$ and is in general part of $\mathcal{Q}_{\gamma^{i}}^{i}$ when looking at the irrep decomposition of $SU(i)$ for $i\leq d$.

\subsection{The mixed Schur transform}
In this section, we describe how the mixed Schur transform is implemented in a quantum computer using \emph{Clebsch-Gordan transforms} and \emph{dual Clebsch-Gordan transforms}, which we also introduce in the following.
In Appendix \ref{app:building_the_CG_transform}, we additionally describe how to implement the (dual) Clebsch-Gordan transform, as the exact algorithm is important for our reduced rank result.

\subsubsection{The Clebsch-Gordan transform}

The algorithms of \cite{BCH05,HarrowTh05,KiSt18} for the Schur transform and of \cite{Nguyen_2023,Grinko_2023} for the mixed Schur transform proceed in an iterative manner by constructing the (mixed) Schur transform of $k+1$ qudits from the (mixed) Schur transform of $k$ qudits.
The building blocks used to do this are the {Clebsch-Gordan transforms}.

An important fact in the representation theory of $SU(d)$ is (see e.g. \cite{GW98})
\begin{align}
\begin{split}\label{eq:cg_dcg}
    \mathcal{Q}_{\nu}^d\otimes \mathbbm{C}^d \stackrel{SU(d)}{\cong} \bigoplus_{\gamma\in\{\nu+\ydiagram{1}\}}\mathcal{Q}_{\gamma}^d \, , \\
    \mathcal{Q}_{\nu}^d\otimes \overline{\mathbbm{C}^d} \stackrel{SU(d)}{\cong} \bigoplus_{\mu\in\{\nu-\ydiagram{1}\}}\mathcal{Q}_{\mu}^d \, .
\end{split}
\end{align}
We call the corresponding unitaries $U_{\rm CG}^{\nu}$ and $U_{\rm dCG}^{\nu}$ the \textit{(dual) Clebsch-Gordan transform}.\footnote{This is a special case of the decompositions mentioned in Lemma \ref{lem:restriction_tensor_product_irreps}.} 
Instead of looking at individual irreps, we can also inspect the Schur-Weyl decomposition of the whole space $\Cdmn$ and define the \textit{super Clebsch-Gordan transform} as
\begin{align}
     U^{m,n}_{\rm CG}:=\bigoplus_{\gamma\vdash_{d} (m,n)}\mathbbm{I}_{\mathcal{P}_{\gamma}^{d}}\otimes U^{\gamma}_{\rm CG} \, , \\
     U^{m,n}_{\rm dCG}:=\bigoplus_{\gamma\vdash_{d} (m,n)}\mathbbm{I}_{\mathcal{P}_{\gamma}^{d}}\otimes U^{\gamma}_{\rm dCG} \, .
\end{align}
It can be shown (up to a reordering of the spaces $\Qgd$ on the right hand side) \cite{BCH05,HarrowTh05,Nguyen_2023,Grinko_2023} that
\begin{align}
    U^{m,n}_{\rm CG}:\left(\bigoplus_{\gamma\vdash_{d} (m,n)}\mathcal{P}_{\gamma}^{d}\otimes\mathcal{Q}_{\gamma}^d\right)\otimes\mathbbm{C}^{d}&\rightarrow\bigoplus_{\gamma\vdash_{d} (m+1,n)}\mathcal{P}_{\gamma}^{d}\otimes\mathcal{Q}_{\gamma}^d\,, \\
    U^{m,n}_{\rm dCG}:\left(\bigoplus_{\gamma\vdash_{d} (m,n)}\mathcal{P}_{\gamma}^{d}\otimes\mathcal{Q}_{\gamma}^d\right)\otimes\overline{\mathbbm{C}^{d}}&\rightarrow\bigoplus_{\gamma\vdash_{d} (m,n+1)}\mathcal{P}_{\gamma}^{d}\otimes\mathcal{Q}_{\gamma}^d\,.
\end{align}

Lastly, for $\nu\in\{\gamma+\ydiagram{1}\}$ and $\mu\in\{\gamma-\ydiagram{1}\}$ we also define the following projection onto the irreps $\mathcal{Q}_{\mu}^{d}$ for $\mu\in\{\gamma+\ydiagram{1}\}$ and $\mathcal{Q}_{\nu}^{d}$ for $\nu\in\{\gamma-\ydiagram{1}\}$ as
\begin{align}\label{eq:mid_projections}
    \Pi^\gamma_\mu: \bigoplus_{\mu'\in\{\gamma+\ydiagram{1}\}} \mathcal{Q}^d_{\mu'}\rightarrow \mathcal{Q}^d_\mu\,,&\qquad \Pi^\gamma_\mu := \bigoplus_{\mu'\in\{\gamma+\ydiagram{1}\}} \delta_{\mu',\mu}\,\mathbbm{I}_{\mathcal{Q}^d_{\mu'}} \, , \\
    \Pi_{\nu}^\gamma: \bigoplus_{\nu'\in\{\gamma-\ydiagram{1}\}} \mathcal{Q}_{\nu'}^{d}\rightarrow \mathcal{Q}_{\nu}^{d}\,,&\qquad \Pi^{\gamma}_{\nu} := \bigoplus_{\nu'\in\{\gamma-\ydiagram{1}\}} \delta_{\nu',\nu}\,\mathbbm{I}_{\mathcal{Q}_{\nu'}^{d}} \, ,\label{eq:mid_projections2}
\end{align}
where $\delta_{\alpha,\beta}$ is the Kronecker symbol. Here we also view the co-domains of the maps $\Pi^\gamma_\mu$ and $\Pi_{\nu}^\gamma$ as the irreps $\mathcal{Q}^d_\mu$ respectively $\mathcal{Q}^d_\nu$ instead of the full space. We emphasize that these maps form a projection valued measure (PVM) on the collection of spaces respectively indexed by staircases in $\{\gamma+\ydiagram{1}\}$ and $\{\gamma-\ydiagram{1}\}$, which we will use in our algorithm for unitary Schur sampling.

\subsubsection{Relation between Schur transform and Clebsch-Gordan transform}
In the following we assume $m\geq1$, and the case for $m=0$ follows from a similar argument. We first remind ourselves that for the staircases $(1,0,...,0)\vdash_{d}(1,0)$ and $(0,...,0,-1)\vdash_{d}(0,1)$ correspond to
\begin{align}
    \mathcal{Q}_{(1,0,...,0)}^{d}\cong \mathbbm{C}^d \quad , \quad \mathcal{Q}_{(0,...,0,-1)}^{d}\cong \overline{\mathbbm{C}^d} \,.
\end{align}
This leads us to the following chain of equivalences, which we obtain by repeatedly using the super Clebsch-Gordan transform and the super dual Clebsch-Gordan transform
\begin{align}
    \label{equ:decomposition_Cxdxn}
    \begin{split}
    \Cdmn
    &\stackrel{SU(d)}{\cong} \mathcal{Q}_{(1,0,...,0)}^{d}\otimes\left(\mathbbm{C}^d\right)^{\otimes m-1}\otimes \left(\overline{\mathbbm{C}^d}\right)^{\otimes n}\\
    &\stackrel{SU(d)}{\cong} \bigoplus_{\gamma^{2}\in\{(1,0,...,0)+\ydiagram{1}\}}\mathcal{Q}_{\gamma^{2}}^{d}\otimes\left(\mathbbm{C}^d\right)^{\otimes m-2}\otimes \left(\overline{\mathbbm{C}^d}\right)^{\otimes n}\\
    &\quad\stackrel{\vdots}{}\\
    &\stackrel{SU(d)}{\cong} \bigoplus_{\gamma^{2}\in\{(1,0,...,0)+\ydiagram{1}\}}...\bigoplus_{\gamma^{m}\in\{\gamma^{m-1}+\ydiagram{1}\}}\mathcal{Q}_{\gamma^{m}}^{d}\otimes \left(\overline{\mathbbm{C}^d}\right)^{\otimes n}\\
    &\quad\stackrel{\vdots}{}\\
    &\stackrel{SU(d)}{\cong} \bigoplus_{\gamma^{2}\in\{(1,0,...,0)+\ydiagram{1}\}}...\bigoplus_{\gamma^{m}\in\{\gamma^{m-1}+\ydiagram{1}\}}\bigoplus_{\gamma^{m+1}\in\{\gamma^{m}-\ydiagram{1}\}}...\bigoplus_{\gamma^{m+n}\in\{\gamma^{m+n-1}-\ydiagram{1}\}}\mathcal{Q}_{\gamma^{m+n}}^{d} \, .
    \end{split}
\end{align}
We see immediately that the sequences $((1,0,...,0),\gamma^{2},...,\gamma^{m+n})$ appearing in each summand of the right hand side of Eq.~\eqref{equ:decomposition_Cxdxn} conform to the same rules as the basis vectors of the Young-Yamanouchi basis described in Eq. \eqref{equ:Young-Yamanouchi_basis}. For each such sequence $p_{\gamma}=(\gamma^{1},...,\gamma^{m+n})$ in Eq. \eqref{equ:decomposition_Cxdxn}, we can then label the corresponding irrep of $SU(d)$ as $\Qgdp$. Since the partition of $\Cdmn$ into irreps is unique, this expression has to agree with the mixed Schur-Weyl duality in Theorem \ref{thm:mixed_SW_duality}. This implies that $\Qgdp$ corresponds to fixing the vector $\ket{p_{\gamma}}\in\mathcal{P}_{\gamma}^{d}$ for a given staircase $\gamma$ and as such, Eq. \eqref{equ:decomposition_Cxdxn} can be rewritten as
\begin{align}
    \label{equ:alternative_mixed_Schur_Weyl_duality}
    \Cdmn \stackrel{\mathcal{A}_{m,n}^{d}\times SU(d)}{\cong} \bigoplus_{\gamma\vdash_{d} (m,n)}\bigoplus_{\pg\in\Pgd}\Qgdp\,.
\end{align}

\subsubsection{An iterative construction of the Schur transform}
For a staircase $\gamma\vdash_{d}(m,n)$ and a path $p_{\gamma}=(\gamma^{1},...,\gamma^{m+n}=\gamma)$, we define
\begin{align}
    \Pi^{m,n}_{\gamma,\pg} &:= \bigoplus_{\nu\vdash_{d}(m,n)}\bigoplus_{p_\nu\in\mathcal{P}^d_\nu}\delta_{\nu,\gamma}\delta_{p_\nu,\pg}\mathbbm{I}_{\mathcal{Q}_{\nu,p_\nu}^d}\label{eq:gpg_projections}\, . \\
    \Pi^{m,n}_\gamma &:= \sum_{\pg\in\Pgd} \Pi^{m,n}_{\gamma,\pg} = \bigoplus_{\nu\vdash_{d}(m,n)}\delta_{\nu,\gamma}\mathbbm{I}_{\mathcal{P}_\nu^d}\otimes\mathbbm{I}_{\mathcal{Q}_{\nu}^d}
    \label{eq:g_projections}
\end{align}
to be the projections from the direct sums in Eqs. \eqref{equ:msw_duality_def} and \eqref{equ:alternative_mixed_Schur_Weyl_duality} respectively.

We now consider projecting onto a given irrep $\mathcal{Q}_{\gamma}^{d}$ labelled by $p_{\gamma}$ in two different ways. Let us take two staircases $\nu\vdash_{d}(m,0)$ and $\gamma\in\{\nu+\ydiagram{1}\}$, together with two paths
\begin{align}
    &p_{\nu}=(\nu^{1},...,\nu^{m}=\nu)\,,\\
    &p_{\gamma}=(\nu^{1},...,\nu^{m},\gamma) \, .
\end{align}
For one, we can perform the decomposition of Eq. \eqref{equ:decomposition_Cxdxn} and then take the projection $\Pi_{\gamma,p_{\gamma}}^{m+1,0}$. On the other hand, we can also perform the decomposition in Eq. \eqref{equ:decomposition_Cxdxn} until the penultimate step and take the projection $\Pi_{\nu,p_{\nu}}^{m,0}$. Then we apply the Clebsch-Gordan transform $U_{\rm CG}^{m,0}$ and finally project onto the $\pg$-th irrep $\gamma$ via $\Pi^\nu_\gamma$ as per Eq. \eqref{eq:mid_projections}. Both ways will lead to the irrep $\mathcal{Q}_{\gamma}^{d}$ labelled by $p_{\gamma}$. Therefore we get
\begin{align}
    \Pi_{\gamma}^{m,0} U^{m,0}_{\rm CG}(\Pi_{\nu,p_{\nu}}^{m,0}U_{\rm Sch}^{m,0}\otimes\mathbbm{I}_{d})=\Pi_{\gamma,p_{\gamma}}^{m,0}U_{\rm Sch}^{m+1,0} \, .
\end{align}
We can argue in a similar way about $\nu\vdash_{d}(m,n)$ and $\gamma\in\{\nu-\ydiagram{1}\}$ to obtain the following proposition.

\begin{lemma}
    \label{lem:CG_Schur_proj_commute}
    Let $\nu\vdash_{d}(m,0)$ and $\gamma\vdash_{d}(m+1,0)$ with $\gamma\in\{\nu+\ydiagram{1}\}$. 
    Further let $p_{\nu}=(\nu^{1},...,\nu)$ be a path of staircases that go to $\nu$, and $p_{\gamma}=(\nu^{1},...,\nu,\gamma)$ be the path of staircases obtained from $p_{\gamma}$ by adding a box to $\nu$. Then we have
    \begin{align}\label{eq:super_CG}
        \Pi_{\gamma}^{m,0} U^{m,0}_{\rm CG}(\Pi_{\nu,p_{\nu}}^{m,0}U_{\rm Sch}^{m,0}\otimes\mathbbm{I}_{d})=\Pi_{\gamma,p_{\gamma}}^{m,0}U_{\rm Sch}^{m+1,0} \, .
    \end{align}
    Let now $\nu\vdash_{d}(m,n)$ and $\gamma\vdash_{d}(m,n+1)$ with $\gamma\in\{\nu-\ydiagram{1}\}$. Further let $p_{\nu}$ and $p_{\gamma}$ be as above. Then we have
    \begin{align}\label{eq:super_dCG}
        \Pi_{\gamma}^{m,n} U^{m,n}_{\rm dCG}(\Pi_{\nu,p_{\nu}}^{m,n}U_{\rm Sch}^{m,n}\otimes\mathbbm{I}_{d})=\Pi_{\gamma,p_{\gamma}}^{m,n}U_{\rm Sch}^{m,n+1} \, .
    \end{align}
\end{lemma}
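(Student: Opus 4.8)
The plan is to argue that both sides of \eqref{eq:super_CG} coincide with one and the same partial isometry $(\mathbb{C}^d)^{\otimes(m+1)}\to\mathcal{Q}^d_{\gamma,p_\gamma}$ --- the one singled out by the iterated Clebsch--Gordan construction of the Schur transform --- so that the two apparent differences between the sides cancel: whether we insert the projection onto $\mathcal{Q}^d_{\nu,p_\nu}$ in the middle, and whether at the end we project onto the full $\gamma$-isotypic subspace or directly onto $\mathcal{Q}^d_{\gamma,p_\gamma}$. (For readability I write the two projections occurring in the lemma as $\Pi^{m+1,0}_{\gamma}$ and $\Pi^{m+1,0}_{\gamma,p_\gamma}$, as they act on the decomposition of $m+1$ qudits.) I would carry out only \eqref{eq:super_CG}; the statement \eqref{eq:super_dCG} follows from the same argument after replacing $\{\nu+\ydiagram{1}\}$, $U^{m,0}_{\rm CG}$ and the first line of \eqref{eq:cg_dcg} by $\{\nu-\ydiagram{1}\}$, $U^{m,n}_{\rm dCG}$ and the second line of \eqref{eq:cg_dcg}, and (when $m=0$) the starting staircase $(1,0,\dots,0)$ by $(0,\dots,0,-1)$.

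First I would record the \emph{iterated construction}: reading the chain of $SU(d)$-isomorphisms \eqref{equ:decomposition_Cxdxn} from its penultimate to its last line and invoking uniqueness of the isotypic decomposition \eqref{equ:alternative_mixed_Schur_Weyl_duality} gives $U^{m+1,0}_{\rm Sch}=U^{m,0}_{\rm CG}(U^{m,0}_{\rm Sch}\otimes\mathbbm{I}_{d})$, which is precisely the recursion underlying the algorithms of \cite{BCH05,HarrowTh05,Nguyen_2023,Grinko_2023} and may be taken as the definition of $U^{m+1,0}_{\rm Sch}$. Substituting this into the right-hand side of \eqref{eq:super_CG} reduces the claim to an identity about inserting the middle projection $\Pi^{m,0}_{\nu,p_\nu}$ before, and relaxing the final projection after, $U^{m,0}_{\rm CG}$.

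Second I would use the block-diagonal form $U^{m,0}_{\rm CG}=\bigoplus_{\lambda\vdash_{d}(m,0)}\mathbbm{I}_{\mathcal{P}^d_\lambda}\otimes U^\lambda_{\rm CG}$ together with the Young--Yamanouchi branching rule \eqref{equ:walled_Brauer_algebra_irreps_subgroup_m} of Section~\ref{subsubs:Young-Yamanouchi_Basis}: under the inclusion $\mathcal{A}^d_{m,0}\subseteq\mathcal{A}^d_{m+1,0}$ the summand $\mathcal{P}^d_\nu$ embeds into $\mathcal{P}^d_\gamma$ for every $\gamma\in\{\nu+\ydiagram{1}\}$, and this embedding sends the basis vector $\ket{p_\nu}$ exactly to the basis vector $\ket{p_\gamma}$ with $p_\gamma=(p_\nu,\gamma)$. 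Hence $U^{m,0}_{\rm CG}$ maps $\mathcal{Q}^d_{\nu,p_\nu}\otimes\mathbb{C}^d$ onto $\bigoplus_{\gamma'\in\{\nu+\ydiagram{1}\}}\mathcal{Q}^d_{\gamma',p_{\gamma'}}$, where $p_{\gamma'}=(p_\nu,\gamma')$, acting as the Clebsch--Gordan transform $U^\nu_{\rm CG}$ on $\mathcal{Q}^d_\nu\otimes\mathbb{C}^d$. From this I would read off the two sub-identities: (i) $\Pi^{m+1,0}_{\gamma,p_\gamma}U^{m,0}_{\rm CG}=\Pi^{m+1,0}_{\gamma,p_\gamma}U^{m,0}_{\rm CG}(\Pi^{m,0}_{\nu,p_\nu}\otimes\mathbbm{I}_{d})$, because $\ket{p_\gamma}$ lies in the $\mathcal{P}^d_\nu$-branch of $\mathcal{P}^d_\gamma$, so $\Pi^{m+1,0}_{\gamma,p_\gamma}U^{m,0}_{\rm CG}$ annihilates every isotypic component of its domain other than $\mathcal{Q}^d_{\nu,p_\nu}\otimes\mathbb{C}^d$; and (ii) $\Pi^{m+1,0}_{\gamma,p_\gamma}U^{m,0}_{\rm CG}(\Pi^{m,0}_{\nu,p_\nu}\otimes\mathbbm{I}_{d})=\Pi^{m+1,0}_{\gamma}U^{m,0}_{\rm CG}(\Pi^{m,0}_{\nu,p_\nu}\otimes\mathbbm{I}_{d})$, because on $\bigoplus_{\gamma'\in\{\nu+\ydiagram{1}\}}\mathcal{Q}^d_{\gamma',p_{\gamma'}}$ the $\gamma$-isotypic projection agrees with the projection onto $\mathcal{Q}^d_{\gamma,p_\gamma}$. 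Chaining the iterated construction, then (i), then (ii) gives $\Pi^{m+1,0}_{\gamma,p_\gamma}U^{m+1,0}_{\rm Sch}=\Pi^{m+1,0}_{\gamma}U^{m,0}_{\rm CG}(\Pi^{m,0}_{\nu,p_\nu}U^{m,0}_{\rm Sch}\otimes\mathbbm{I}_{d})$, which is \eqref{eq:super_CG}.

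I expect the main obstacle to be the representation-theoretic bookkeeping behind these two ingredients rather than any computation. One must be sure that the canonical Schur transform \emph{is} the iterated (dual) Clebsch--Gordan transform and not merely isomorphic to it --- the ``reordering of the $\mathcal{Q}^d_\gamma$'' freedom in the super Clebsch--Gordan transform has to be eliminated by the fixed choice of Gelfand--Tsetlin and Young--Yamanouchi bases --- and that the Young--Yamanouchi branching really does identify $\ket{p_\nu}$ inside $\mathcal{P}^d_\gamma$ with $\ket{p_\gamma}$. Once these identifications are pinned down, sub-identities (i) and (ii) are immediate from the block-diagonal form of $U^{m,0}_{\rm CG}$; the case $m=0$ is handled by the same steps with initial staircase $(0,\dots,0,-1)$, and the dual statement \eqref{eq:super_dCG} needs no new idea.
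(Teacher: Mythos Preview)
Your proposal is correct and follows essentially the same approach as the paper. The paper's argument, given in the paragraph immediately preceding the lemma, is simply that both sides of \eqref{eq:super_CG} are two ways of reaching the same copy $\mathcal{Q}^d_{\gamma,p_\gamma}$ inside the iterated Clebsch--Gordan chain \eqref{equ:decomposition_Cxdxn}; your sub-identities (i) and (ii) unpack exactly this observation using the block-diagonal form of $U^{m,0}_{\rm CG}$ and the Young--Yamanouchi branching, and your use of the recursion $U^{m+1,0}_{\rm Sch}=U^{m,0}_{\rm CG}(U^{m,0}_{\rm Sch}\otimes\mathbbm{I}_d)$ is not circular since you extract it directly from \eqref{equ:decomposition_Cxdxn} rather than from Corollary~\ref{cor:iterative_CG_Schur}.
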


We remark that summing over all projections for all possible paths and irreps just gives the identity. Therefore we directly obtain this iterative decomposition into (dual) Clebsch-Gordan transforms for the Schur transform given by

\begin{corollary}
    \label{cor:iterative_CG_Schur}
    We have
    \begin{align}
        U_{\rm Sch}^{m+1,0}=U_{\rm CG}^{m,0}(U_{\rm Sch}^{m,0}\otimes\mathbbm{I}_{d}) \, , \\
    U_{\rm Sch}^{m,n+1}=U_{\rm dCG}^{m,n}(U_{\rm Sch}^{m,n}\otimes\mathbbm{I}_{d}) \, .
    \end{align}
\end{corollary}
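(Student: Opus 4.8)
The plan is to obtain Corollary~\ref{cor:iterative_CG_Schur} by summing the identities of Lemma~\ref{lem:CG_Schur_proj_commute} over all staircases and all Young--Yamanouchi paths, exploiting that the path projections $\Pi^{m+1,0}_{\gamma,p_\gamma}$ (respectively $\Pi^{m,n+1}_{\gamma,p_\gamma}$) form a resolution of the identity on the relevant Schur space, as already noted just above the corollary. I will treat the first identity, i.e.\ the case $n=0$; the second follows verbatim upon replacing Eq.~\eqref{eq:super_CG} by Eq.~\eqref{eq:super_dCG}, the set $\{\nu+\ydiagram{1}\}$ by $\{\nu-\ydiagram{1}\}$, and the first line of Eq.~\eqref{eq:cg_dcg} by the second.

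First I would fix $\nu\vdash_d(m,0)$ and a path $p_\nu=(\nu^1,\dots,\nu^m=\nu)$, and for $\gamma\in\{\nu+\ydiagram{1}\}$ set $p_\gamma:=(\nu^1,\dots,\nu^m,\gamma)$. Summing Eq.~\eqref{eq:super_CG} over $\gamma\in\{\nu+\ydiagram{1}\}$, I would argue that the left side collapses: the super Clebsch--Gordan transform $U^{m,0}_{\rm CG}=\bigoplus_{\nu'}\mathbbm{I}_{\mathcal{P}_{\nu'}^d}\otimes U_{\rm CG}^{\nu'}$, together with the branching rule in the first line of Eq.~\eqref{eq:cg_dcg}, sends the image of $\Pi^{m,0}_{\nu,p_\nu}\otimes\mathbbm{I}_d$ --- namely $\mathcal{Q}^d_{\nu,p_\nu}\otimes\mathbbm{C}^d$ --- into $\bigoplus_{\gamma\in\{\nu+\ydiagram{1}\}}\mathcal{Q}^d_\gamma$, so that $\sum_{\gamma\in\{\nu+\ydiagram{1}\}}\Pi^{m+1,0}_\gamma$ acts as the identity on that image. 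This leaves $U^{m,0}_{\rm CG}(\Pi^{m,0}_{\nu,p_\nu}U^{m,0}_{\rm Sch}\otimes\mathbbm{I}_d)$ on the left and $\sum_{\gamma\in\{\nu+\ydiagram{1}\}}\Pi^{m+1,0}_{\gamma,p_\gamma}U^{m+1,0}_{\rm Sch}$ on the right.

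Next I would sum this intermediate identity over all pairs $(\nu,p_\nu)$. On the left, pulling $U^{m,0}_{\rm CG}$ and $U^{m,0}_{\rm Sch}\otimes\mathbbm{I}_d$ outside the sum leaves $\sum_{\nu,p_\nu}\Pi^{m,0}_{\nu,p_\nu}$, which equals $\mathbbm{I}$ by Eqs.~\eqref{eq:g_projections} and \eqref{equ:alternative_mixed_Schur_Weyl_duality}; hence the left side becomes $U^{m,0}_{\rm CG}(U^{m,0}_{\rm Sch}\otimes\mathbbm{I}_d)$. On the right, a length-$(m+1)$ Young--Yamanouchi path $p_\gamma$ is exactly the datum of a length-$m$ path $p_\nu$ to some $\nu\vdash_d(m,0)$ together with a choice of $\gamma\in\{\nu+\ydiagram{1}\}$, so $\sum_{\nu,p_\nu}\sum_{\gamma\in\{\nu+\ydiagram{1}\}}\Pi^{m+1,0}_{\gamma,p_\gamma}=\sum_{\gamma,p_\gamma}\Pi^{m+1,0}_{\gamma,p_\gamma}=\mathbbm{I}$ by the direct-sum decomposition in Eq.~\eqref{equ:alternative_mixed_Schur_Weyl_duality}; hence the right side becomes $U^{m+1,0}_{\rm Sch}$, and equating the two gives $U^{m+1,0}_{\rm Sch}=U^{m,0}_{\rm CG}(U^{m,0}_{\rm Sch}\otimes\mathbbm{I}_d)$.

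The only step that is more than bookkeeping is the support claim used in the second paragraph: that $U^{m,0}_{\rm CG}$ maps the sector labelled by $\nu$ only into sectors labelled by $\gamma\in\{\nu+\ydiagram{1}\}$, so that the projector sum acts trivially on the relevant image. This is immediate from the definition of the super Clebsch--Gordan transform together with the fact that each $U_{\rm CG}^{\nu}$ realizes exactly the decomposition of Eq.~\eqref{eq:cg_dcg}; one must only check that the copy of $\mathcal{Q}_\gamma^d$ it produces is the one labelled by $p_\gamma=(p_\nu,\gamma)$, which is precisely what Lemma~\ref{lem:CG_Schur_proj_commute} records. I would also remark that the resulting equality is one of operators, not merely up to a reordering of the summands $\mathcal{Q}^d_\gamma$, because both sides implement the same fixed Young--Yamanouchi/Gelfand--Tsetlin-labelled decomposition --- a consistency already built into Lemma~\ref{lem:CG_Schur_proj_commute}.
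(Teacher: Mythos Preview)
Your proposal is correct and takes essentially the same approach as the paper: the paper's entire argument is the single sentence ``summing over all projections for all possible paths and irreps just gives the identity,'' and you have simply spelled out that summation in two stages, together with the support observation needed to drop $\sum_{\gamma\in\{\nu+\ydiagram{1}\}}\Pi^{m+1,0}_\gamma$ on the left.
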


The insights from Lemma \ref{lem:CG_Schur_proj_commute} and Corollary \ref{cor:iterative_CG_Schur} form the foundation of previous works about efficient quantum (dual) Schur transforms \cite{Bacon_2006,Nguyen_2023,Grinko_2023}, as well as our own contribution, as they give an iterative way to construct the Schur transform on a quantum computer.

\section{Unitary (mixed) Schur sampling}
\label{sec:steaming_algorithm}

\subsection{The streaming algorithm}
\label{sec:alg_proof}

We present Algorithm \ref{alg:streamalgo} for unitary mixed Schur sampling. It takes as input $m+n$ qudits in state $\rho$ in a streaming manner and outputs a path $p_\gamma= (\gamma^1,...,\gamma^{m+n}=\gamma)$ and post measurement state $\rho^{m+n}\propto \Pi^{m,n}_{\gamma,p_{\gamma}}\rho \Pi^{m,n}_{\gamma,p_{\gamma}}$ for $\Pi^{m,n}_{\gamma,p_{\gamma}}$ as defined in Eq.~\eqref{eq:gpg_projections}.

\begin{figure}[h]
\begin{algorithm}[H]\label{alg:streamalgo}
\SetAlgoLined
\caption{A streaming algorithm for unitary mixed Schur sampling}
\SetKwInOut{Input}{Input}
\SetKwInOut{Output}{Output}
\SetKwInOut{Init}{Initialization}
\SetKwRepeat{Repeat}{repeat}{until}
\Input{A stream of $m+n$ qudits in state $\rho$}
\eIf{$m>0$}{
$\gamma^1=(1,0,...,0)$\,
}{
$\gamma^1=(0,...,0,-1)$\,
}
$\rho^1 = \rho$\,

\For{$k=1$ to $m+n-1$}{
Receive the $(k+1)$-th qudit to obtain the state $\tilde{\rho}^{k}$ on $\mathcal{Q}_{\lambda^{k}}^{d}\otimes \mathbb{C}^d$\;

\eIf{$k<m$}{
Apply $\tilde{\rho}^{k} \mapsto \sigma^{k+1}:= U_{\rm CG}^{\gamma^k}\rho^k(U_{\rm CG}^{\gamma^k})^\dagger$\,

Apply measurement $\sigma^{k+1}\mapsto \sum_{\nu\in\{\gamma^k+\ydiagram{1}\}}\ketbra{\nu}\otimes {\Pi}^{\gamma^k}_\nu\sigma^{k+1}{\Pi}^{\gamma^k}_\nu$ \,
}{
Apply $\tilde{\rho}^{k} \mapsto \sigma^{k+1}:= U_{\rm dCG}^{\gamma^k}\rho^k(U_{\rm dCG}^{\gamma^k})^\dagger$\,

Apply measurement $\sigma^{k+1}\mapsto \sum_{\nu\in\{\gamma^k-\ydiagram{1}\}}\ketbra{\nu}\otimes {\Pi}^{\gamma^k}_\nu\sigma^{k+1}{\Pi}^{\gamma^k}_\nu$ \,
}
Set $\gamma^{k+1} \leftarrow\nu$ upon observation of $\nu$ in measurement \,
Set $\tilde{\rho}^{k+1}:= \frac{{\Pi}^{\gamma^k}_{\gamma^{k+1}}\sigma^{k+1}{\Pi}^{\gamma^k}_{\gamma^{k+1}}}{\Tr[\sigma^{k+1}{\Pi}^{\gamma^k}_{\gamma^{k+1}}]}$\,
}
\Return $\rho^{m+n}, (\gamma^1,...,\gamma^{m+n})$
\end{algorithm}
\end{figure}

This streaming algorithm corresponds to a sequential application of the (dual) Clebsch-Gordan transform followed by a projective measurement onto the irreps of $SU(d)$, as given in  Eq.~\eqref{eq:cg_dcg}.
For each $k\in\{1,...,m+n\}$, the staircase $\gamma^{k}$ represents the irrep observed in the ensemble of the first $k$ qudits. We remark a few of things about Algorithm \ref{alg:streamalgo}:

\begin{itemize}
    \item Iteration $k\in\{1,..., m+n-1\}$ acts only on the $k+1$ qudits received thus far. However, for ease of exposition, we omit the $\mathbbm{I}_d^{\otimes(m+n-(k+1))}$ acting on the other qudits. This is the case for the unitaries, as well as the measurements we use.
    \item The measurements performed during the step $k$ correspond to the projections $\{\Pi^{\gamma^k}_{\nu}\}_{\nu\in\{\gamma^k+\ydiagram{1}\}}$ if $k<m$ and $\{\Pi^{\gamma^k}_{\nu}\}_{\nu\in\{\gamma^k-\ydiagram{1}\}}$ if $m\leq k<n$, from Eqs.~(\ref{eq:mid_projections},\ref{eq:mid_projections2}) respectively. 
    From the definitions of $U_{\rm CG}^{\gamma^k}$ and $U_{\rm dCG}^{\gamma^k}$ in Eq.~\eqref{eq:cg_dcg} it follows that they are complete projective measurements and thus well defined.
    \item Through a simple recursive argument using Born's rule and re-nomarlization, one can show that the output state of the algorithm is 
    \begin{align}\label{eq:pm_state}
        \rho^{m+n} = \frac{\left(\prod_{k=1}^{m+n-1}{\Pi}^{\gamma^k}_{\gamma^{k+1}}\cdot {U^{\gamma^k}}\right)\rho\left(\prod_{k=1}^{m+n-1}{\Pi}^{\gamma^k}_{\gamma^{k+1}}\cdot {U^{\gamma^k}}\right)^\dagger}{\Tr[\left(\prod_{k=1}^{m+n-1}{\Pi}^{\gamma^k}_{\gamma^{k+1}}\cdot {U^{\gamma^k}}\right)\rho\left(\prod_{k=1}^{m+n-1}{\Pi}^{\gamma^k}_{\gamma^{k+1}}\cdot {U^{\gamma^k}}\right)^\dagger]}\,.
    \end{align}
\end{itemize}

The following theorem tells us that Algorithm \ref{alg:streamalgo} correctly performs unitary mixed Schur sampling as defined in Section \ref{sec:results}.

\begin{lemma}[Correctness of Algorithm \ref{alg:streamalgo}]\label{lem:wmSsCorrectness}
    Let $\rho$ be an $m+n$ qudit state and let $p_{\Gamma}=(\Gamma^{1},...,\Gamma^{m+n}=\Gamma)$ and $\rho^{m+n}$ denote the output of Algorithm \ref{alg:streamalgo}.
    Then for all $\gamma \vdash_{d} (m,n)$:
    \begin{align}\label{eq:correct_prob}
        \Pr[\Gamma = \gamma] = \Tr[\big(U^{m,n}_{\rm Sch}\rho(U^{m,n}_{\rm Sch})^\dagger \big)\Pi^{m,n}_\gamma]\,,
    \end{align}
    where the probability is over the randomness in the Algorithm \ref{alg:streamalgo} and $\Pi^{m,n}_\gamma$ are the projections in Eq.~\eqref{eq:g_projections}.
    Further, it holds that
    \begin{align}\label{eq:correct_pm}
        \rho^{m+n} = \frac{\Tr_{\mathcal{P}_{\Gamma}^{d}}[(\Pi^{m,n}_{\Gamma, p_{\Gamma}}U_{\rm Sch}^{m,n})\rho(\Pi^{m,n}_{\Gamma, p_{\Gamma}}U_{\rm Sch}^{m,n})^\dagger]}{\Tr[\big(U_{\rm Sch}^{m,n}\rho(U_{\rm Sch}^{m,n})^\dagger\big)\Pi^{m,n}_{\Gamma, p_{\Gamma}}]}\,,
    \end{align}
    where $\Pi^{m,n}_{\Gamma, p_{\Gamma}}$ is defined in Eq.~\eqref{eq:gpg_projections}.
\end{lemma}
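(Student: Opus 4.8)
The plan is to prove the two claims by induction on the number of qudits processed, tracking how the intermediate states $\tilde\rho^{k}$ and $\rho^{k}$ of Algorithm~\ref{alg:streamalgo} relate to the partially-built Schur transform $U_{\rm Sch}^{k',k''}$ (where $k'+k''=k$, with $k''=0$ while $k<m$ and $k''=k-m$ afterwards). The starting point is the closed form of the output state in Eq.~\eqref{eq:pm_state}, together with the observation --- which follows from Born's rule --- that
\begin{align}
    \Pr[p_{\Gamma}=p_{\gamma}] = \Tr\!\left[\left(\prod_{k=1}^{m+n-1}{\Pi}^{\gamma^k}_{\gamma^{k+1}}\cdot {U^{\gamma^k}}\right)\rho\left(\prod_{k=1}^{m+n-1}{\Pi}^{\gamma^k}_{\gamma^{k+1}}\cdot {U^{\gamma^k}}\right)^{\dagger}\right].
\end{align}
So both statements reduce to identifying the operator $O_{p_{\gamma}}:=\prod_{k}{\Pi}^{\gamma^k}_{\gamma^{k+1}}U^{\gamma^k}$ with $\Pi^{m,n}_{\gamma,p_{\gamma}}U_{\rm Sch}^{m,n}$ up to an isometry onto $\mathcal{P}_{\gamma}^{d}$, since once we have $O_{p_{\gamma}}^{\dagger}O_{p_{\gamma}} = (U_{\rm Sch}^{m,n})^{\dagger}\Pi^{m,n}_{\gamma,p_{\gamma}}U_{\rm Sch}^{m,n}$ the probability formula and the post-measurement formula both drop out after summing over the paths $p_{\gamma}$ with fixed endpoint $\gamma$ (using $\sum_{p_{\gamma}}\Pi^{m,n}_{\gamma,p_{\gamma}}=\Pi^{m,n}_{\gamma}$ from Eq.~\eqref{eq:g_projections}, and the fact that the $p_{\gamma}$ index an orthonormal basis of $\mathcal{P}_{\gamma}^{d}$, so that summing the numerators of Eq.~\eqref{eq:pm_state} over $p_{\gamma}$ with the correct weights reproduces the partial trace over $\mathcal{P}_{\Gamma}^{d}$).

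The inductive heart of the argument is exactly Lemma~\ref{lem:CG_Schur_proj_commute}: it states $\Pi^{\gamma^k}_{\gamma^{k+1}}U_{\rm CG}^{k,0}(\Pi^{k,0}_{\nu,p_{\nu}}U_{\rm Sch}^{k,0}\otimes\mathbbm{I}_{d}) = \Pi^{k+1,0}_{\gamma,p_{\gamma}}U_{\rm Sch}^{k+1,0}$ for the ``adding a box'' steps, and the dual version for the ``removing a box'' steps. Here I need to be careful about one notational mismatch: in Algorithm~\ref{alg:streamalgo} the Clebsch–Gordan transform is written as the single-irrep unitary $U_{\rm CG}^{\gamma^k}$ acting on $\mathcal{Q}_{\gamma^k}^{d}\otimes\mathbb{C}^{d}$, whereas Lemma~\ref{lem:CG_Schur_proj_commute} uses the super transform $U_{\rm CG}^{k,0}$ acting on $\bigoplus_{\nu}\mathcal{P}_{\nu}^{d}\otimes\mathcal{Q}_{\nu}^{d}\otimes\mathbb{C}^{d}$; but because after step $k$ the algorithm has already conditioned on the path, its state lives in the single summand $\mathcal{P}_{\gamma^k}^{d}\otimes\mathcal{Q}_{\gamma^k}^{d}$ (in fact in $\ket{p_{\gamma^k}}\otimes\mathcal{Q}_{\gamma^k,p_{\gamma^k}}^{d}$), on which $U_{\rm CG}^{k,0}$ restricts --- up to the trivial action on the $\mathcal{P}$-register --- precisely to $U_{\rm CG}^{\gamma^k}$. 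I would make this identification explicit once at the start and then freely move between the two. With that in place, the induction hypothesis at step $k$ is that $O^{(k)}_{p_{\gamma}}:=\prod_{j=1}^{k-1}\Pi^{\gamma^j}_{\gamma^{j+1}}U^{\gamma^j}$ equals $\Pi^{k',k''}_{\gamma^k,p_{\gamma^k}}U_{\rm Sch}^{k',k''}$ on the first $k$ qudits (tensored with identity on the rest), and the inductive step is one application of the relevant half of Lemma~\ref{lem:CG_Schur_proj_commute}; the base case $k=1$ is the identification $\mathcal{Q}_{(1,0,\dots,0)}^{d}\cong\mathbb{C}^{d}$ (or $\mathcal{Q}_{(0,\dots,0,-1)}^{d}\cong\overline{\mathbb{C}^{d}}$ when $m=0$) together with $U_{\rm Sch}^{1,0}=\mathbbm{I}$.

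Finally I would assemble the pieces: the induction gives $O_{p_{\gamma}}=\Pi^{m,n}_{\gamma,p_{\gamma}}U_{\rm Sch}^{m,n}$, hence $\Pr[p_{\Gamma}=p_{\gamma}] = \Tr[(U_{\rm Sch}^{m,n}\rho(U_{\rm Sch}^{m,n})^{\dagger})\Pi^{m,n}_{\gamma,p_{\gamma}}]$ (using cyclicity of the trace and $(\Pi^{m,n}_{\gamma,p_{\gamma}})^{2}=\Pi^{m,n}_{\gamma,p_{\gamma}}$), and summing over all $p_{\gamma}$ with $\gamma^{m+n}=\gamma$ yields Eq.~\eqref{eq:correct_prob}. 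For the post-measurement state, plug $O_{p_{\gamma}}=\Pi^{m,n}_{\gamma,p_{\gamma}}U_{\rm Sch}^{m,n}$ into Eq.~\eqref{eq:pm_state}; the numerator becomes $\Pi^{m,n}_{\gamma,p_{\gamma}}U_{\rm Sch}^{m,n}\rho(U_{\rm Sch}^{m,n})^{\dagger}\Pi^{m,n}_{\gamma,p_{\gamma}}$, which by Eq.~\eqref{eq:gpg_projections} is supported on $\ket{p_{\gamma}}\!\bra{p_{\gamma}}\otimes\mathcal{Q}_{\gamma}^{d}$, and identifying this with $\Tr_{\mathcal{P}_{\gamma}^{d}}[\,\cdot\,]$ against the fixed basis vector $\ket{p_{\gamma}}$ gives Eq.~\eqref{eq:correct_pm}. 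I expect the main obstacle to be purely bookkeeping rather than conceptual: keeping the single-irrep vs. super Clebsch–Gordan notation straight, correctly handling the switch from CG to dual CG at $k=m$ (including the $m=0$ boundary case), and being careful that the renormalization denominators from the sequential Born-rule measurements telescope to the single global denominator in Eq.~\eqref{eq:pm_state} --- the last of which is essentially the ``simple recursive argument'' already asserted before the lemma, so I would state it cleanly and cite that remark.
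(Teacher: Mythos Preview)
Your proposal is correct and follows essentially the same route as the paper: both reduce the claim to the operator identity $\prod_{k}\Pi^{\gamma^{k}}_{\gamma^{k+1}}U^{\gamma^{k}}=\Pi^{m,n}_{\gamma,p_{\gamma}}U_{\rm Sch}^{m,n}$, establish it by iterating Lemma~\ref{lem:CG_Schur_proj_commute} after identifying the single-irrep $U_{\rm CG}^{\gamma^{k}}$ with the restriction of the super transform $U_{\rm CG}^{k,0}$, and then sum over paths $p_{\gamma}$ via $\sum_{p_{\gamma}}\Pi^{m,n}_{\gamma,p_{\gamma}}=\Pi^{m,n}_{\gamma}$ to obtain Eq.~\eqref{eq:correct_prob} and plug into Eq.~\eqref{eq:pm_state} for Eq.~\eqref{eq:correct_pm}. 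Your anticipated bookkeeping issues (single vs.\ super CG, the CG/dCG switch at $k=m$, telescoping normalizations) are exactly the ones the paper handles, and in the same way.
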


\begin{proof}    
    We fix a path of staircases $p_{\gamma}=(\gamma^1,...,\gamma=\gamma^{m+n})$ as the output $p_{\Gamma}$ of Algorithm \ref{alg:streamalgo}, and we use $p_{\gamma^k}$ to denote the partial paths $(\gamma^1,...,\gamma^k)$.

    First, we remark that for $k\in\{1,...,m-1\}$, Step $k$ of the algorithm corresponds to applying the operator $\Pi_{\gamma^{k+1}}^{\gamma^{k}} U^{\gamma^{k}}_{\rm CG}$, and then renormalizing. Similarly, for $k\in\{m,...,n-1\}$, Step $k$ corresponds to applying the operator $\Pi_{\gamma^{k+1}}^{\gamma^{k}} U^{\gamma^{k}}_{\rm dCG}$, and then renormalizing. 
    To prove the theorem we essentially want to show that the sequential application of these operators corresponds to
    \begin{align}
        \Pi_{\gamma,p_{\gamma}}^{m,n}U_{\rm Sch}^{m,n} \, .
    \end{align}
    We first take $k\in\{0,...,m-1\}$, and recall the definition of the super Clebsch-Gordan transform $U_{\rm CG}^{k,0}$ in Eq. \eqref{eq:super_CG} as
    \begin{align}
        U^{k,0}_{\rm CG}:=\sum_{\nu\vdash_{d} (k,0)}\mathbbm{I}_{\mathcal{P}_{\nu}^{d}}\otimes U^{\nu}_{\rm CG} \, .
    \end{align}
    If we restrict the input to the specific irrep labeled by $\gamma^{k}\vdash_{d}(k,0)$ and $p_{\gamma^{k}}$, then the action of $U_{\rm CG}^{k,0}$ is equivalent to the single Clebsch-Gordan transform $U_{\rm CG}^{\gamma^{k}}$. 
    Hence, Eq. \eqref{eq:super_CG} from Lemma \ref{lem:CG_Schur_proj_commute} is equivalent to
    \begin{align}
        \Pi_{\gamma^{k+1}}^{\gamma^{k}} U^{\gamma^{k}}_{\rm CG}(\Pi_{\gamma^{k},p_{\gamma^{k}}}^{k,0}U_{\rm Sch}^{k,0}\otimes\mathbbm{I}_{d})=\Pi_{\gamma^{k+1},p_{\gamma^{k+1}}}^{k+1,0}U_{\rm Sch}^{k+1,0} \, .
    \end{align}
    Here, we view the co-domains of the projections not as the full vector space, but as the restriction to their image\footnote{This effectively makes $\Pi_{\gamma^{k},p_{\gamma^{k}}}^{k,0}U_{\rm Sch}^{k,0}$ a co-isometry.}. 
    In a similar way, we obtain for $k\in\{m,...,n-1\}$ and $k=m+j$ that
    \begin{align}
        \Pi_{\gamma^{k+1}}^{\gamma^{k}} U^{\gamma^{k}}_{\rm dCG}(\Pi_{\gamma^{k},p_{\gamma^{k}}}^{m,j}U_{\rm Sch}^{m,j}\otimes\mathbbm{I}_{d})=\Pi_{\gamma^{k+1},p_{\gamma^{k+1}}}^{m,j+1}U_{\rm Sch}^{m,j+1} \,.
    \end{align}
    Applying this reasoning recursively gives us
    \begin{align}
        \Pi^{m,n}_{\gamma,\pg}U_{\rm Sch}^{m,n}
        &= \Big(\Pi_{\gamma}^{\gamma^{n+m-1}} U^{\gamma^{n+m-1}}_{\rm dCG}\Big)\cdot\Big(\Pi_{\gamma^{n+m-1},p_{\gamma^{n+m-1}}}^{m,n-1}U_{\rm Sch}^{m,n-1}\Big)\\
        &= \Big(\Pi_{\gamma}^{\gamma^{n+m-1}} U^{\gamma^{n+m-1}}_{\rm dCG}\Big)\cdot\Big(\Pi_{\gamma^{n+m-1}}^{\gamma^{n+m-2}} U^{\gamma^{n+m-2}}_{\rm dCG}\Big)\cdot\Big(\Pi_{\gamma^{n+m-2},p_{\gamma^{n+m-2}}}^{m,n-2}U_{\rm Sch}^{m,n-2}\Big)\\
        &~\,\vdots\\
        &= \prod_{k=n}^{m+n-1} \Big(\Pi^{\gamma^{k}}_{\gamma^{k+1}}U^{\gamma^k}_{\rm dCG}\Big)\cdot\Big(\Pi^{m,0}_{\gamma^m,p_{\gamma^m}}U^{m,0}_{\rm Sch}\Big)\\
        &~\,\vdots\\
        &=\prod_{k=n}^{m+n-1} \Big(\Pi^{\gamma^{k}}_{\gamma^{k+1}}U^{\gamma^k}_{\rm dCG}\Big)\cdot \prod_{k=1}^{m-1} \Big(\Pi^{\gamma^{k}}_{\gamma^{k+1}}U^{\gamma^k}_{\rm CG}\Big),
    \end{align}
    where we have again omitted identities in the $(m+n-(k+1))$ qudits which remain untouched during step $k$ in the recursion. Equality in Eq.~\eqref{eq:correct_pm} follows directly from the observation
    \begin{align}
        \Tr[\big(U_{\rm Sch}^{m,n}\rho(U_{\rm Sch}^{m,n})^\dagger\big)\Pi^{m,n}_{\gamma, \pg}] = \Tr[\big(\Pi^{m,n}_{\gamma, \pg}U_{\rm Sch}^{m,n}\big)\rho\big(\Pi^{m,n}_{\gamma, \pg}U_{\rm Sch}^{m,n})^\dagger\big)].
    \end{align}
    Equality of Eq.~\eqref{eq:correct_prob} follows by considering the choice of paths $p_\gamma$ traversed by Algorithm \ref{alg:streamalgo}.
    By (mixed) Schur-Weyl duality there are exactly $\dim \Pgd$ such paths, each observed with probability 
    \begin{align}
        \Tr[\left(\prod_{k=1}^{m+n-1}{\Pi}^{\gamma^k}_{\gamma^{k+1}}\cdot {U^{\gamma^k}}\right)\rho\left(\prod_{k=1}^{m+n-1}{\Pi}^{\gamma^k}_{\gamma^{k+1}}\cdot {U^{\gamma^k}}\right)^\dagger] = \Tr[\big(U_{\rm Sch}^{m,n}\rho(U_{\rm Sch}^{m,n})^\dagger\big)\Pi^{m,n}_{\gamma, \pg}].
    \end{align}
    Using the linearity of the trace we therefore obtain
    \begin{align}
        \Pr[\Gamma = \gamma]
        &= \sum_{\pg\in\Pgd}\Tr[\left(\prod_{k=1}^{m+n-1}{\Pi}^{\gamma^k}_{\gamma^{k+1}}\cdot {U^{\gamma^k}}\right)\rho\left(\prod_{k=1}^{m+n-1}{\Pi}^{\gamma^k}_{\gamma^{k+1}}\cdot {U^{\gamma^k}}\right)^\dagger]\\
        &= \sum_{\pg\in\Pgd}\Tr[\big(U_{\rm Sch}^{m,n}\rho(U_{\rm Sch}^{m,n})^\dagger\big)\Pi^{m,n}_{\gamma, \pg}]\\
        &= \Tr\Bigg[\big(U_{\rm Sch}^{m,n}\rho(U_{\rm Sch}^{m,n})^\dagger\big)\sum_{\pg\in\Pgd}\Pi^{m,n}_{\gamma, \pg}\Bigg]\\
        &= \Tr[\big(U_{\rm Sch}^{m,n}\rho(U_{\rm Sch}^{m,n})^\dagger\big)\Pi^{m,n}_{\gamma}],
    \end{align}
    where in the first inequality we let the labels ${\rm CG, dCG}$ of the unitaries be implied by the step $k$.
    This concludes the proof.
\end{proof}

\subsection{Gate and memory complexity}
\label{sec:complexity}
We define the gate complexity as the required number of elementary ($CNOT$, $H$, $Z^{\nicefrac{1}{4}}$) gates, and the memory complexity the maximum number of storage qubits needed at any point during the computation. For the (dual) Clebsch-Gordan transforms used in Algorithm \ref{alg:streamalgo}, we will use Algorithm \ref{alg:CG_algo} described in Appendix \ref{app:subs_algorithm_for_CG_transform}. This algorithm was first developed in \cite{HarrowTh05} and then generalized to the mixed Clebsch-Gordan transform in \cite{Nguyen_2023,Grinko_2023}.

First we remark that the encoding of states in Algorithm \ref{alg:CG_algo} allow a direct measurement of the irreps, so we do not incur additional gate or memory complexity from the measurement. The total gate complexity $T$ is now given by
\begin{align}
    T=\sum_{k=1}^{m}T_{\rm CG}^{k}+\sum_{k=1}^{n}T_{\rm dCG}^{k} \, ,
\end{align}
where $T_{\rm CG}^{k}$ and $T_{\rm dCG}^{k}$ is the gate complexity of applying the (dual) Clebsch-Gordan transform in the $k-$th step of Algorithm \ref{alg:streamalgo}. From Section \ref{app:CG_time_mem} we find that $T_{\rm CG}^{k}=T_{\rm dCG}^{k}=O(d^4\log_{2}^{p}(d,1/\epsilon'))$. Since gate errors are additive, we can replace $\epsilon'$ by $\epsilon/T$ to obtain total accuracy of $\epsilon$. Now summing over $k$ and inserting for $T$ we get
\begin{align}
    T=O((m+n)d^4\log_{2}^{p}(d,m,n,1/\epsilon)) \, .
\end{align}
For the memory complexity $M$, we similarly have
\begin{align}
    M=\max\{\max\limits_{1\leq k \leq m}M_{\rm CG}^{k},\max\limits_{1\leq k \leq n}M_{\rm dCG}^{k}\} \, ,
\end{align}
where $M_{\rm CG}^{k}$ and $M_{\rm dCG}^{k}$ are the memory requirements for storing the state and performing the Clebsch-Gordan transform in the $k-$th step of Algorithm \ref{alg:streamalgo}. From Section \ref{app:CG_time_mem} we get that $M_{\rm CG}^{k}=M_{\rm dCG}^{k}=O(d^2\log_{2}^{p}(d,1/\epsilon'))$, and again we get
\begin{align}
    O(d^2\log_{2}^{p}(d,m,n,1/\epsilon)) \, .
\end{align}
Together, this gives the following lemma.

\begin{lemma}
    \label{lem:memory_gate_complexity_1}
    The weak mixed Schur sampling given in Algorithm \ref{alg:streamalgo} can be achieved up to accuracy $\epsilon$ using a stream of $m+n$ input qudits, $M$ qubits of memory and $T$ elementary ($CNOT$, $H$, $Z^{\nicefrac{1}{4}}$) gates, where
    \begin{align}
        &M=O(d^2\log_{2}^{p}(d,m,n,1/\epsilon)) \, ,\\
        &T=O((m+n)d^4\log_{2}^{p}(d,m,n,1/\epsilon)) \, ,
    \end{align}
    with $p\approx 1.44$.
\end{lemma}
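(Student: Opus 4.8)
The plan is to reduce the cost of Algorithm~\ref{alg:streamalgo} to that of a single (dual) Clebsch--Gordan transform and then sum over the $m+n$ steps. First I would note that, up to the $O(1)$ initialization, the algorithm is exactly a sequence of $m$ Clebsch--Gordan transforms $U_{\rm CG}^{\gamma^{k}}$ (for $k<m$) and $n$ dual Clebsch--Gordan transforms $U_{\rm dCG}^{\gamma^{k}}$ (for $k\geq m$), each immediately followed by the projective measurement $\{\Pi^{\gamma^{k}}_{\nu}\}$ of Eqs.~\eqref{eq:mid_projections}--\eqref{eq:mid_projections2}. Each such transform I would realize via Algorithm~\ref{alg:CG_algo}, i.e.\ the construction of \cite{HarrowTh05} extended to the mixed case in \cite{Nguyen_2023,Grinko_2023}; the key point is that the encoding used there stores the current irrep label in an explicit register, so the intervening measurements are mere read-offs that incur no additional gate or memory cost.

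I would then import from Appendix~\ref{app:CG_time_mem} the two per-step estimates that do the real work: implementing one (dual) Clebsch--Gordan transform to accuracy $\epsilon'$ costs $T_{\rm CG}^{k}=T_{\rm dCG}^{k}=O\big(d^{4}\log_{2}^{p}(d,1/\epsilon')\big)$ elementary gates and $M_{\rm CG}^{k}=M_{\rm dCG}^{k}=O\big(d^{2}\log_{2}^{p}(d,1/\epsilon')\big)$ qubits, the exponent $p\approx 1.44$ coming from synthesizing the $O(d^{4})$ ideal rotations with \cite{Kuperberg23}. The total gate count is then the sum $T=\sum_{k=1}^{m}T_{\rm CG}^{k}+\sum_{k=1}^{n}T_{\rm dCG}^{k}$; since operator-norm errors of the unitaries accumulate additively over the $m+n$ steps and the measurements add no error, it suffices to set $\epsilon'=\epsilon/T$, so that $\log_{2}(1/\epsilon')=O(\log_{2}(mnd/\epsilon))$, which keeps the overall accuracy at $\epsilon$ and yields $T=O\big((m+n)d^{4}\log_{2}^{p}(d,m,n,1/\epsilon)\big)$. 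For the memory bound I would use that the algorithm is streaming: after step $k$ only the register holding the current state on $\mathcal{Q}^{d}_{\gamma^{k+1}}$ is retained, so $M$ is the \emph{maximum} of the per-step requirements rather than their sum, $M=\max\{\max_{k}M_{\rm CG}^{k},\max_{k}M_{\rm dCG}^{k}\}$, and the same rescaling $\epsilon'\mapsto\epsilon/T$ gives $M=O\big(d^{2}\log_{2}^{p}(d,m,n,1/\epsilon)\big)$. Assembling the two displays is exactly the statement of the lemma.

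The only substantive ingredient lies in Appendix~\ref{app:CG_time_mem}, namely the two per-step estimates quoted above: that one (dual) Clebsch--Gordan transform admits an implementation with $O(d^{4})$ ideal rotations on $O(d^{2}\log_{2}^{p}(d,1/\epsilon'))$ qubits, and that each rotation can be synthesized to precision $\epsilon'$ with $O(\log_{2}^{p}(1/\epsilon'))$ elementary gates. Everything in the reduction above is then routine; the one thing to track carefully in the bookkeeping is that the forced substitution $\epsilon'=\epsilon/T$ is precisely what promotes the $d$-and-$\epsilon'$ polylogarithms of the per-step estimates to polylogarithms in all of $d,m,n$ and $1/\epsilon$ in the final bounds.
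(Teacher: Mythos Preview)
Your proposal is correct and follows essentially the same route as the paper: decompose Algorithm~\ref{alg:streamalgo} into $m+n$ (dual) Clebsch--Gordan steps, note that the measurements are free in the encoding of Algorithm~\ref{alg:CG_algo}, import the per-step bounds $T_{\rm (d)CG}^{k}=O(d^4\log_{2}^{p}(d,1/\epsilon'))$ and $M_{\rm (d)CG}^{k}=O(d^2\log_{2}^{p}(d,1/\epsilon'))$ from Appendix~\ref{app:CG_time_mem}, then sum the gate counts, take the maximum for memory, and rescale $\epsilon'=\epsilon/T$ to absorb the additive error. The paper's argument is structured identically, with the same formulas for $T$ and $M$ as sums and maxima and the same $\epsilon'\mapsto\epsilon/T$ substitution.
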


\subsection{Reduced dimension input}
In many applications of Schur-Weyl duality, such as quantum state tomography or spectrum estimation, special consideration is given to input states with reduced rank, e.g. a pure state with $r=1$. Therefore it is interesting to consider inputs on
\begin{align}
    K:=S^{\otimes m}\otimes\overline{S'}^{\otimes n} \, .
\end{align}
for subspaces $S,S'\subseteq\mathbbm{C}^{d}$ with $\dim S=r$ and $\dim S'=r'$.

We can deduce from Lemma \ref{lem:restriction_tensor_product_irreps} that all resulting staircases $\gamma$ have the property
\begin{align}
    \label{equ:memory_reduced_dimension_staircase_restriction}
    |\{i:\gamma_{i}>0\}|<r \quad , \quad |\{i:\gamma_{i}<0\}|<r' \, .
\end{align}
This means many entries in the encoding of states given in Appendix \ref{app:subs_algorithm_for_CG_transform} are just $\ket{0}$, and we can omit them. This immediately gives
\begin{align}
    M_{\rm CG}^{k}=M_{\rm dCG}^{k}=O((r+r')d\log_{2}^{p}(d,1/\epsilon')) \, .
\end{align}
Now we will investigate the gate complexity. We see that the (dual) Clebsch-Gordan transform links the staircases
\begin{align}
    (\gamma)' = \gamma\pm e_{j} \, ,
\end{align}
where $j\in[d]$. However, Eq. \ref{equ:memory_reduced_dimension_staircase_restriction} limits the possible $j$ to only $1\leq j \leq r$ or $d-r'+1\leq j \leq d$, since all other entries are $0$. Therefore, we can expect the relevant unitary matrices to actually be of size $\approx(r+r')\times(r+r')$ instead of $d\times d$. We can now apply this restriction to the argument given in \cite{Nguyen_2023} to obtain the gate complexity. This is described in more detail in Appendix \ref{app:reduced_dimension_input}, and we get
\begin{align}
    T_{\rm CG}^{k}=T_{\rm dCG}^{k}=O((r+r')^3d\log_{2}^{p}(d,1/\epsilon')) \, .
\end{align}
Setting again $\epsilon'=\epsilon/T$, taking the maximum for $M$ and summing over $k$ for $T$ then gives us

\begin{lemma}
    \label{lem:memory_gate_complexity_2}
    Let $S,S'\subseteq\mathbbm{C}^{d}$ be subspaces with $\dim S=r$ and $\dim S'=r'$. If the input of Algorithm \ref{alg:streamalgo} is restricted to the subspace
    \begin{align}
        S^{\otimes m}\otimes \overline{S'}^{\otimes n}\subseteq \Cdmn \, ,
    \end{align}
    then
    \begin{align}
        &M=O((r+r')d\log_{2}^{p}(d,m,n,1/\epsilon)) \, ,\\
        &T=O((m+n)(r+r')^3d\log_{2}^{p}(d,m,n,1/\epsilon)) \, ,
    \end{align}
    with $p\approx 1.44$.
\end{lemma}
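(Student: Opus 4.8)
The plan is to build Lemma~\ref{lem:memory_gate_complexity_2} out of two independent ingredients: first, a structural result restricting which staircases $\gamma$ can possibly appear when the input lives in $K = S^{\otimes m}\otimes\overline{S'}^{\otimes n}$, and second, a bookkeeping argument translating that restriction into the encoding sizes and gate counts of Algorithm~\ref{alg:CG_algo}. For the first ingredient I would invoke Lemma~\ref{lem:restriction_tensor_product_irreps} (the decomposition of $\mathcal{Q}^d_\nu \otimes \mathbb{C}^d$ and its dual, restricted to a subspace) to argue inductively along the iterative Clebsch--Gordan construction of Corollary~\ref{cor:iterative_CG_Schur}: starting from the trivial staircase, each step that adds a box to the ``positive'' part can only do so in one of the first $r$ rows, and each step that adds a ``negative'' box can only do so in one of the last $r'$ rows. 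This yields Eq.~\eqref{equ:memory_reduced_dimension_staircase_restriction}, i.e.\ $|\{i:\gamma_i>0\}|<r$ and $|\{i:\gamma_i<0\}|<r'$, for every $\gamma$ that occurs with nonzero probability in Algorithm~\ref{alg:streamalgo} on input supported in $K$.

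Next I would feed this into the memory count. The encoding of a state on $\mathcal{Q}^d_{\gamma^k}$ in Appendix~\ref{app:subs_algorithm_for_CG_transform} stores, for each row $i$, a register holding the relevant part of a Gelfand--Tsetlin pattern; by the restriction above, all but $r$ of the positive-row registers and all but $r'$ of the negative-row registers are forced to the $\ket{0}$ state throughout the computation, so they can be dropped. Each of the $O(r+r')$ surviving registers has size $O(d\log_2^p(d,1/\epsilon'))$ (the $d$ coming from the range of a single Gelfand--Tsetlin entry and the logarithmic factor from gate synthesis via \cite{Kuperberg23}), giving $M_{\rm CG}^{k}=M_{\rm dCG}^{k}=O((r+r')d\log_2^p(d,1/\epsilon'))$. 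Taking the maximum over $k$ and absorbing $\epsilon'=\epsilon/T$ into the polylog (which only changes the argument of $\log_2$ to include $m,n,1/\epsilon$) produces the stated bound on $M$.

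For the gate count I would note that the (dual) Clebsch--Gordan transform at step $k$ links $\gamma$ to $\gamma\pm e_j$, and Eq.~\eqref{equ:memory_reduced_dimension_staircase_restriction} forces $j$ to lie in $\{1,\dots,r\}$ (for the CG case) or $\{d-r'+1,\dots,d\}$ (for the dCG case), since changing any other row would violate the bound on the number of nonzero positive/negative entries. Hence the reduced Wigner coefficients and the branching structure that Algorithm~\ref{alg:CG_algo} implements effectively live on an $(r+r')\times(r+r')$ block rather than the full $d\times d$; re-running the gate-complexity analysis of \cite{Nguyen_2023} (reproduced in Appendix~\ref{app:reduced_dimension_input}) with $d$ replaced by $r+r'$ in the places where the size of the controlled rotations enters, while keeping one factor of $d$ from the register length, yields $T_{\rm CG}^{k}=T_{\rm dCG}^{k}=O((r+r')^3 d\log_2^p(d,1/\epsilon'))$. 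Summing the $m+n$ steps and setting $\epsilon'=\epsilon/T$ gives the claimed bound on $T$, with $p\approx 1.44$ inherited from \cite{Kuperberg23}.

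The main obstacle I anticipate is the gate-complexity step rather than the memory one: one must check carefully that the argument of \cite{Nguyen_2023} genuinely localizes to the $(r+r')$-dimensional block, i.e.\ that no subroutine secretly touches the zeroed-out rows (for instance in the uniform controls or in intermediate Gelfand--Tsetlin manipulations), and that the cube $(r+r')^3$ rather than some other power is the honest cost of the reduced primitive. The memory bound, by contrast, is essentially immediate once Eq.~\eqref{equ:memory_reduced_dimension_staircase_restriction} is established. I would therefore devote the bulk of Appendix~\ref{app:reduced_dimension_input} to a careful re-derivation of the per-step gate count under the rank restriction, and keep the main-text proof of the lemma short: cite the staircase restriction, state the two per-step bounds, take the max/sum, and rescale $\epsilon'$.
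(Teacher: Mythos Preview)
Your proposal is correct and matches the paper's argument: establish the staircase restriction~\eqref{equ:memory_reduced_dimension_staircase_restriction} from Lemma~\ref{lem:restriction_tensor_product_irreps} (equivalently Lemma~\ref{lem:mixed_SW_duality_restriction_partitions}), drop the forced-zero registers in the Gelfand--Tsetlin encoding for the memory bound, observe that the reduced Wigner matrices collapse to $O(r{+}r')\times O(r{+}r')$ blocks and rerun the per-step gate count of~\cite{Nguyen_2023} (exactly what Appendix~\ref{app:reduced_dimension_input} does), then take the max/sum over the $m{+}n$ steps and rescale~$\epsilon'$. The only minor slip is that the admissible $j$ in $\gamma\pm e_j$ is not cleanly split between the CG and dCG cases as you write---in the dCG steps $j$ can also lie in $\{1,\dots,r\}$ (removing a box from a positive row), and inside Algorithm~\ref{alg:CG_algo} both ranges appear regardless---but this does not affect the $O((r{+}r')^3 d)$ per-step conclusion.
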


\begin{remark}
    \label{rem:straightforward_implementation_memory}
    If the gate complexity is not important, we can instead use the estimate $\dim \mathcal{Q}_{\gamma}^{d}\leq (m+n)^{d(r+r')}$ and the naïve implementation of $U^{\gamma}_{\rm CG}$ and $U^{\gamma}_{\rm CG}$ through Solovay-Kitaev (see \cite{KiSt18, CM23}) to obtain 
    \begin{align}
        M=O((r+r')d\log_{2}(n+m)) \, .
    \end{align}
    However, the implementation of these unitaries then goes with time complexity $O(D^2\log_2(D/\epsilon))$, where $D$ is the Hilbert space dimension. This gives a gate complexity of
    \begin{align}
        T=O(d(r+r')(m+n)^{(2d(r+r')+1)}\log^{p}_2(m,n,1/\epsilon)) \, .
    \end{align}
\end{remark}

\section{Discussion}
Motivated by multiple existing applications of Schur-Weyl duality with permutation symmetry (e.g.  \cite{DoWr15,Haetal17,BlCrGo14,Buhrmanetal22}), we have introduced the task of \emph{unitary (mixed) Schur sampling} (Definitions \ref{def:unitary_Schur_sampling}, \ref{def:mixed_unitary_Schur_sampling}) and provided a streaming algorithm that performs this task (Algorithm \ref{alg:streamalgo}). We have analyzed the memory and time complexity of our algorithm (Theorem \ref{thm:memory_gate_complexity}), and we have shown an exponential advantage for the memory complexity in the constant $d$ regime when compared to implementing unitary (mixed) Schur sampling via existing algorithms (Tables \ref{tab:comparison_algorithms_unitary_Schur_sampling}, \ref{tab:comparison_similar_algorithms_unitary_Schur_sampling}, \ref{tab:comparison_algorithms_mixed_unitary_Schur_sampling}). In addition, we have obtained an improvement in time complexity by the factor $((r+r')/d)^{3}$ for the case where the input qudits are restricted to subspaces of dimension $r,r'\leq d$ (Theorem \ref{thm:memory_gate_complexity}). The same improvement also holds for the existing algorithms for the (mixed) Schur transform given in \cite{BCH05,Nguyen_2023,Grinko_2023} (Corollary \ref{cor:rank_improvement_Schur_transform}).

We expect our algorithm to be applicable in the context of any task with permutation (walled Brauer) symmetry that uses the (mixed) Schur transform. Examples of this are spectrum estimation and related problems \cite{KeyWer01,ChMi06,DoWr15}, state tomography \cite{DoWr16,Haetal17} (see also \cite{Cervero_2024}), as well as quantum majority vote \cite{Buhrmanetal22}, purification \cite{Cirac_99} and potential generalizations thereof. We further expect that the reduction of time complexity for input qudits on subspaces with dimension $r,r'\leq d$ will be useful for tasks such as spectrum estimation and state tomography, where reduced rank inputs are natural.

One central building block of our algorithm is the decomposition given in Eq. \eqref{equ:decomposition_Cxdxn}. This decomposition arises from repeated Clebsch-Gordan transforms and relies heavily on the representation theory of $SU(d)$. In \cite{Krovi19}, the author uses the representation theory of $\mathcal{S}_{m}$ to construct the Schur transform. It is possible that this algorithm can be approached in a similar manner to ours to obtain \textit{symmetric Schur sampling}, with potential improvements in gate and memory complexity.

Another potential avenue for future work is to investigate our algorithm in the regime $(m+n)\ll d$. We see from Table \ref{tab:comparison_algorithms_unitary_Schur_sampling} that in this regime, our algorithm potentially loses the advantage over the algorithm described in \cite{Krovi19}. In \cite{BaconThesis03}, an adjustment for this high-dimensional regime is discussed. An open question is whether such an adjustment could be made to our algorithm to preserve an advantage in memory and time complexity.

\section*{Acknowledgements}
We would like to thank Q.T. Nguyen for the helpful insights regarding the algorithm described in \cite{Nguyen_2023}.
ECM is supported by the National Research Foundation, Singapore and A*STAR under its CQT Bridging Grant. ET and LM are supported by ERC grant (QInteract, Grant No 101078107) and  VILLUM FONDEN (Grant No 10059 and 37532).

\bibliographystyle{IEEEtran}
\bibliography{references}

\begin{thebibliography}{10}
\providecommand{\url}[1]{#1}
\csname url@samestyle\endcsname
\providecommand{\newblock}{\relax}
\providecommand{\bibinfo}[2]{#2}
\providecommand{\BIBentrySTDinterwordspacing}{\spaceskip=0pt\relax}
\providecommand{\BIBentryALTinterwordstretchfactor}{4}
\providecommand{\BIBentryALTinterwordspacing}{\spaceskip=\fontdimen2\font plus
\BIBentryALTinterwordstretchfactor\fontdimen3\font minus \fontdimen4\font\relax}
\providecommand{\BIBforeignlanguage}[2]{{%
\expandafter\ifx\csname l@#1\endcsname\relax
\typeout{** WARNING: IEEEtran.bst: No hyphenation pattern has been}%
\typeout{** loaded for the language `#1'. Using the pattern for}%
\typeout{** the default language instead.}%
\else
\language=\csname l@#1\endcsname
\fi
#2}}
\providecommand{\BIBdecl}{\relax}
\BIBdecl

\bibitem{GW98}
\BIBentryALTinterwordspacing
R.~Goodman and N.~Wallach, \emph{Representations and Invariants of the Classical Groups}, ser. Encyclopedia of Mathematics and its Applications.\hskip 1em plus 0.5em minus 0.4em\relax Cambridge University Press, 2000. [Online]. Available: \url{https://books.google.com.sg/books?id=MYFepb2yq1wC}
\BIBentrySTDinterwordspacing

\bibitem{BCH05}
\BIBentryALTinterwordspacing
D.~Bacon, I.~L. Chuang, and A.~W. Harrow, ``The quantum schur transform: I. efficient qudit circuits,'' 2006. [Online]. Available: \url{https://arxiv.org/abs/quant-ph/0601001}
\BIBentrySTDinterwordspacing

\bibitem{HarrowTh05}
\BIBentryALTinterwordspacing
A.~W. Harrow, ``Applications of coherent classical communication and the schur transform to quantum information theory,'' \emph{arXiv:quant-ph/0512255}, 2005. [Online]. Available: \url{https://arxiv.org/abs/quant-ph/0512255}
\BIBentrySTDinterwordspacing

\bibitem{KiSt18}
\BIBentryALTinterwordspacing
W.~M. Kirby and F.~W. Strauch, ``A practical quantum algorithm for the schur transform,'' \emph{Quantum Information and Computation}, vol.~18, no. 9{\&}10, pp. 721--742, aug 2018. [Online]. Available: \url{https://doi.org/10.26421\%2Fqic18.9-10-1}
\BIBentrySTDinterwordspacing

\bibitem{Krovi19}
\BIBentryALTinterwordspacing
H.~Krovi, ``An efficient high dimensional quantum {S}chur transform,'' \emph{{Quantum}}, vol.~3, p. 122, Feb. 2019. [Online]. Available: \url{https://doi.org/10.22331/q-2019-02-14-122}
\BIBentrySTDinterwordspacing

\bibitem{WS23}
A.~Wills and S.~Strelchuk, ``Generalised coupling and an elementary algorithm for the quantum schur transform,'' 2023.

\bibitem{KeyWer01}
\BIBentryALTinterwordspacing
M.~Keyl and R.~F. Werner, ``Estimating the spectrum of a density operator,'' \emph{Phys. Rev. A}, vol.~64, p. 052311, Oct 2001. [Online]. Available: \url{https://link.aps.org/doi/10.1103/PhysRevA.64.052311}
\BIBentrySTDinterwordspacing

\bibitem{ChMi06}
\BIBentryALTinterwordspacing
M.~Christandl and G.~Mitchison, ``The spectra of quantum states and the kronecker coefficients of the symmetric group,'' \emph{Communications in Mathematical Physics}, vol. 261, no.~3, pp. 789--797, Feb 2006. [Online]. Available: \url{https://doi.org/10.1007/s00220-005-1435-1}
\BIBentrySTDinterwordspacing

\bibitem{DoWr15}
\BIBentryALTinterwordspacing
R.~O'Donnell and J.~Wright, ``Quantum spectrum testing,'' in \emph{Proceedings of the Forty-Seventh Annual ACM Symposium on Theory of Computing}, ser. STOC '15.\hskip 1em plus 0.5em minus 0.4em\relax New York, NY, USA: Association for Computing Machinery, 2015, p. 529–538. [Online]. Available: \url{https://doi.org/10.1145/2746539.2746582}
\BIBentrySTDinterwordspacing

\bibitem{DoWr16}
\BIBentryALTinterwordspacing
R.~O'Donnell and J.~Wright, ``Efficient quantum tomography,'' in \emph{Proceedings of the Forty-Eighth Annual ACM Symposium on Theory of Computing}, ser. STOC '16.\hskip 1em plus 0.5em minus 0.4em\relax New York, NY, USA: Association for Computing Machinery, 2016, p. 899–912. [Online]. Available: \url{https://doi.org/10.1145/2897518.2897544}
\BIBentrySTDinterwordspacing

\bibitem{Haetal17}
J.~Haah, A.~W. Harrow, Z.~Ji, X.~Wu, and N.~Yu, ``Sample-optimal tomography of quantum states,'' \emph{IEEE Transactions on Information Theory}, vol.~63, no.~9, pp. 5628--5641, 2017.

\bibitem{HaMa03}
M.~Hayashi and K.~Matsumoto, ``Simple construction of quantum universal variable-length source coding,'' in \emph{IEEE International Symposium on Information Theory, 2003. Proceedings.}, 2003, pp. 459--.

\bibitem{Hayashi16}
\BIBentryALTinterwordspacing
Y.~Yang, G.~Chiribella, and M.~Hayashi, ``Optimal compression for identically prepared qubit states,'' \emph{Phys. Rev. Lett.}, vol. 117, p. 090502, Aug 2016. [Online]. Available: \url{https://link.aps.org/doi/10.1103/PhysRevLett.117.090502}
\BIBentrySTDinterwordspacing

\bibitem{MaHa07}
\BIBentryALTinterwordspacing
K.~Matsumoto and M.~Hayashi, ``Universal distortion-free entanglement concentration,'' \emph{Phys. Rev. A}, vol.~75, p. 062338, Jun 2007. [Online]. Available: \url{https://link.aps.org/doi/10.1103/PhysRevA.75.062338}
\BIBentrySTDinterwordspacing

\bibitem{BlCrGo14}
R.~Blume-Kohout, S.~Croke, and D.~Gottesman, ``Streaming universal distortion-free entanglement concentration,'' \emph{IEEE Transactions on Information Theory}, vol.~60, no.~1, pp. 334--350, 2014.

\bibitem{ZaRa97}
\BIBentryALTinterwordspacing
P.~Zanardi and M.~Rasetti, ``Error avoiding quantum codes,'' \emph{Modern Physics Letters B}, vol.~11, no.~25, pp. 1085--1093, 1997. [Online]. Available: \url{https://doi.org/10.1142/S0217984997001304}
\BIBentrySTDinterwordspacing

\bibitem{KnLaVi00}
\BIBentryALTinterwordspacing
E.~Knill, R.~Laflamme, and L.~Viola, ``Theory of quantum error correction for general noise,'' \emph{Phys. Rev. Lett.}, vol.~84, pp. 2525--2528, Mar 2000. [Online]. Available: \url{https://link.aps.org/doi/10.1103/PhysRevLett.84.2525}
\BIBentrySTDinterwordspacing

\bibitem{KeBaLiWha01}
\BIBentryALTinterwordspacing
J.~Kempe, D.~Bacon, D.~A. Lidar, and K.~B. Whaley, ``Theory of decoherence-free fault-tolerant universal quantum computation,'' \emph{Phys. Rev. A}, vol.~63, p. 042307, Mar 2001. [Online]. Available: \url{https://link.aps.org/doi/10.1103/PhysRevA.63.042307}
\BIBentrySTDinterwordspacing

\bibitem{BaconThesis03}
\BIBentryALTinterwordspacing
D.~Bacon, ``Decoherence, control, and symmetry in quantum computers,'' \emph{arXiv:quant-ph/0305025}, 2003. [Online]. Available: \url{https://arxiv.org/abs/quant-ph/0305025}
\BIBentrySTDinterwordspacing

\bibitem{ChiYa15}
\BIBentryALTinterwordspacing
G.~Chiribella, Y.~Yang, and C.~C. Huang, ``Universal superreplication of unitary gates,'' \emph{Phys. Rev. Lett.}, vol. 114, p. 120504, Mar 2015. [Online]. Available: \url{https://link.aps.org/doi/10.1103/PhysRevLett.114.120504}
\BIBentrySTDinterwordspacing

\bibitem{ChiYa16}
\BIBentryALTinterwordspacing
G.~Chiribella and Y.~Yang, ``Quantum superreplication of states and gates,'' \emph{Frontiers of Physics}, vol.~11, no.~3, p. 110304, Mar 2016. [Online]. Available: \url{https://doi.org/10.1007/s11467-016-0556-7}
\BIBentrySTDinterwordspacing

\bibitem{KoeMit09}
\BIBentryALTinterwordspacing
R.~Koenig and G.~Mitchison, ``A most compendious and facile quantum de finetti theorem,'' \emph{Journal of Mathematical Physics}, vol.~50, no.~1, 2009. [Online]. Available: \url{https://www.osti.gov/biblio/21175884}
\BIBentrySTDinterwordspacing

\bibitem{Gross21}
\BIBentryALTinterwordspacing
D.~Gross, S.~Nezami, and M.~Walter, ``Schur--weyl duality for the clifford group with applications: Property testing, a robust hudson theorem, and de finetti representations,'' \emph{Communications in Mathematical Physics}, vol. 385, no.~3, pp. 1325--1393, Aug 2021. [Online]. Available: \url{https://doi.org/10.1007/s00220-021-04118-7}
\BIBentrySTDinterwordspacing

\bibitem{Ragoneetal22}
\BIBentryALTinterwordspacing
M.~Ragone, P.~Braccia, Q.~T. Nguyen, L.~Schatzki, P.~J. Coles, F.~Sauvage, M.~Larocca, and M.~Cerezo, ``Representation theory for geometric quantum machine learning,'' \emph{arXiv:2210.07980}, 2022. [Online]. Available: \url{https://arxiv.org/abs/2210.07980}
\BIBentrySTDinterwordspacing

\bibitem{Nguyenetal22}
\BIBentryALTinterwordspacing
Q.~T. Nguyen, L.~Schatzki, P.~Braccia, M.~Ragone, P.~J. Coles, F.~Sauvage, M.~Larocca, and M.~Cerezo, ``Theory for equivariant quantum neural networks,'' \emph{arXiv:2210.08566}, 2022. [Online]. Available: \url{https://arxiv.org/abs/2210.08566}
\BIBentrySTDinterwordspacing

\bibitem{Schatzkietal22}
\BIBentryALTinterwordspacing
L.~Schatzki, M.~Larocca, Q.~T. Nguyen, F.~Sauvage, and M.~Cerezo, ``Theoretical guarantees for permutation-equivariant quantum neural networks,'' \emph{arXiv:2210.09974}, 2022. [Online]. Available: \url{https://arxiv.org/abs/2210.09974}
\BIBentrySTDinterwordspacing

\bibitem{Buhrmanetal22}
\BIBentryALTinterwordspacing
H.~Buhrman, N.~Linden, L.~Mančinska, A.~Montanaro, and M.~Ozols, ``Quantum majority vote,'' \emph{arXiv:2211.11729}, 2022. [Online]. Available: \url{https://arxiv.org/abs/2211.11729}
\BIBentrySTDinterwordspacing

\bibitem{Nguyen_2023}
Q.~T. Nguyen, ``The mixed schur transform: efficient quantum circuit and applications,'' 2023.

\bibitem{Grinko_2023}
D.~Grinko, A.~Burchardt, and M.~Ozols, ``Gelfand-tsetlin basis for partially transposed permutations, with applications to quantum information,'' 2023.

\bibitem{GO23}
D.~Grinko and M.~Ozols, ``Linear programming with unitary-equivariant constraints,'' 2023.

\bibitem{IH08}
\BIBentryALTinterwordspacing
S.~Ishizaka and T.~Hiroshima, ``Asymptotic teleportation scheme as a universal programmable quantum processor,'' \emph{Phys. Rev. Lett.}, vol. 101, p. 240501, Dec 2008. [Online]. Available: \url{https://link.aps.org/doi/10.1103/PhysRevLett.101.240501}
\BIBentrySTDinterwordspacing

\bibitem{KM+21}
\BIBentryALTinterwordspacing
P.~Kopszak, M.~Mozrzymas, M.~Studzi{\'{n}}ski, and M.~Horodecki, ``Multiport based teleportation – transmission of a large amount of quantum information,'' \emph{{Quantum}}, vol.~5, p. 576, Nov. 2021. [Online]. Available: \url{https://doi.org/10.22331/q-2021-11-11-576}
\BIBentrySTDinterwordspacing

\bibitem{NPR21}
\BIBentryALTinterwordspacing
I.~Nechita, C.~Pellegrini, and D.~Rochette, ``A geometrical description of the universal {$1 \rightarrow 2$} asymmetric quantum cloning region,'' \emph{Quantum Information Processing}, vol.~20, no.~10, p. 333, Oct 2021. [Online]. Available: \url{https://doi.org/10.1007/s11128-021-03258-y}
\BIBentrySTDinterwordspacing

\bibitem{KL22}
\BIBentryALTinterwordspacing
L.~Kong and Z.-W. Liu, ``Near-optimal covariant quantum error-correcting codes from random unitaries with symmetries,'' \emph{PRX Quantum}, vol.~3, p. 020314, Apr 2022. [Online]. Available: \url{https://link.aps.org/doi/10.1103/PRXQuantum.3.020314}
\BIBentrySTDinterwordspacing

\bibitem{HKMV22}
\BIBentryALTinterwordspacing
F.~Huber, I.~Klep, V.~Magron, and J.~Vol{\v{c}}i{\v{c}}, ``Dimension-free entanglement detection in multipartite werner states,'' \emph{Communications in Mathematical Physics}, vol. 396, no.~3, pp. 1051--1070, Dec 2022. [Online]. Available: \url{https://doi.org/10.1007/s00220-022-04485-9}
\BIBentrySTDinterwordspacing

\bibitem{YSM23}
\BIBentryALTinterwordspacing
S.~Yoshida, A.~Soeda, and M.~Murao, ``Reversing unknown qubit-unitary operation, deterministically and exactly,'' \emph{Physical Review Letters}, vol. 131, no.~12, Sep. 2023. [Online]. Available: \url{http://dx.doi.org/10.1103/PhysRevLett.131.120602}
\BIBentrySTDinterwordspacing

\bibitem{Cirac_99}
\BIBentryALTinterwordspacing
J.~I. Cirac, A.~K. Ekert, and C.~Macchiavello, ``Optimal purification of single qubits,'' \emph{Physical Review Letters}, vol.~82, no.~21, p. 4344–4347, May 1999. [Online]. Available: \url{http://dx.doi.org/10.1103/PhysRevLett.82.4344}
\BIBentrySTDinterwordspacing

\bibitem{Werner_1998}
\BIBentryALTinterwordspacing
R.~F. Werner, ``Optimal cloning of pure states,'' \emph{Physical Review A}, vol.~58, no.~3, p. 1827–1832, Sep. 1998. [Online]. Available: \url{http://dx.doi.org/10.1103/PhysRevA.58.1827}
\BIBentrySTDinterwordspacing

\bibitem{Cervero_2024}
\BIBentryALTinterwordspacing
Y.~Hu, E.~Cervero-Martín, E.~Theil, L.~Mančinska, and M.~Tomamichel, ``Sample optimal and memory efficient quantum state tomography,'' 2024. [Online]. Available: \url{https://arxiv.org/abs/2410.16220}
\BIBentrySTDinterwordspacing

\bibitem{Kuperberg23}
G.~Kuperberg, ``Breaking the cubic barrier in the solovay-kitaev algorithm,'' 2023.

\bibitem{Bacon_2006}
\BIBentryALTinterwordspacing
D.~Bacon, I.~L. Chuang, and A.~W. Harrow, ``Efficient quantum circuits for schur and clebsch-gordan transforms,'' \emph{Physical Review Letters}, vol.~97, no.~17, Oct. 2006. [Online]. Available: \url{http://dx.doi.org/10.1103/PhysRevLett.97.170502}
\BIBentrySTDinterwordspacing

\bibitem{Koike_89}
\BIBentryALTinterwordspacing
K.~Koike, ``On the decomposition of tensor products of the representations of the classical groups: By means of the universal characters,'' \emph{Advances in Mathematics}, vol.~74, no.~1, pp. 57--86, 1989. [Online]. Available: \url{https://www.sciencedirect.com/science/article/pii/0001870889900042}
\BIBentrySTDinterwordspacing

\bibitem{CM23}
\BIBentryALTinterwordspacing
E.~Cervero and L.~Mančinska, ``Weak schur sampling with logarithmic quantum memory,'' 2023. [Online]. Available: \url{https://arxiv.org/abs/2309.11947}
\BIBentrySTDinterwordspacing

\bibitem{Vilenkin1995}
\BIBentryALTinterwordspacing
N.~J. Vilenkin and A.~U. Klimyk, \emph{Representation of Lie Groups and Special Functions}.\hskip 1em plus 0.5em minus 0.4em\relax Springer Netherlands, 1995. [Online]. Available: \url{http://dx.doi.org/10.1007/978-94-017-2885-0}
\BIBentrySTDinterwordspacing

\bibitem{Nielsen_Chuang_2010}
M.~A. Nielsen and I.~L. Chuang, \emph{Quantum Computation and Quantum Information: 10th Anniversary Edition}.\hskip 1em plus 0.5em minus 0.4em\relax Cambridge University Press, 2010.

\bibitem{Sagan_2011}
B.~E. Sagan, \emph{The symmetric group - representations, combinatorial algorithms, and symmetric functions.}, ser. Wadsworth \& Brooks / Cole mathematics series.\hskip 1em plus 0.5em minus 0.4em\relax Wadsworth, 1991.

\bibitem{Goodman2009}
\BIBentryALTinterwordspacing
R.~Goodman and N.~R. Wallach, \emph{Symmetry, Representations, and Invariants}.\hskip 1em plus 0.5em minus 0.4em\relax Springer New York, 2009. [Online]. Available: \url{http://dx.doi.org/10.1007/978-0-387-79852-3}
\BIBentrySTDinterwordspacing

\end{thebibliography}

\newpage

\appendix
\section{Building the Clebsch-Gordan transform}
\label{app:building_the_CG_transform}
Similar to the Schur transform, we can split $U^{m,n}_{\rm CG}$ into smaller steps by reducing it to the Clebsch-Gordan transform for $SU(d-1)$. This was first described in \cite{HarrowTh05}, and \cite{Nguyen_2023,Grinko_2023} generalized it to mixed Clebsch-Gordan transforms.

\subsection{Reduced Wigner operators}
Here, we inspect the relation between the Clebsch-Gordan transforms of dimensions $d$ and $d-1$. In particular, let $\gamma$ be of length $d$ and consider the space $\mathcal{Q}_{\gamma}^{d}\otimes\mathbbm{C}^{d}$. First, we restate the partition of $\mathcal{Q}_{\gamma}^{d}$ into $SU(d)$ irreps, given by Eq. \eqref{equ:Gelfand-Tsetlin_decomposition}
\begin{align}
    \label{equ:Gelfand-Tsetlin_decomposition_appendix}
    \mathcal{Q}_{\gamma}^{d}\stackrel{SU(d-1)}{\cong}\bigoplus_{\nu\preceq\gamma}\mathcal{Q}_{\nu}^{d-1} \,.
\end{align}
Now we have two options to decompose $\mathcal{Q}_{\gamma}^{d}\otimes\mathbbm{C}^{d}$ into $SU(d-1)$ irreps. On the one hand, we apply $U_{\rm CG}^{\gamma}$, followed by the decomposition given in Eq. \eqref{equ:Gelfand-Tsetlin_decomposition_appendix}. We get
\begin{align}
    \label{equ:CG_then_GT_decomposition}
    \mathcal{Q}_{\gamma}^{d}\otimes\mathbbm{C}^{d}\stackrel{SU(d)}{\cong}\bigoplus_{\gamma'\in\{\gamma+\ydiagram{1}\}}\mathcal{Q}_{\gamma'}^{d}\stackrel{SU(d-1)}{\cong}\bigoplus_{\gamma'\in\{\gamma+\ydiagram{1}\}}\left(\bigoplus_{\nu'\preceq\gamma'}\mathcal{Q}_{\nu'}^{d-1}\right) \, .
\end{align}
On the other hand, since $\mathbbm{C}^{d}\cong\mathbbm{C}^{d-1}\oplus\mathbbm{C}$, swapping the order of operations by first using the decomposition in Eq. \eqref{equ:Gelfand-Tsetlin_decomposition_appendix} and then applying the Clebsch-Gordan transforms on $SU(d-1)$ gives
\begin{align}
    \label{equ:GT_then_CG_decomposition_1}
    \mathcal{Q}_{\gamma}^{d}\otimes\mathbbm{C}^{d}
    &\stackrel{SU(d-1)}{\cong}\left(\bigoplus_{\nu\preceq\gamma}\mathcal{Q}_{\nu}^{d-1}\otimes\mathbbm{C}^{d-1}\right)\oplus\left(\bigoplus_{\nu\preceq\gamma}\mathcal{Q}_{\nu}^{d-1}\right)\\
    \label{equ:GT_then_CG_decomposition_2}
    &\stackrel{SU(d-1)}{\cong}\left(\bigoplus_{\nu\preceq\gamma}\left(\bigoplus_{\nu'\in\{\nu+\ydiagram{1}\}}\mathcal{Q}_{\nu'}^{d-1}\right)\right)\oplus\left(\bigoplus_{\nu\preceq\gamma}\mathcal{Q}_{\nu}^{d-1}\right) \, .
\end{align}
Both cases start from $\mathcal{Q}_{\gamma}^{d}\otimes\mathbbm{C}^{d}$ and only involve operators that commute with the action of $SU(d-1)$, so we deduce that there exists an isometry between the right hand sides of Eqs.~\eqref{equ:CG_then_GT_decomposition} and \eqref{equ:GT_then_CG_decomposition_2} that commutes with the action of $SU(d-1)$. This is precisely the \emph{reduced Wigner operator} for the tensor product $\mathcal{Q}_{\gamma}^{d}\otimes\mathbbm{C}^{d}$, which we denote by $T_{\rm CG}^{d,\gamma}$.

We now want to apply Schur's lemma to the irreps of $SU(d-1)$. To this end, we label each irrep in Eqs. \eqref{equ:CG_then_GT_decomposition} and \eqref{equ:GT_then_CG_decomposition_2} according to the irrep they were derived from, to get
\begin{align}
    \label{equ:wigner_operator_mapping}
    T_{\rm CG}^{d,\gamma}:\left(\bigoplus_{\nu\preceq\gamma} \left(\bigoplus_{\nu'\in\{\nu+\ydiagram{1}\}} (\mathcal{Q}_{\nu'}^{d-1})_{\nu}\right)\right) \oplus \left(\bigoplus_{\nu\preceq\gamma}(\mathcal{Q}_{\nu}^{d-1})_{\nu}\right) \rightarrow \bigoplus_{\gamma'\in\{\gamma+\ydiagram{1}\}} \left(\bigoplus_{\nu'\preceq\gamma'} (\mathcal{Q}_{\nu'}^{d-1})_{\gamma'}\right) \, .
\end{align}
First we remark that the label distinguishes different equivalent irreps from one another, i.e. $(\mathcal{Q}_{\alpha}^{d-1})_{\beta}\cong\mathcal{Q}_{\alpha}^{d-1}$. Secondly, we know that $T_{\rm CG}^{d,\gamma}$ commutes with the action of $SU(d-1)$, so Schur's lemma implies that it is equivalent to scalar multiplication on each irrep. This means that if we fix $\gamma',\nu,\nu'$, we have the restriction $(T_{\rm CG}^{d,\gamma})_{\gamma',\nu}^{\nu'}$
\begin{align}
    \label{equ:T_gamma_mapping}
    (T_{\rm CG}^{d,\gamma})_{\gamma',\nu}^{\nu'}:(\mathcal{Q}_{\nu'}^{d-1})_{\nu}\rightarrow(\mathcal{Q}_{\nu'}^{d-1})_{\gamma'} \quad , \quad (T_{\rm CG}^{d,\gamma})_{\gamma',\nu}^{\nu'}=(t_{\rm CG}^{d,\gamma})_{\gamma',\nu}^{\nu'}\mathbbm{I}_{\mathcal{Q}_{\nu'}^{d-1}} \quad , \quad (t_{\rm CG}^{d,\gamma})_{\gamma',\nu}^{\nu'}\in\mathbbm{C} \, .
\end{align}
We now note that there exist $i\in[d]$ and $j\in[d-1]$ so that
\begin{align}
    \label{equ:relations_staircases_Wigner_Operator}
    \gamma'=\gamma+e_{i} \quad , \quad \nu'=\nu+e_{j} \quad \text{or} \quad \nu'=\nu \, .
\end{align}
The latter case corresponds to the second term in the direct sum in Eq. \eqref{equ:wigner_operator_mapping}, and we can denote it by $j=d$. This leads us to rewrite the operators in Eq. \eqref{equ:T_gamma_mapping} as
\begin{align}
    (T_{\rm CG}^{d,\gamma})_{\gamma',\nu}^{\nu'}\sim (T_{\rm CG}^{d,\gamma})_{i,j}^{\nu'} \quad , \quad (t_{\rm CG}^{d,\gamma})_{\gamma',\nu}^{\nu'}\sim (t_{\rm CG}^{d,\gamma})_{i,j}^{\nu'} \, .
\end{align}
For the dual Clebsch-Gordan transform, we can construct the operator $T_{\rm dCG}^{d,\gamma}$ in the same way, only now we have
\begin{align}
    \label{equ:relations_staircases_dual_Wigner_Operator}
    \gamma'=\gamma-e_{i} \quad , \quad \nu'=\nu-e_{j} \quad \text{or} \quad \nu'=\nu \, ,
\end{align}
and we denote the coefficients by $(t_{\rm dCG}^{d,\gamma})_{i,j}^{\nu'}\in\mathbbm{C}$. These scalars are given in an efficiently computable closed form expression in the next section.

\begin{remark}
    \label{rem:extend_matrix}
    It is important to note that a priori the above Eqs. are only defined for indices $i,j$ for which either Eqs. \eqref{equ:relations_staircases_Wigner_Operator} or \eqref{equ:relations_staircases_dual_Wigner_Operator} hold. However, we know that for all valid inputs $j$, they have to be isometries. For $(t_{\rm CG}^{d,\gamma})_{i,j}^{\nu'}$, this is due to the fact that the input spaces in Eqs. \eqref{equ:T_gamma_mapping} are orthogonal and $T_{\rm CG}^{d,\gamma}$ is an isometry. A similar argument also holds for $(t_{\rm dCG}^{d,\gamma})_{i,j}^{\nu'}$. Therefore we can extend both matrices to $d\times d$ unitaries. This fact will be important in Section \ref{sec:complexity}.    
\end{remark}

\subsection{Closed form expressions for the reduced Wigner operators}
\label{sec:closed_form_t}
The scalars $(t_{\rm CG}^{d,\gamma})_{i,j}^{\nu'}$ and $(t_{\rm dCG}^{d,\gamma})_{i,j}^{\nu'}$ are given in \cite[Volume 3, Chapter 18.2.10]{Vilenkin1995} by
\begin{align}
    \label{equ:Reduced_Wigner_d}
    &(t_{\rm CG}^{(d)})_{i,d}^{\gamma,\nu'}=\left|\frac{\prod\limits_{k=1}^{d-1}((\nu'_{k}-k)-(\gamma_{i}-i)-1)}{\prod\limits_{k\neq i}((\gamma_{k}-k)-(\gamma_{i}-i))}\right|^{\frac{1}{2}} \quad , \quad &, \\
    \label{equ:Reduced_Wigner_less_than_d}
    &(t_{\rm CG}^{d,\gamma})_{i,j}^{\nu'} = S(i,j) \left|\prod\limits_{k\neq j}\frac{(\nu'_{k}-k)-(\gamma_{i}-i)-1}{(\nu'_{k}-k)-(\nu'_{j}-j)}\prod\limits_{k\neq i}\frac{(\gamma_{k}-k)-(\nu'_{j}-j)+1}{(\gamma_{k}-k)-(\gamma_{i}-i)}\right|^{\frac{1}{2}} \quad , \quad \text{ if } j<d \, &,
\end{align}
\begin{align}
    \label{equ:Reduced_Wigner_dual_d}
    &(t_{\rm dCG}^{d,\gamma})_{i,j}^{\nu'}=\left|\frac{\prod\limits_{k=1}^{d-1}((\nu'_{k}-k)-(\gamma_{i}-i))}{\prod\limits_{k\neq i}((\gamma_{k}-k)-(\gamma_{i}-i))}\right|^{\frac{1}{2}} \quad , \quad &, \\
    \label{equ:Reduced_Wigner_dual_less_than_d}
    &(t_{\rm dCG}^{d,\gamma})_{i,j}^{\nu'} = S(i,j) \left|\prod\limits_{k\neq j}\frac{(\nu'_{k}-k)-(\gamma_{i}-i)}{(\nu'_{k}-k)-(\nu'_{j}-j)+1}\prod\limits_{k\neq i}\frac{(\gamma_{k}-k)-(\nu'_{j}-j)+2}{(\gamma_{k}-k)-(\gamma_{i}-i)}\right|^{\frac{1}{2}} \quad , \quad \text{ if } j<d \, &,
\end{align}
with
\begin{align}
    S(i,j)=1 \quad \text{ if } i\leq j \quad , \quad S(i,j)=-1 \quad \text{ if } i> j\, .
\end{align}

\subsection{A recursive construction of the Clebsch-Gordan transform}

The above considerations give us a recursive way to write the Clebsch-Gordan transform.
We begin by defining the operators $I_{\rm CG}^{d,\gamma}$ and $W_{\rm CG}^{d,\gamma}$
\begin{align}
    I_{\rm CG}^{d,\gamma}:\mathcal{Q}_{\gamma}^{d}\otimes\mathbbm{C}^{d}\rightarrow \left(\bigoplus_{\nu\preceq\gamma}\mathcal{Q}_{\nu}^{d-1}\otimes\mathbbm{C}^{d-1}\right)\oplus\left(\bigoplus_{\nu\preceq\gamma}\mathcal{Q}_{\nu}^{d-1}\right) \, ,
\end{align}
and
\begin{align}
\begin{split}
    W_{\rm CG}^{d,\gamma}&:\left(\bigoplus_{\nu\preceq\gamma}\mathcal{Q}_{\nu}^{d-1}\otimes\mathbbm{C}^{d-1}\right)\oplus\left(\bigoplus_{\nu\preceq\gamma}\mathcal{Q}_{\nu}^{d-1}\right) \rightarrow \left(\bigoplus_{\nu\preceq\gamma}\left(\bigoplus_{\nu'\in\{\nu+\ydiagram{1}\}}\mathcal{Q}_{\nu'}^{d-1}\right)\right)\oplus\left(\bigoplus_{\nu\preceq\gamma}\mathcal{Q}_{\nu}^{d-1}\right) \,, \\
    W_{\rm CG}^{d,\gamma}&:=\left(\bigoplus_{\nu\preceq\gamma}U_{\rm CG}^{\nu}\right)\oplus\left(\bigoplus_{\nu\preceq\gamma}\id_{\mathcal{Q}_{\nu}^{d-1}}\right) \, ,
\end{split}
\end{align}
which respectively realize the operations in Eqs.~\eqref{equ:GT_then_CG_decomposition_1} and \eqref{equ:GT_then_CG_decomposition_2}.
By construction, applying $I_{\rm CG}^{d,\gamma}$, followed by $W_{\rm CG}^{d,\gamma}$ and lastly $T_{\rm CG}^{d,\gamma}$ is equivalent to performing the Clebsch-Gordan transform of Eq.~\eqref{equ:CG_then_GT_decomposition}. 

We can define $I_{\rm dCG}^{d,\gamma},W_{\rm dCG}^{d,\gamma}$ in a similar way to $I_{\rm CG}^{d,\gamma},W_{\rm CG}^{d,\gamma}$, to obtain

\begin{lemma}
    \label{lem:iterative_CG}
    Let $\gamma\vdash_{d}(m,n)$. We have
    \begin{align}
        U_{\rm CG}^{\gamma}=T_{\rm CG}^{d,\gamma}W_{\rm CG}^{d,\gamma}I_{\rm CG}^{d,\gamma} \,, \\
        U_{\rm dCG}^{\gamma}=T_{\rm dCG}^{d,\gamma}W_{\rm dCG}^{d,\gamma}I_{\rm dCG}^{d,\gamma} \,.
    \end{align}
\end{lemma}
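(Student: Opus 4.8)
The lemma formalizes the ``by construction'' remark preceding it: the plan is simply to unwind the definitions of $I_{\rm CG}^{d,\gamma}$, $W_{\rm CG}^{d,\gamma}$ and $T_{\rm CG}^{d,\gamma}$ and observe that their composite realizes exactly the decomposition path through Eqs.~\eqref{equ:GT_then_CG_decomposition_1}--\eqref{equ:GT_then_CG_decomposition_2}, which by the definition of the reduced Wigner operator agrees with the path through $U_{\rm CG}^{\gamma}$ in Eq.~\eqref{equ:CG_then_GT_decomposition}.

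First I would make the identification of target spaces explicit. The operator $U_{\rm CG}^{\gamma}$ maps $\mathcal{Q}_{\gamma}^{d}\otimes\mathbbm{C}^{d}\to\bigoplus_{\gamma'\in\{\gamma+\ydiagram{1}\}}\mathcal{Q}_{\gamma'}^{d}$; post-composing with the $SU(d-1)$ Gelfand--Tsetlin decomposition of Eq.~\eqref{equ:Gelfand-Tsetlin_decomposition_appendix} applied to each block $\mathcal{Q}_{\gamma'}^{d}$ yields a map $\widetilde{U}_{\rm CG}^{\gamma}$ into $\bigoplus_{\gamma'\in\{\gamma+\ydiagram{1}\}}\bigoplus_{\nu'\preceq\gamma'}\mathcal{Q}_{\nu'}^{d-1}$, which is precisely the composite on the right of Eq.~\eqref{equ:CG_then_GT_decomposition}. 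Since the (mixed) Schur transform is defined with respect to the Gelfand--Tsetlin basis, this decomposition is built into the encoding, so identifying $\bigoplus_{\gamma'}\mathcal{Q}_{\gamma'}^{d}$ with $\bigoplus_{\gamma'}\bigoplus_{\nu'\preceq\gamma'}\mathcal{Q}_{\nu'}^{d-1}$ we may take $U_{\rm CG}^{\gamma}=\widetilde U_{\rm CG}^{\gamma}$.

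Next I would check that $W_{\rm CG}^{d,\gamma}\circ I_{\rm CG}^{d,\gamma}$ is the composite on the right of Eqs.~\eqref{equ:GT_then_CG_decomposition_1}--\eqref{equ:GT_then_CG_decomposition_2}: $I_{\rm CG}^{d,\gamma}$ applies the Gelfand--Tsetlin decomposition of $\mathcal{Q}_{\gamma}^{d}$, uses $\mathbbm{C}^{d}\cong\mathbbm{C}^{d-1}\oplus\mathbbm{C}$, and distributes the tensor product to produce Eq.~\eqref{equ:GT_then_CG_decomposition_1}; then $W_{\rm CG}^{d,\gamma}=\big(\bigoplus_{\nu\preceq\gamma}U_{\rm CG}^{\nu}\big)\oplus\big(\bigoplus_{\nu\preceq\gamma}\mathbbm{I}_{\mathcal{Q}_{\nu}^{d-1}}\big)$ performs the $SU(d-1)$ Clebsch--Gordan transforms on the first summand and nothing on the second, producing Eq.~\eqref{equ:GT_then_CG_decomposition_2}. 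Both $\widetilde U_{\rm CG}^{\gamma}$ and $W_{\rm CG}^{d,\gamma}I_{\rm CG}^{d,\gamma}$ are $SU(d-1)$-equivariant isometries out of $\mathcal{Q}_{\gamma}^{d}\otimes\mathbbm{C}^{d}$ whose codomains have equal dimension (by Eq.~\eqref{eq:cg_dcg}), so $W_{\rm CG}^{d,\gamma}I_{\rm CG}^{d,\gamma}$ is a unitary and $T_{\rm CG}^{d,\gamma}$ is, by definition, the unique $SU(d-1)$-equivariant operator with $\widetilde U_{\rm CG}^{\gamma}=T_{\rm CG}^{d,\gamma}\big(W_{\rm CG}^{d,\gamma}I_{\rm CG}^{d,\gamma}\big)$; this is the claimed identity, up to the fixed reordering of the blocks $\mathcal{Q}_{\gamma'}^{d}$ inherited from the super Clebsch--Gordan convention. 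The dual statement follows verbatim after replacing $\mathbbm{C}^{d}$ by $\overline{\mathbbm{C}^{d}}$, $\{\gamma+\ydiagram{1}\}$ by $\{\gamma-\ydiagram{1}\}$, and $I_{\rm CG},W_{\rm CG},T_{\rm CG}$ by $I_{\rm dCG},W_{\rm dCG},T_{\rm dCG}$.

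The only real obstacle is bookkeeping: one must ensure that $I_{\rm CG}^{d,\gamma}$ uses the same Gelfand--Tsetlin decomposition of $\mathcal{Q}_{\gamma}^{d}$ that is implicit in $\widetilde U_{\rm CG}^{\gamma}$, that the embedding $\mathbbm{C}^{d-1}\hookrightarrow\mathbbm{C}^{d}$ and the orderings of the blocks $\mathcal{Q}_{\nu'}^{d-1}$ in the codomains of $T_{\rm CG}^{d,\gamma}$ and of $\widetilde U_{\rm CG}^{\gamma}$ are pinned down consistently. Once these conventions are fixed, the identity is immediate. I would note in passing that no appeal to Schur's lemma or to the closed-form expressions of Section~\ref{sec:closed_form_t} is needed for the lemma itself; those enter only when one computes the explicit matrix entries of $T_{\rm CG}^{d,\gamma}$.
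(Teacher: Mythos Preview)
Your proposal is correct and follows exactly the paper's approach: the paper treats the lemma as immediate from the definitions (``by construction, applying $I_{\rm CG}^{d,\gamma}$, followed by $W_{\rm CG}^{d,\gamma}$ and lastly $T_{\rm CG}^{d,\gamma}$ is equivalent to performing the Clebsch-Gordan transform''), and you have simply unpacked that sentence carefully. Your added remarks on bookkeeping and on not needing the closed-form Wigner coefficients are accurate and go slightly beyond what the paper spells out.
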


\begin{remark}
    While this may seem inconspicuous, we note that $W_{\rm CG}^{d,\gamma}$ and $W_{\rm dCG}^{d,\gamma}$ are defined via $U_{\rm CG}^{\nu}$ and $U_{\rm dCG}^{\nu}$. This means we can recur the above relation to obtain
    \begin{align}
        \label{equ:Recursion_U_CG}
        T_{\rm CG}^{d,\gamma}W_{\rm CG}^{d,\gamma}I_{\rm CG}^{d,\gamma}=T_{\rm CG}^{d,\gamma}\left(\bigoplus_{\nu\preceq\gamma}T_{\rm CG}^{d-1,\nu}W_{\rm CG}^{d-1,\nu}I_{\rm CG}^{d-1,\nu}\right)\oplus\left(\bigoplus_{\nu\preceq\gamma}\id_{\mathcal{Q}_{\nu}^{d-1}}\right)I_{\rm CG}^{d,\gamma} \,.
    \end{align}
    Since $W_{\rm CG}^{1,\gamma}$ is just the identity, we can therefore construct $U_{\rm CG}^{\gamma}$ by applying operators $T_{\rm CG}^{i,\gamma^{i}}$ and $I_{\rm CG}^{i,\gamma^{i}}$ for $i\in[d]$. We obtain a similar recursion for $T_{\rm dCG}^{d,\gamma}W_{\rm dCG}^{d,\gamma}I_{\rm dCG}^{d,\gamma}$. This insight was first presented in \cite{HarrowTh05}, and it is the second key part used to construct efficient algorithms for the (mixed) Schur transform in \cite{Bacon_2006,Nguyen_2023,Grinko_2023}.
\end{remark}

\section{An algorithm for the (dual) Clebsch-Gordan transform}
\label{app:algorithm_for_CG_transform}

\subsection{The algorithm}
\label{app:subs_algorithm_for_CG_transform}

We present the protocol for the Clebsch-Gordan transform based on the decomposition given in Lemma \ref{lem:iterative_CG}, and specifically Eq.~\eqref{equ:Recursion_U_CG}.
This algorithm was first developed in \cite{HarrowTh05} and then generalized to the mixed Clebsch-Gordan transform in \cite{Nguyen_2023,Grinko_2023}.

\begin{figure}[h]
    \begin{algorithm}[H]\label{alg:CG_algo}
        \SetAlgoLined
        \caption{An efficient algorithm for the Clebsch-Gordan transform}
        \SetKwInOut{Input}{Input}
        \SetKwInOut{Output}{Output}
        \SetKwInOut{Init}{Initialization}
        \SetKwRepeat{Repeat}{repeat}{until}
        \Input{The register $\ket{.}_{\mathcal{Q}_{\gamma}^{d}}$ encoding a state on $\mathcal{Q}_{\gamma}^{d}$ as in Eqs. \eqref{equ:encode_q_gamma} and \eqref{equ:encode_gamma_i}\\
        The register $\ket{.}_{\mathbbm{C}^{d}}$ encoding a state on $\mathbbm{C}^{d}$}

        Initialize auxiliary registers $\ket{0}_{A,{\rm aux}}$ and $\ket{0}_{B,{\rm aux}}$. \,
        
        \For{$1\leq i \leq d$}{

            \eIf{$i=1$}{
                Controlled by register $\ket{k}_{\mathbbm{C}^{d}}$:\,
                
                \eIf{$k\leq i$}{
                    Perform $\ket{\gamma^{1}}_{\mathcal{Q}_{\gamma}^{d}}\rightarrow \ket{\gamma^{1}+e_{1}}_{\mathcal{Q}_{\gamma}^{d}}(=\ket{(\gamma^{1})'}_{\mathcal{Q}_{\gamma}^{d}})$\,
                }{
                    Do nothing $(\text{i.e.} \ket{\gamma^{1}}_{\mathcal{Q}_{\gamma}^{d}}=\ket{(\gamma^{1})'}_{\mathcal{Q}_{\gamma}^{d}})$ \,
                }
            }{
                \textbf{Step 1:} Controlled by register $\ket{k}_{\mathbbm{C}^{d}}$:\,
                
                \eIf{$k\leq i$}{
                
                Controlled by register $\ket{q_{\gamma}}_{\mathcal{Q}_{\gamma}^{d}}$, use $(\gamma^{i-1})'$ and $\gamma^{i}$ to encode the matrix entries $(t_{\rm CG}^{i,\gamma^{i}})^{(\gamma^{i-1})'}_{j,j'}$ on register $\ket{0}_{A,{\rm aux}}$\,
                }{
                Encode the identity on register $\ket{0}_{A,{\rm aux}}$\,
                }

                \textbf{Step 2:} Controlled by register $\ket{t}_{A,{\rm aux}}$:\,

                Encode a sequence of gates on register $\ket{0}_{B,{\rm aux}}$ that approximate the matrix up to accuracy $\epsilon'$ by using the universal algorithm described in \cite[Chapter 4.5]{Nielsen_Chuang_2010} and the approximation of any $U\in SU(d)$ given in \cite{Kuperberg23}. \, 

                \textbf{Step 3:} Controlled by register $\ket{{\rm gates}}_{B,{\rm aux}}$:\,

                Apply the gate sequence to register $\ket{k}_{\mathbbm{C}^{d}}$\, 

                \textbf{Step 4:} Uncompute Step 2\,

                \textbf{Step 5:} Uncompute Step 1\,

                \textbf{Step 6:} Controlled by register $\ket{k}_{\mathbbm{C}^{d}}$:\,

                \eIf{$k\leq i$}{
                    Perform $\ket{\gamma^{i}}_{\mathcal{Q}_{\gamma}^{d}}\rightarrow \ket{\gamma^{i}+e_{k}}_{\mathcal{Q}_{\gamma}^{d}}(=\ket{(\gamma^{i})'}_{\mathcal{Q}_{\gamma}^{d}})$\,
                }{
                    Do nothing  $(\text{i.e.} \ket{\gamma^{i}}_{\mathcal{Q}_{\gamma}^{d}}=\ket{(\gamma^{i})'}_{\mathcal{Q}_{\gamma}^{d}})$ \,
                }
            }
        }
        \Return The registers $\ket{.}_{\mathcal{Q}_{\gamma}^{d}}$ and $\ket{.}_{\mathbbm{C}^{d}}$, encoding together a state on $\bigoplus_{k\in[d]}\mathcal{Q}_{\gamma+e_{k}}^{d}$
    \end{algorithm}
\end{figure}

The states $\ket{q_{\gamma}}\in\mathcal{Q}_{\gamma}^{d}$ are stored as a product 
\begin{align}
    \label{equ:encode_q_gamma}
    \ket{q_{\gamma}}=\ket{\gamma^{1}}\otimes...\otimes \ket{\gamma^{d}} \in (\mathbbm{C}^{n+m})^{\otimes d^2} \,,
\end{align}
with
\begin{align}
    \label{equ:encode_gamma_i}
    \ket{\gamma^{i}}=\ket{\gamma^{i}_{1}}\otimes ... \otimes \ket{\gamma^{i}_{i}}\otimes\ket{0} \otimes ... \otimes\ket{0} \in (\mathbbm{C}^{n+m})^{\otimes d} \,.
\end{align}
Figure \ref{fig:circuit_T} gives a graphical representation of one step of the algorithm. The isometries $I_{\rm CG}^{i,\gamma^{i}}$ from Eq.~\eqref{equ:Recursion_U_CG} are therefore trivial, since the register $\ket{q_{\gamma}}$ already has the desired decomposition. The operators $T_{\rm CG}^{i,\gamma^{i}}$ are applied via Steps 3 and 6, which correspond to
\begin{align}
    \ket{j}\rightarrow \sum_{j'}(t_{\rm CG}^{i,\gamma^{i}})^{(\gamma^{i-1})'}_{j,j'}\ket{j} \, ,
\end{align}
and
\begin{align}
    \ket{\gamma^{i}_{j'}}\rightarrow\ket{\gamma^{i}_{j'}+1} \, .
\end{align}
At the end of the algorithm, the register $\ket{j}$ tells us the new $SU(d)$ irrep
\begin{align}
    (\gamma^{d})'=\gamma^{d}+e_{j} \, .
\end{align}
We can obtain a similar algorithm for $U_{\rm dCG}^{\gamma}$ by calculating the matrix entries for $T_{\rm dCG}^{i,\gamma^{i}}$ instead.

\begin{remark}
    The uncomputations in Steps 4 and 5 are necessary to prevent entanglement with the auxiliary registers. Register $\ket{t}_{A,{\rm aux}}$ contains $d^2$ numerical values a $d\times d$ matrix, and register $\ket{\rm gates}_{B,{\rm aux}}$ contains a an encoding for the sequence of elementary gates needed to implement the matrix stored in $\ket{t}_{A,{\rm aux}}$. Both registers can be interpreted as classical encodings living on a quantum register. Finally, the accuracy $\epsilon'$ is given by $\epsilon/T^{\gamma}_{\rm (d)CG}$, where $T^{\gamma}_{\rm (d)CG}$ is the time complexity discussed in the next section.
\end{remark}

\begin{figure}
    \begin{center}
    \begin{quantikz}
        \lstick{$\ket{\gamma^{i}}$} & \ctrl{1} & & & & \ctrl{1} & \gate[1]{+} & \rstick{$\ket{(\gamma^{i})'}$} \\
        \lstick{$\ket{(\gamma^{i-1})'}$} & \ctrl{1} & & & & \ctrl{1} & & \rstick{$\ket{(\gamma^{i-1})'}$} \\
        \lstick{$\ket{j}$} & \ctrl{1} & & \gate[1]{t} & & \ctrl{1} & \ctrl{-2} & \rstick{$\ket{j'}$} \\
        \lstick{$\ket{0}_{A,{\rm aux}}$} & \gate[1]{C} & \ctrl{1} & & \ctrl{1} & \gate[1]{C^{\dagger}} & & \rstick{$\ket{0}_{A,{\rm aux}}$} \\
        \lstick{$\ket{0}_{B,{\rm aux}}$} & & \gate[1]{E} & \ctrl{-2} & \gate[1]{E^{\dagger}} & & & \rstick{$\ket{0}_{B,{\rm aux}}$}
    \end{quantikz}
    \end{center}
    \caption{Circuit diagram for Steps 1-6 of Algorithm \ref{alg:CG_algo}. $C$ denotes the classical calculation of the matrix entries $(t_{\rm CG}^{i,\gamma^{i}})^{(\gamma^{i-1})'}_{j,j'}$  given in Eqs. \eqref{equ:Reduced_Wigner_d} and \eqref{equ:Reduced_Wigner_less_than_d}. $E$ denotes the encoding as gates. $t$ corresponds to applying the encoded gates, and $+$ corresponds to the map $\gamma\rightarrow\gamma+e_{j}$. \label{fig:circuit_T}}
\end{figure}

\subsection{Gate and memory complexity}
\label{app:CG_time_mem}

Following the analysis in \cite{Nguyen_2023}, we now investigate the gate complexity $T_{\rm CG}^{\gamma}$ and the memory complexity $M_{\rm CG}^{\gamma}$ of implementing $U_{\rm CG}^{\gamma}$ with Algorithm \ref{alg:CG_algo} to precision $\epsilon$. The following individual steps then have the following gate complexities.
\begin{enumerate}
    \item The matrix entries $(t_{\rm CG}^{i,\gamma^{i}})^{(\gamma^{i-1})'}_{j,j'}$ take $O(d)$ classical computational steps to calculate. Altogether, there are at most $d^2$ entries. Extending the matrix to a unitary can be done via Gram-Schmidt, which takes $O(d^{3})$ classical computational steps. Each computational step can be implemented by using $O(1)$ gates, which gives a total complexity of $O(d^3)$ for this step.
    
    \item The decomposition of the matrix $(t_{\rm CG}^{i,\gamma^{i}})^{(\gamma^{i-1})'}_{j,j'}$ into elementary gates is a classical calculation that takes $O(d^2\log_{2}^{p}(1/\epsilon'))$ gates, where $p\approx1.44$ \cite{Kuperberg23}.
    
    \item Here we apply $O(d^2\log_{2}^{p}(1/\epsilon'))$ gates.
    
    \item Same complexity as Step 2.
    
    \item Same complexity as Step 1.
    
    \item The addition can be implemented in $O(d^2)$ gates.
    
\end{enumerate}
To apply $U_{\rm CG}^{\gamma}$ with accuracy $\epsilon$, we need to replace $\epsilon'$ by $\epsilon/T_{\rm CG}^{\gamma}$. This, and performing the above steps $d$ times, gives
\begin{align}\label{equ:CG_gate_complexity}
    T_{\rm CG}^{\gamma}=T_{\rm dCG}^{\gamma}=O(d^4\log_{2}^{p}(d,1/\epsilon)) \, .
\end{align}
The result for $T_{\rm dCG}^{\gamma}$ can be shown similarly as above.

For the memory complexity, we have to take into account the memory requirements $M_{\rm CG, \rho}^{\gamma}$ of storing the state and the size $M_{\rm CG, aux}^{\gamma}$ of the auxiliary registers. The encoding scheme in Eqs. \eqref{equ:encode_q_gamma} and \eqref{equ:encode_gamma_i} implies that register $\ket{.}_{\mathcal{Q}_{\gamma}^{d}}$ requires $\dim\Qgd\leq d^2\log_{2}(m+n)$ qubits, and register register $\ket{.}_{\mathbbm{C}^{d}}$ requires $\log_{2}(d)$ qubits. Therefore, to store the state itself the algorithm needs
\begin{align}
    M_{\rm CG, \rho}^{\gamma}=O(d^2\log_{2}(m+n)) \, ,
\end{align}
For the auxiliary memory, Step 1 tells us that register $\ket{.}_{A,{\rm aux}}$ requires $O(d^{2}\log_{2}(1/\epsilon))$ memory to store the matrix entries up to precision $O(\text{poly}(\epsilon))$. Step 2 further tells us that register $\ket{.}_{B,{\rm aux}}$ requires $O(d^2\log_{2}^{p}(d,n,m,1/\epsilon))$ memory to store all gates. Hence, together we have
\begin{align}
    M_{\rm CG, aux}^{\gamma}=O(d^2\log_{2}^{p}(1/\epsilon')) \, ,
\end{align}
auxiliary qubit registers. Summing $M_{\rm CG, \rho}^{\gamma}$ and $M_{\rm CG, aux}^{\gamma}$ and inserting for $\epsilon'=\epsilon/T_{\rm CG}^{\gamma}$, we get the memory complexity of Algorithm \ref{alg:CG_algo} as
\begin{align}
    M_{\rm CG}^{\gamma} = M_{\rm dCG}^{\gamma} = O(d^2\log_{2}^{p}(d,1/\epsilon))\,.
\end{align}
Again, we can show the result for $M_{\rm dCG}^{\gamma}$ similarly as above.

\subsection{Reduced dimension input}
\label{app:reduced_dimension_input}
In this section, we restrict the inputs to the set
\begin{align}
    K:=S^{\otimes m}\otimes\overline{S'}^{\otimes n} \, .
\end{align}
for subspaces $S,S'\subseteq\mathbbm{C}^{d}$ with $\dim S=r$ and $\dim S'=r'$.

For starters, note Lemma \ref{lem:mixed_SW_duality_restriction_partitions} from Appendix \ref{app:mixed_SW_duality_subspaces} implies that for inputs from $K$, all resulting staircases $\gamma$ in the mixed Schur-Weyl decomposition are such that
\begin{align}
    \label{equ:reduced_staircases}
    |\{i:\gamma_i> 0\}|\leq r \quad , \quad |\{i:\gamma_i< 0\}|\leq r' \, .
\end{align}
The same holds true for all intermediate steps in Algorithm \ref{alg:streamalgo} because the irreps correspond to the decomposition of $S^{\otimes k}$ for $1\leq k \leq m$ and $S^{\otimes m}\otimes\overline{S'}^{\otimes k}$ for $1\leq k \leq n$. Finally, the staircases that appear during Algorithm \ref{alg:CG_algo} also obey this rule, since they are $SU(i)$ irreps of the tensor powers of $S$, and so Lemma \ref{lem:mixed_SW_duality_restriction_partitions} holds as well. Altogether, this means that we can store each state $\ket{q_{\gamma}}\in\mathcal{Q}^{d}_{\gamma}$ as
\begin{align}
    \ket{q_{\gamma}}=\ket{\gamma^{1}}\otimes...\otimes \ket{\gamma^{d}} \in (\mathbbm{C}^{n+m})^{\otimes d(r+r'+1)} \,,
\end{align}
with
\begin{align}
    \ket{\gamma^{i}}=\ket{\gamma^{i}_{1}}\otimes...\otimes\ket{\gamma^{i}_{r}}\otimes \ket{0} \otimes \ket{\gamma^{i}_{d-r+1}} \otimes ... \otimes \ket{\gamma^{i}_{d}} \in (\mathbbm{C}^{n+m})^{\otimes (r+r'+1)} \,.
\end{align}
The $\ket{0}$ in the middle is a priori not necessary, but it will help with further calculations. We will now argue that $(t_{\rm CG}^{i,\gamma^{i}})^{(\gamma^{i-1})'}_{j,j'}$ is at most a $(r+r'+1)\times(r+r'+1)$ unitary, and the same holds for $(t_{\rm dCG}^{i,\gamma^{i}})_{j,j'}^{(\gamma^{i-1})'}$. 

We begin by investigating $T_{\rm CG}^{i,\gamma}$: Consider staircases
\begin{align}
    \nu&\vdash_{i-1}(m-1,n) \quad , \quad \nu'\vdash_{i-1}(m,n) \, , \\
    \gamma&\vdash_{i}(m-1,n) \quad , \quad \gamma'\vdash_{i}(m,n) \, , \\
\end{align}
with $j,j'\in [i]$ so that
\begin{align}
    \gamma'=\gamma+e_{j} \quad , \quad \nu'=\nu+e_{j'} \quad , \quad \nu\preceq\gamma \quad , \quad \nu'\preceq\gamma' \, .
\end{align}
We further assume that $\gamma,\nu,\nu'$ satisfy the conditions given in Eq. \eqref{equ:reduced_staircases}. Then $\gamma'$ can have at most $1$ nonzero entry more than $\gamma$, specifically if $j=r+1$ which forces $\gamma'_{r+1}=1$. Therefore we have
\begin{align}
    \label{eq:restriction_j_t_matrix}
    1\leq j \leq r+1 \quad \text{or} \quad d-r'+1\leq j \leq d \,.
\end{align}
In the same way we find that
\begin{align}
    \label{eq:restriction_jprime_t_matrix}
    1\leq j' \leq r+1 \quad \text{or} \quad d-r'+1\leq j' \leq d \,.
\end{align}
For $\nu'=\nu$, similar arguments hold. If we now restrict $\gamma,\nu'$ as discussed above, and for $j'$ in the range given in Eq.~\ref{eq:restriction_jprime_t_matrix}, we see that $(t_{\rm CG}^{i,\gamma^{i}})^{(\gamma^{i-1})'}_{j,j'}$ will be zero if $j$ is not given as in Eq.~\ref{eq:restriction_j_t_matrix}. Therefore we can view $(t_{\rm CG}^{i,\gamma^{i}})^{(\gamma^{i-1})'}_{j,j'}$ as a $(r+r'+1)\times(r+r'+1)$ matrix. If we return to Algorithm \ref{alg:CG_algo}, we can improve our analysis in the following way
\begin{enumerate}
    \item For calculating the matrix entries $(t_{\rm CG}^{i,\gamma^{i}})^{(\gamma^{i-1})'}_{j,j'}$, most factors in Eqs. \eqref{equ:Reduced_Wigner_d}, \eqref{equ:Reduced_Wigner_less_than_d}, \eqref{equ:Reduced_Wigner_dual_d} and \eqref{equ:Reduced_Wigner_dual_less_than_d} cancel out. Therefore it takes $O(r+r')$ classical computational steps to calculate a single entry. There are at most $(r+r'+1)^2$ matrix entries, and so Gram-Schmidt also takes $O((r+r')^{3})$ classical computational steps. Altogether we get a gate complexity of $O((r+r')^3)$ for this step.
    
    \item Since the matrix is now size $(r+r'+1)\times(r+r'+1)$, we need $O((r+r')^2\log_{2}^{p}(1/\epsilon'))$ gates.
    
    \item We apply the $O((r+r')^2\log_{2}^{p}(1/\epsilon'))$ gates.
    
    \item Same complexity as step 2.
    
    \item Same complexity as step 1.
    
    \item The addition can be implemented in $O((r+r')^2)$ gates.    
\end{enumerate}

Similarly to before, we now set $\epsilon'=\epsilon/T$ and repeat the above steps $d$ times to obtain
\begin{align}
    T_{\rm CG}^{\gamma}=T_{\rm dCG}^{\gamma}=O((r+r')^3d\log_{2}^{p}(d,1/\epsilon)) \, .
\end{align}
From Steps 2 and 3, we see that
\begin{align}
    M_{\rm CG, aux}^{\gamma}=O((r+r')^2\log_{2}^{p}(d,1/\epsilon)) \, .
\end{align}
For the state, the new encoding gives
\begin{align}
    M_{\rm CG, \rho}^{\gamma}=O(d(r+r')\log_{2}(m+n)) \, .
\end{align}
In total, we get the memory requirement and gate complexity of
\begin{align}
    &M_{\rm CG, \rho}^{\gamma}=M_{\rm dCG, \rho}^{\gamma}=O((r+r')d\log_{2}^{p}(d,1/\epsilon)) \, ,\\
    &T_{\rm CG}^{\gamma}=T_{\rm dCG}^{\gamma}=O((m+n)(r+r')^3d\log_{2}^{p}(d,1/\epsilon)) \, .
\end{align}
Here, the argument for $T_{\rm dCG}^{\gamma}$ and $M_{\rm dCG}^{\gamma}$ is the same as above.

\section{Mixed Schur-Weyl duality for subspaces}
\label{app:mixed_SW_duality_subspaces}

Instead of looking at the full mixed Schur-Weyl duality over $(\mathbb{C}^d)^{\otimes m}\otimes(\overline{\mathbbm{C}}^d)^{\otimes n}$, we might just be interested in a subspace
\begin{align}
    K\subseteq \Cdmn\stackrel{\mathcal{A}_{m,n}^{d}\times SU(d)}{\cong} 
    \bigoplus_{\gamma\vdash_{d} (m,n)}\mathcal{P}_{\gamma}^{d}\otimes\mathcal{Q}_{\gamma}^d \, .
\end{align}
It is possible that not all $\gamma\vdash_{d} (m,n)$ are relevant for the above decomposition of $K$ so in this section we explore the decomposition resulting from $U_{\rm mSch}^{m,n}(K)$. First, we will lay the groundwork by investigating which irreps can appear in the tensor product $\mathcal{Q}_{\lambda}^{d}\otimes\overline{\mathcal{Q}_{\mu}^{d}}$, and afterwards we will discuss mixed Schur-Weyl duality for subspaces.

\begin{remark}
    In the following, we will frequently use the number of nonzero entries of a given partition $\lambda\vdash_{d}m$. We denote this number as the \textit{length} $l(\lambda)$. It is important to note that $1\leq l(\lambda)\leq d$.
\end{remark}

\subsection{Irreps of $SU(d)$ in $\mathcal{Q}_{\lambda}^{d}\otimes\overline{\mathcal{Q}_{\mu}^{d}}$}
\label{subs:irreps_of_SUd_in_tensor_product}

We start with the following definition.

\begin{definition}
    Let $m,n,d\in\mathbbm{N}$ and let $\lambda\vdash_{d}m$ and $\mu\vdash_{d}n$. Then we define the \textit{Littlewood-Richardson Coefficient} $c^{\nu}_{\lambda,\mu}\in\mathbbm{N}_{0}$ as the multiplicity of the irrep $\mathcal{Q}_{\nu}^{d}$ inside the decomposition
    \begin{align}
        \mathcal{Q}_{\lambda}^{d}\otimes\mathcal{Q}_{\mu}^{d} \stackrel{SU(d)}{\cong} \bigoplus_{\nu}\mathbbm{C}^{c^{\nu}_{\lambda,\mu}}\otimes\mathcal{Q}_{\nu}^{d} \, .
    \end{align}    
\end{definition}

These coefficients can be determined by the \textit{Littlewood-Richardsen Rule}, given f.ex. in \cite[Chapter 4.9]{Sagan_2011}. The following lemma summarizes some consequences.

\begin{lemma}
    \label{lem:littlewood-richardson_rules}
    Let $\lambda\vdash_{d}m$ and $\mu\vdash_{d}n$. Then we have
    \begin{enumerate}
        \setcounter{enumi}{0}
        \item $c^{\nu}_{\lambda,\mu}=c^{\nu}_{\mu,\lambda}$.
    \end{enumerate}
    Let further $c^{\nu}_{\lambda,\mu}\geq1$. Then we have
    \begin{enumerate}
        \setcounter{enumi}{1}
        \item $\nu\vdash_{d}(m+n)$,
        \item $\nu_{i}\geq\lambda_{i}$ for all $i\in[d]$.
    \end{enumerate}
    Let further $j,k\in\mathbbm{N}$ be such that $\nu_{i+k-1}>\lambda_{i}$. Then we have
    \begin{enumerate}
        \setcounter{enumi}{3}
        \item $l(\mu)\geq k$.
    \end{enumerate}
\end{lemma}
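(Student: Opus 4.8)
The plan is to derive all four items of the lemma directly from the combinatorial Littlewood--Richardson rule, so I would open the proof by recalling it once. The character of $\mathcal{Q}^{d}_{\lambda}$ is the Schur polynomial $s_{\lambda}(x_{1},\dots,x_{d})$, so the coefficients $c^{\nu}_{\lambda,\mu}$ are determined by $s_{\lambda}s_{\mu}=\sum_{\nu}c^{\nu}_{\lambda,\mu}s_{\nu}$ (the sum over partitions with at most $d$ parts) and agree with the universal Littlewood--Richardson coefficients, which by \cite[Chapter 4.9]{Sagan_2011} count the semistandard skew tableaux $T$ of shape $\nu/\lambda$ and content $\mu$ whose reverse reading word is a lattice word. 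For our purposes only the \emph{existence} of a semistandard $T$ of the prescribed shape and content, and the column-strictness of such $T$, will be needed --- the lattice condition plays no role.

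Item~1 I would obtain for free from $s_{\lambda}s_{\mu}=s_{\mu}s_{\lambda}$ (equivalently, from $\mathcal{Q}^{d}_{\lambda}\otimes\mathcal{Q}^{d}_{\mu}\cong\mathcal{Q}^{d}_{\mu}\otimes\mathcal{Q}^{d}_{\lambda}$). For items~2 and~3, assume $c^{\nu}_{\lambda,\mu}\geq 1$ and fix a semistandard skew tableau $T$ of shape $\nu/\lambda$ and content $\mu$. Its existence forces $\nu/\lambda$ to be a legitimate skew shape, i.e.\ $\lambda\subseteq\nu$, which is exactly $\nu_{i}\geq\lambda_{i}$ for all $i\in[d]$, giving item~3. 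For item~2, $T$ has $|\mu|=n$ cells, so $|\nu|=|\lambda|+n=m+n$; since $\nu$ has at most $d$ parts, $\nu\vdash_{d}(m+n)$. (Equivalently, $s_{\lambda}s_{\mu}$ is homogeneous of degree $m+n$ in $d$ variables.)

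The one step with real content is item~4, which I would argue as follows. Assume $\nu_{i+k-1}>\lambda_{i}$ and put $c:=\lambda_{i}+1$. Monotonicity of $\lambda$ gives $\lambda_{r}\leq\lambda_{i}<c$ for all $r\geq i$, while monotonicity of $\nu$ gives $\nu_{r}\geq\nu_{i+k-1}\geq c$ for all $r\leq i+k-1$. Hence for every $r\in\{i,i+1,\dots,i+k-1\}$ the cell $(r,c)$ belongs to $\nu/\lambda$, so these $k$ cells form a single column segment of $T$. Column-strictness of $T$ then yields $T(i,c)<T(i+1,c)<\cdots<T(i+k-1,c)$, a strictly increasing chain of $k$ positive integers, whence $T(i+k-1,c)\geq k$. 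Since an entry of value $v$ occurs in a tableau of content $\mu$ only when $\mu_{v}\geq 1$, i.e.\ only when $v\leq l(\mu)$, taking $v=T(i+k-1,c)$ gives $l(\mu)\geq k$.

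I expect the only delicate point to be the bookkeeping in item~4: verifying that $\nu_{i+k-1}>\lambda_{i}$ is precisely the condition that makes the column segment $\{(r,c):i\leq r\leq i+k-1\}$ fit inside $\nu/\lambda$, and being careful that ``content $\mu$'' is the convention under which the value $v$ appears in $T$ iff $v\leq l(\mu)$. Beyond this, everything is an immediate consequence of the Littlewood--Richardson rule together with the definition of semistandardness, so no further representation-theoretic input is required.
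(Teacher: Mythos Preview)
Your proposal is correct and follows essentially the same approach as the paper: item~1 from commutativity of the tensor product, items~2 and~3 as immediate consequences of the Littlewood--Richardson rule, and item~4 via the observation that $\nu_{i+k-1}>\lambda_{i}$ forces a column of length $k$ in $\nu/\lambda$, which column-strictness then bounds by $l(\mu)$. Your write-up is simply a fleshed-out version of the paper's terse sketch, making explicit the column $c=\lambda_{i}+1$ and the chain of inequalities that the paper leaves implicit.
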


\begin{proof}
    The proof requires deeper familiarity about partitions and their relation to so-called (Young) tableaux. For the sake of brevity, we assume familiarity with the Littlewood-Richardsen rule. Statement 1 follows from commutativity of the tensor product, and Statements 2 and 3 follow directly from the Littlewood-Richardsen rule. Statement 4 follows from the requirement that there has to exist a semistandard tableau with shape $\nu/\lambda$ and with content $\mu$. A semistandard tableau with content $\mu$ can have colums of at most size $l(\mu)$. However, $\nu_{i+k}>\lambda_{i}$ implies that $\nu/\lambda$ has a column of at least size $k$.
\end{proof}

In addition, we need some facts from the representation theory of $SU(d)$ and the connection to partitions and staircases. More information about this subject can be found in \cite[Chapter 3, Chapter 9.1]{Goodman2009}.

\begin{fact}
    \label{lem:staircase_add_column}
    Let $\gamma\vdash_{d}(m,n)$ be a staircase, let $k\in\mathbbm{Z}$ and let
    \begin{align}
        \nu:=(\gamma_{1}+k,\gamma_{2}+k,...,\gamma_{d}+k) \, .
    \end{align}
    Then $\nu$ is again a staircase and we have
    \begin{align}
        \mathcal{Q}_{\gamma}^{d} \stackrel{SU(d)}{\cong}\mathcal{Q}_{\nu}^{d} \, .
    \end{align}
    Further, if $\gamma\vdash_{d}(m,0)$ and $\lambda\vdash_{d}m$ is given by $\lambda_{i}=\gamma_{i}$ for all $i\in[d]$, then we have
    \begin{align}
        \mathcal{Q}_{\gamma}^{d} \stackrel{SU(d)}{\cong}\mathcal{Q}_{\lambda}^{d} \, .
    \end{align}
\end{fact}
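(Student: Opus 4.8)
## Proof plan for Fact \ref{lem:staircase_add_column}

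The plan is to prove the two claims separately, both relying on the standard fact that finite-dimensional irreducible representations of $SU(d)$ are classified by their highest weights, which live in the weight lattice modulo the lattice generated by $(1,1,\dots,1)$. The key point is that shifting all entries of a staircase $\gamma$ by a common integer $k$ corresponds to tensoring the representation by a power of the determinant representation $\det^{\otimes k}$, which is trivial for $SU(d)$ (though not for $GL(d)$ or $U(d)$).

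\textbf{First claim.} First I would check that $\nu = (\gamma_1+k,\dots,\gamma_d+k)$ is again a staircase in the sense of Eqs.~\eqref{equ:staircase_def_1} and \eqref{equ:staircase_def_2}. The weak-decreasing condition $\nu_1 \geq \cdots \geq \nu_d$ is immediate since we added the same constant to each entry. The sum condition becomes $\sum_i \nu_i = (m-n) + kd$, so I would need to reinterpret $\nu$ as a staircase for a shifted pair $(m',n')$ with $m'-n' = m-n+kd$; concretely one can always choose e.g. $m' = m+kd$, $n'=n$ when $k\geq 0$, and verify that $\sum_{\nu_i>0}\nu_i \leq m'$ and $\sum_{\nu_i<0}|\nu_i|\leq n'$ hold — this is where the slack in the definition matters, and it holds because adding a positive constant can only decrease the total magnitude of the negative part while the positive part grows by at most $kd$. (For $k<0$ the symmetric choice works.) Then for the representation-theoretic equivalence $\mathcal{Q}^d_\gamma \cong \mathcal{Q}^d_\nu$, I would invoke that the highest weight of $\mathcal{Q}^d_\gamma$ as an $SU(d)$-representation is the image of $\gamma$ in $\mathbbm{Z}^d/\mathbbm{Z}(1,\dots,1)$, and $\gamma$ and $\nu = \gamma + k(1,\dots,1)$ have the same image. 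Equivalently, on the level of the $GL(d)$ or $U(d)$ action one has $\mathcal{Q}^d_\nu \cong \mathcal{Q}^d_\gamma \otimes \det^{\otimes k}$, and restricting to $SU(d)$ kills the determinant twist.

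\textbf{Second claim.} For the special case $\gamma \vdash_d (m,0)$, by Eqs.~\eqref{equ:staircase_def_1} and \eqref{equ:staircase_def_2} with $n=0$ we have $\gamma_1+\cdots+\gamma_d = m$ and $\sum_{\gamma_i<0}|\gamma_i|\leq 0$, forcing all $\gamma_i \geq 0$; hence $\gamma$ is literally a partition $\lambda \vdash_d m$, and $\mathcal{Q}^d_\gamma = \mathcal{Q}^d_\lambda$ on the nose (this is the $n=0$ specialization of mixed Schur–Weyl duality to ordinary Schur–Weyl duality already noted after Theorem \ref{thm:mixed_SW_duality}). So this part is essentially definitional once the sign constraint is unpacked.

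\textbf{Main obstacle.} I expect the only real subtlety is the bookkeeping in the first claim: verifying that the shifted tuple $\nu$ still satisfies the one-sided inequalities \eqref{equ:staircase_def_2} for an appropriate choice of $(m',n')$, rather than the representation theory, which is a one-line appeal to triviality of $\det$ on $SU(d)$. One should be slightly careful that the statement is about $SU(d)$-equivalence and not $U(d)$-equivalence, since the latter is false; I would make this explicit so the reader sees why the determinant twist is harmless here. Everything else is routine.
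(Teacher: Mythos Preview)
The paper does not prove this statement: it is labeled as a \emph{Fact} and given no argument beyond a citation to \cite[Chapter 3, Chapter 9.1]{Goodman2009}. Your proof sketch is correct and is exactly the standard argument one would extract from that reference --- triviality of the determinant character on $SU(d)$ gives $\mathcal{Q}_{\gamma}^{d}\otimes\det^{\otimes k}\cong\mathcal{Q}_{\gamma}^{d}$, which handles the first claim, and the second claim is indeed purely definitional once one unpacks $n=0$ in Eq.~\eqref{equ:staircase_def_2}. The bookkeeping you flag (finding an $(m',n')$ for which $\nu$ satisfies Eqs.~\eqref{equ:staircase_def_1}--\eqref{equ:staircase_def_2}) is also fine: for $k\geq 0$ your choice $m'=m+kd$, $n'=n$ works because $\max(\gamma_i+k,0)\leq\max(\gamma_i,0)+k$ controls the positive side and the negative side can only shrink; the case $k<0$ is symmetric. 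So there is nothing to compare against --- you have simply supplied the proof the paper omitted.
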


\begin{fact}
    \label{lem:partition_dual_irrep}
    Let $\mu\vdash_{d}m$ be a partition, and let
    \begin{align}
        \overline{\mu}:=(\mu_{1}-\mu_{d},\mu_{1}-\mu_{d-1},...,\mu_{1}-\mu_{2},0) \, .
    \end{align}
    Then we have
    \begin{align}
        \overline{\mathcal{Q}_{\mu}^{d}} \stackrel{SU(d)}{\cong}\mathcal{Q}_{\overline{\mu}}^{d} \, .
    \end{align}
\end{fact}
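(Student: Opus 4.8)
The plan is to prove Fact~\ref{lem:partition_dual_irrep} via highest weight theory, viewing $\mathcal{Q}_{\mu}^{d}$ as the irrep of $SU(d)$ with highest weight $\mu$ (equivalently, the polynomial irrep of $\mathrm{GL}_d$ indexed by $\mu$, restricted to $SU(d)$). The single nontrivial input I would invoke is the standard fact that the complex conjugate $\overline{V}$ of an irrep $V$ with highest weight $\lambda$ is again irreducible, with highest weight $-w_0\lambda$, where $w_0$ is the longest element of the Weyl group $\mathcal{S}_{d}$ of $SU(d)$; concretely $w_0$ acts on the standard coordinates by $e_i\mapsto e_{d+1-i}$, so $-w_0\mu=(-\mu_d,-\mu_{d-1},\dots,-\mu_1)$.

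First I would note that, as highest weights of $SU(d)$, two weakly decreasing integer $d$-tuples that differ by a constant multiple of $(1,\dots,1)$ index the same irrep; this is precisely the first part of Fact~\ref{lem:staircase_add_column}. Adding $\mu_1\cdot(1,\dots,1)$ to $-w_0\mu$ gives
\begin{align}
    -w_0\mu+\mu_1(1,\dots,1)=(\mu_1-\mu_d,\ \mu_1-\mu_{d-1},\ \dots,\ \mu_1-\mu_2,\ \mu_1-\mu_1)=\overline{\mu}\,,
\end{align}
which is a genuine partition, since $\mu_1\geq\cdots\geq\mu_d\geq 0$ forces its entries to be weakly decreasing and nonnegative. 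Hence
\begin{align}
    \overline{\mathcal{Q}_{\mu}^{d}}\stackrel{SU(d)}{\cong}\mathcal{Q}_{-w_0\mu}^{d}\stackrel{SU(d)}{\cong}\mathcal{Q}_{\overline{\mu}}^{d}\,,
\end{align}
which is the claim.

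Alternatively, I would run a character-theoretic argument: every $U\in SU(d)$ is conjugate to a diagonal unitary $\mathrm{diag}(x_1,\dots,x_d)$ with $|x_i|=1$ and $\prod_i x_i=1$, so the character of $\overline{\mathcal{Q}_{\mu}^{d}}$ at $U$ is $\overline{s_{\mu}(x_1,\dots,x_d)}=s_{\mu}(x_1^{-1},\dots,x_d^{-1})$. A short manipulation of the Weyl (bialternant) character formula---substituting $x_i\mapsto x_i^{-1}$ and pulling out $(x_1\cdots x_d)^{-\mu_1}$---yields $s_{\mu}(x_1^{-1},\dots,x_d^{-1})=(x_1\cdots x_d)^{-\mu_1}\,s_{\overline{\mu}}(x_1,\dots,x_d)$, and the prefactor equals $1$ on $SU(d)$; matching characters then gives the isomorphism.

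The computations are routine; the only point that needs genuine care is the $SU(d)$ versus $\mathrm{GL}_d$ bookkeeping. Over $\mathrm{GL}_d$ the dual of $\mathcal{Q}_{\mu}^{d}$ unavoidably carries a $\det^{-\mu_1}$ twist, and it is exactly the restriction to $SU(d)$---the freedom to add constant columns licensed by Fact~\ref{lem:staircase_add_column}---that converts the weight $-w_0\mu$ into the honest partition $\overline{\mu}$. So the main obstacle, such as it is, is to apply this twist exactly once and to verify that the resulting $d$-tuple is indeed a partition.
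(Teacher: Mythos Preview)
Your proof is correct. Both the highest-weight argument via $-w_0\mu$ and the character computation are standard and valid ways to establish this isomorphism, and you have handled the $SU(d)$ versus $\mathrm{GL}_d$ twist cleanly by invoking Fact~\ref{lem:staircase_add_column}.

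That said, the paper does not actually prove this statement. It is presented as a \emph{Fact} with no argument, introduced by the sentence ``In addition, we need some facts from the representation theory of $SU(d)$ and the connection to partitions and staircases,'' together with a citation to Goodman--Wallach \cite[Chapter 3, Chapter 9.1]{Goodman2009}. So there is no ``paper's own proof'' to compare against; you have supplied a proof where the authors simply appeal to the literature. Your highest-weight route is exactly the argument one would extract from the cited reference, and the character-theoretic alternative is a nice self-contained variant that avoids any black-box appeal to the classification of duals.
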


Now we have the tools to prove the following lemma

\begin{lemma}
    \label{lem:restriction_tensor_product_irreps}
    Let $m,n,d\in\mathbbm{N}$ and let $\lambda\vdash_{d}m$ and $\mu\vdash_{d}n$. Let further $l(\lambda)=r$ and $l(\mu)=r'$. Then we have
    \begin{align}
        \label{equ:lem_restriction_tensor_product_irreps_1}        \mathcal{Q}_{\lambda}^{d}\otimes\overline{\mathcal{Q}_{\mu}^{d}} \stackrel{SU(d)}{\cong} \bigoplus_{\gamma\vdash_{d}(m,n)}\mathbbm{C}^{c^{\gamma}_{\lambda,\overline{\mu}}}\otimes\mathcal{Q}_{\gamma}^{d} \, ,
\end{align}
    with $c^{\nu}_{\lambda,\mu}\in\mathbbm{N}_{0}$ and
    \begin{align}
        \label{equ:lem_restriction_tensor_product_irreps_2}
        c^{\gamma}_{\lambda,\overline{\mu}} >0 \quad \Rightarrow \quad |\{i:\gamma_{i}>0\}|\leq r \quad \text{and} \quad  |\{i:\gamma_{i}<0\}|\leq r' \, .
    \end{align}
\end{lemma}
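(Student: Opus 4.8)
The plan is to reduce the statement about the mixed tensor product $\mathcal{Q}_\lambda^d\otimes\overline{\mathcal{Q}_\mu^d}$ to a statement about an ordinary tensor product of polynomial $SU(d)$-irreps, and then to apply the Littlewood--Richardson facts collected in Lemma~\ref{lem:littlewood-richardson_rules}. First I would invoke Fact~\ref{lem:partition_dual_irrep} to write $\overline{\mathcal{Q}_\mu^d}\cong\mathcal{Q}_{\overline\mu}^d$, where $\overline\mu=(\mu_1-\mu_d,\dots,\mu_1-\mu_2,0)$. Hence
\begin{align}
    \mathcal{Q}_\lambda^d\otimes\overline{\mathcal{Q}_\mu^d}\stackrel{SU(d)}{\cong}\mathcal{Q}_\lambda^d\otimes\mathcal{Q}_{\overline\mu}^d\stackrel{SU(d)}{\cong}\bigoplus_\nu\mathbbm{C}^{c^\nu_{\lambda,\overline\mu}}\otimes\mathcal{Q}_\nu^d \, ,
\end{align}
by the definition of the Littlewood--Richardson coefficients, where the sum ranges over partitions $\nu\vdash_d(|\lambda|+|\overline\mu|)$. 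The remaining work is to translate this decomposition, indexed by partitions $\nu$, into one indexed by the staircases $\gamma\vdash_d(m,n)$ appearing in Eq.~\eqref{equ:lem_restriction_tensor_product_irreps_1}, and to read off the support constraint~\eqref{equ:lem_restriction_tensor_product_irreps_2}.

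For the re-indexing, each partition $\nu$ that occurs must be converted to the staircase $\gamma$ with $\mathcal{Q}_\nu^d\cong\mathcal{Q}_\gamma^d$; by Fact~\ref{lem:staircase_add_column} this is achieved by subtracting a constant column, namely $\gamma=(\nu_1-c,\dots,\nu_d-c)$ for an appropriate $c$, and one checks the shift lands us at $\gamma\vdash_d(m,n)$ (the row-sum becomes $m-n$ once $c$ is chosen correctly, using $|\lambda|=m$ and $|\overline\mu|$ together with $\mu_1 d-n$, so $c=\mu_1$). Setting $c^\gamma_{\lambda,\overline\mu}:=\sum_{\nu:\,\nu-c\mathbf{1}=\gamma}c^\nu_{\lambda,\overline\mu}$ — but in fact for each $\gamma$ there is a unique such $\nu$, since the map $\nu\mapsto\gamma$ is injective on partitions with $\nu_d=0$ is not quite right; more carefully, among representatives one fixes $\nu$ by requiring $\nu_d\ge 0$ minimal, so the correspondence is a bijection and no summation is needed. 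This gives Eq.~\eqref{equ:lem_restriction_tensor_product_irreps_1}.

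The heart of the matter is the support bound~\eqref{equ:lem_restriction_tensor_product_irreps_2}, and I would handle the two inequalities separately. For $|\{i:\gamma_i>0\}|\le r$: the positive part of $\gamma$ corresponds to the rows of $\nu$ that exceed $c=\mu_1$, i.e. the rows where $\nu_i>\mu_1$. Since $\overline\mu$ has largest part $\overline\mu_1=\mu_1-\mu_d\le\mu_1$, and $c^\nu_{\lambda,\overline\mu}>0$ forces (by Statement~3 of Lemma~\ref{lem:littlewood-richardson_rules} applied with the roles of $\lambda$ and $\overline\mu$ swapped, using commutativity from Statement~1) $\nu_i\le\lambda_i+\overline\mu_1\le\lambda_i+\mu_1$; but more to the point I want: $\nu_i>\mu_1\ge\overline\mu_1$ can only happen for $i\le l(\lambda)=r$, because if $i>r$ then $\lambda_i=0$ and the skew shape $\nu/\overline\mu$ would need a column of height exceeding $l(\lambda)$ to fill row $i$ with content from $\lambda$ — this is exactly the mechanism of Statement~4. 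For the bound $|\{i:\gamma_i<0\}|\le r'$: the negative entries of $\gamma$ are the rows $i$ with $\nu_i<\mu_1$, equivalently (reading from the bottom) rows where $\nu$ falls short of the full rectangle; here one uses $\nu_i\ge\overline\mu_i$ (Statement~2) together with $\overline\mu_i=\mu_1-\mu_{d+1-i}$, so $\nu_i<\mu_1$ forces $\overline\mu_i<\mu_1$, i.e. $\mu_{d+1-i}>0$, which can hold for at most $l(\mu)=r'$ values of $i$. The main obstacle I anticipate is getting the column-shift constant $c$ and the direction of the Littlewood--Richardson inequalities exactly right — in particular being careful about whether one applies Statement~2/3/4 to the pair $(\lambda,\overline\mu)$ or $(\overline\mu,\lambda)$ — and cleanly arguing that the re-indexing $\nu\leftrightarrow\gamma$ is a genuine bijection rather than a many-to-one map; the underlying representation theory is entirely standard once those bookkeeping points are pinned down.
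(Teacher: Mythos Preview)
Your approach is essentially the same as the paper's: rewrite $\overline{\mathcal{Q}_\mu^d}\cong\mathcal{Q}_{\overline\mu}^d$ via Fact~\ref{lem:partition_dual_irrep}, decompose $\mathcal{Q}_\lambda^d\otimes\mathcal{Q}_{\overline\mu}^d$ using Littlewood--Richardson, swap the roles $(\lambda,\overline\mu)\to(\overline\mu,\lambda)$ via Statement~1, apply Statements~3 and~4 of Lemma~\ref{lem:littlewood-richardson_rules} to bound $\nu_{d-r'}\geq\mu_1$ and $\nu_{r+1}\leq\mu_1$, and shift by $c=\mu_1$ via Fact~\ref{lem:staircase_add_column} to land at $\gamma\vdash_d(m,n)$. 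Two small cleanups: where you write ``Statement~2'' for $\nu_i\geq\overline\mu_i$ you mean Statement~3; and the $\nu\leftrightarrow\gamma$ correspondence is simply the bijection $\gamma_i=\nu_i-\mu_1$ (there is no summation or choice of representative to worry about), so you can drop the hedging in that paragraph.
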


\begin{proof}
    Lemma \ref{lem:partition_dual_irrep} tells us that
    \begin{align}
        \label{equ:proof_tensor_products_into_irreps}
        \mathcal{Q}_{\lambda}^{d}\otimes\overline{\mathcal{Q}_{\mu}^{d}} \stackrel{SU(d)}{\cong}\mathcal{Q}_{\lambda}^{d}\otimes\mathcal{Q}_{\overline{\mu}}^{d} \stackrel{SU(d)}{\cong}\bigoplus_{\nu}\mathbbm{C}^{c^{\nu}_{\lambda,\overline{\mu}}}\otimes\mathcal{Q}_{\nu}^{d} \, ,
    \end{align}
     where the last equivalence is the decomposition into irreps with multiplicities $c^{\nu}_{\lambda,\overline{\mu}}$. Since $\mu$ has only $r'$ nonzero entries, we find that
    \begin{align}
        \overline{\mu}=(\mu_{1},...,\mu_{1},\mu_{1}-\mu_{r'},\mu_{1}-\mu_{r'-1},...,0) \, .
    \end{align}
    Now Lemma \ref{lem:littlewood-richardson_rules} tells us that $c^{\nu}_{\lambda,\overline{\mu}}=c^{\nu}_{\overline{\mu},\lambda}$. Further, for $c^{\nu}_{\overline{\mu},\lambda}\geq1$ we have 
    \begin{enumerate}
        \item $\nu\vdash_{d}(d\cdot\mu_{1}-n)+m$,
        \item $\nu_{d-r'}\geq\overline{\mu}_{d-r'}=\mu_{1}$,
        \item $\nu_{r+1}\leq \overline{\mu}_{1}=\mu_{1}$.
    \end{enumerate}
    The first point comes from the fact that $\overline{\mu}\vdash_{d}(d\cdot\mu_{1}-n)$, and the second point follows from the fact that $\nu_{i}\geq\overline{\mu}_{i}$ for all $i\in[d]$. For the third point, we remember that $\nu_{r+1}>\overline{\mu}_{1}$ implies that $l(\lambda)\geq r+1$. However, our assumption was that $l(\lambda)=r$. It is important to note that the above statements hold for all partitions $\nu$, whose irreps have nontrivial contribution to Eq. \eqref{equ:proof_tensor_products_into_irreps}. We now use Lemma \ref{lem:staircase_add_column} to obtain equivalent staircases $\gamma\vdash_{d}(m,n)$. For a given $\nu$, we take
    \begin{align}
        \gamma=(\nu_{1}-\mu_{1},...,\nu_{d}-\mu_{1}) \, .
    \end{align}
    It is straightforward to verify that $\gamma\vdash_{d}(m,n)$, and that all $\gamma\vdash_{d}(m,n)$ with nontrivial contribution in Eq. \eqref{equ:lem_restriction_tensor_product_irreps_1} are of this form. The inequalities in Eq. \eqref{equ:lem_restriction_tensor_product_irreps_2} now follow directly from the facts that $\nu_{d-r'}\geq\mu_{1}$, that $\nu_{r+1}\leq\mu_{1}$, and that the entries $\nu_{i}$ are monotone decreasing.
\end{proof}

\subsection{Restriction on staircases}

We start by inspecting Schur-Weyl duality, that is $n=0$. Let $S\subseteq\mathbbm{C}^{d}$ be a subspace with $\dim S=r$, and let
\begin{align}
    K:=S^{\otimes m} \, .
\end{align}
We can now take the well-known result (see e.g. \cite[Lemma 2]{Haetal17})

\begin{lemma}
    Let $\rho$ be a state on $\mathbbm{C}^{d}$ with rank $r$, and let $\lambda\vdash_{d}m$ with $l(\lambda)>r$. Then we have
    \begin{align}
        \Tr[\rho^{\otimes m}\Pi_{\lambda}^{m}]=0 \, .
    \end{align}
\end{lemma}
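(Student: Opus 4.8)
The plan is to prove this by relating the projector $\Pi_\lambda^m$ onto the isotypic subspace labelled by $\lambda$ inside $(\mathbbm{C}^d)^{\otimes m}$ to the support of $\rho^{\otimes m}$, using the fact that $\rho$ has rank $r$. First I would observe that since $\rho$ has rank $r$, there is a subspace $S \subseteq \mathbbm{C}^d$ with $\dim S = r$ such that $\supp(\rho) \subseteq S$, and hence $\supp(\rho^{\otimes m}) \subseteq S^{\otimes m}$. Therefore it suffices to show that $\Pi_\lambda^m$ vanishes on $S^{\otimes m}$, i.e.\ that $S^{\otimes m}$ has trivial overlap with the $\lambda$-isotypic component whenever $l(\lambda) > r$.

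Next I would invoke Schur-Weyl duality for the smaller space $S^{\otimes m} \cong (\mathbbm{C}^r)^{\otimes m}$: this decomposes as $\bigoplus_{\mu \vdash_r m} \mathcal{P}_\mu \otimes \mathcal{Q}_\mu^r$, where the sum runs only over partitions $\mu$ with at most $r$ rows, i.e.\ $l(\mu) \le r$. The subtlety is that this is a decomposition under $\mathcal{S}_m \times SU(r)$, whereas $\Pi_\lambda^m$ projects onto an $\mathcal{S}_m \times SU(d)$-isotypic component of the ambient space. The key point is that the $\mathcal{S}_m$-isotypic components are intrinsic to the $\mathcal{S}_m$-action alone and do not depend on $d$ versus $r$: the multiplicity space $\mathcal{P}_\mu$ is the same irrep of $\mathcal{S}_m$ in both decompositions. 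Since $S^{\otimes m}$ is an $\mathcal{S}_m$-invariant subspace of $(\mathbbm{C}^d)^{\otimes m}$ (permutations act by permuting tensor factors, preserving $S^{\otimes m}$), its decomposition into $\mathcal{S}_m$-isotypic components is the restriction of the ambient one, and only those $\mathcal{S}_m$-irreps $\mathcal{P}_\mu$ with $l(\mu) \le r$ appear. Consequently $\Pi_\lambda^m \big|_{S^{\otimes m}} = 0$ whenever $l(\lambda) > r$.

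Putting this together: $\Tr[\rho^{\otimes m} \Pi_\lambda^m] = \Tr[\Pi_{S^{\otimes m}} \rho^{\otimes m} \Pi_{S^{\otimes m}} \Pi_\lambda^m] = \Tr[\rho^{\otimes m} \Pi_{S^{\otimes m}} \Pi_\lambda^m \Pi_{S^{\otimes m}}] = 0$, since $\Pi_{S^{\otimes m}}\Pi_\lambda^m\Pi_{S^{\otimes m}} = 0$ by the previous step, using that $\rho^{\otimes m} = \Pi_{S^{\otimes m}}\rho^{\otimes m}\Pi_{S^{\otimes m}}$ and cyclicity of the trace together with $(\rho^{\otimes m})^{1/2}$ being supported on $S^{\otimes m}$. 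I expect the main obstacle to be making rigorous the claim that the $\mathcal{S}_m$-isotypic decomposition of the invariant subspace $S^{\otimes m}$ is compatible with (i.e.\ a sub-collection of) that of the ambient space, which is really just the statement that restricting a representation to an invariant subspace restricts the isotypic decomposition; the cleanest way to phrase this is that $\Pi_\lambda^m$ is (proportional to) a sum of primitive central idempotents of $\mathbbm{C}[\mathcal{S}_m]$ acting on $(\mathbbm{C}^d)^{\otimes m}$, these act on the invariant subspace $S^{\otimes m}$, and on $S^{\otimes m}$ the idempotent for $\lambda$ annihilates everything because $\lambda$ with $l(\lambda)>r$ is not among the $\mathcal{S}_m$-types appearing in $(\mathbbm{C}^r)^{\otimes m}$.
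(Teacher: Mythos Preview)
Your argument is correct. Note, however, that the paper does not actually supply a proof of this lemma: it quotes it as a well-known fact with a reference to \cite{Haetal17}, so there is no paper proof to compare against. Your approach---reducing to the support $S^{\otimes m}$, invoking Schur--Weyl duality for $(\mathbbm{C}^{r})^{\otimes m}$, and observing that the $\mathcal{S}_{m}$-isotypic projector $\Pi_{\lambda}^{m}$ is the image of a central idempotent of $\mathbbm{C}[\mathcal{S}_{m}]$ and hence acts compatibly on the invariant subspace---is precisely the standard proof of this fact, and the potential obstacle you flag is resolved exactly as you indicate.
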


We remark that if $\rho$ is the maximally mixed state on $S$, then we have $\id_{K}=r^m\rho^{\otimes m}$, which tells us that
\begin{align}
    \label{equ:SW_duality_restriction_partitions}
    U_{\rm Sch}^{m}(K)\subseteq \bigoplus_{\lambda\vdash_{d}m \,, \, l(\lambda)\leq r}\mathcal{P}_{\lambda}\otimes\mathcal{Q}_{\lambda}^{d} \, .
\end{align}

For mixed Schur-Weyl duality, we can obtain a similar result through the following considerations. First, let $S,S'\subseteq\mathbbm{C}^{d}$ with $\dim S=r$ and $\dim S'=r'$, and set
\begin{align}
    K:=S^{\otimes m}\otimes\overline{S'}^{\otimes n} \, .
\end{align}
We can apply Eq. \ref{equ:SW_duality_restriction_partitions} to both the normal and the dual tensor power to obtain
\begin{align}
    \label{equ:subspace_S_m_S_n_irreps}
    (U^{m}_{\rm Sch}\otimes \overline{U^{n}_{\rm Sch}})(K)\subseteq \bigoplus_{\substack{\lambda\vdash_{d}m \,, \, l(\lambda)\leq r \\ \mu\vdash_{d}n \,, \, l(\mu)\leq r'}} \mathcal{P}_{\lambda}\otimes\overline{\mathcal{P}_{\mu}}\otimes\mathcal{Q}_{\lambda}^{d}\otimes\overline{\mathcal{Q}_{\mu}^{d}}\, .
\end{align}
Next, we want to find out which $SU(d)$ irreps appear in Eq.~\eqref{equ:subspace_S_m_S_n_irreps}. To this end, we use Lemma \ref{lem:restriction_tensor_product_irreps} from Appendix \ref{subs:irreps_of_SUd_in_tensor_product}, which tells us that
\begin{align}
    \mathcal{Q}_{\lambda}^{d}\otimes\overline{\mathcal{Q}_{\mu}^{d}} \stackrel{SU(d)}{\cong} \bigoplus_{\gamma\vdash_{d}(m,n)}\mathbbm{C}^{c^{\gamma}_{\lambda,\overline{\mu}}}\otimes\mathcal{Q}_{\gamma}^{d} \, ,
\end{align}
with $c^{\gamma}_{\lambda,\overline{\mu}}\in\mathbbm{N}_{0}$ and
\begin{align}
        c^{\gamma}_{\lambda,\overline{\mu}} >0 \quad \Rightarrow \quad |\{i:\gamma_{i}>0\}|\leq r \quad \text{and} \quad  |\{i:\gamma_{i}<0\}|\leq r' \, .
    \end{align}
If we now apply the mixed Schur transform instead of two Schur transforms on the left hand side of Eq.~\eqref{equ:subspace_S_m_S_n_irreps}, we obtain
\begin{align}
    \label{equ:subspace_A_m,n_irreps}
    U_{\rm Sch}^{m,n}(K)\subseteq \bigoplus_{\gamma\vdash_{d} (m,n)}\mathcal{P}_{\gamma}^{d}\otimes\mathcal{Q}_{\gamma}^d \, .
\end{align}
By Schur's Lemma, the $SU(d)$ irreps in Eqs. \eqref{equ:subspace_S_m_S_n_irreps} and \eqref{equ:subspace_A_m,n_irreps} have to be the same. Therefore we can use Lemma \ref{lem:restriction_tensor_product_irreps} to omit all those with more than $r$ positive or $r'$ negative entries, and we arrive at the following result

\begin{lemma}\label{lem:mixed_SW_duality_restriction_partitions}
    We have
    \begin{align}
        U_{\rm Sch}^{m,n}(K)\subseteq \bigoplus_{\substack{\gamma\vdash_{d}(m,n): \\ |\{i:\gamma_{i} > 0\}|\leq r \\ |\{i:\gamma_{i} < 0\}|\leq r'}}\mathcal{P}_{\gamma}^{d}\otimes\mathcal{Q}_{\gamma}^{d} \, .
    \end{align}
\end{lemma}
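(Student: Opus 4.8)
The plan is to assemble the statement from two ingredients already prepared in Appendix~\ref{app:mixed_SW_duality_subspaces} --- the one-sided rank restriction for ordinary Schur--Weyl duality, and the Littlewood--Richardson control of which $SU(d)$-irreps appear in a tensor product $\mathcal{Q}_{\lambda}^{d}\otimes\overline{\mathcal{Q}_{\mu}^{d}}$ --- and then to transport the resulting constraint from the ``two Schur transforms'' picture to the mixed Schur transform picture by an $SU(d)$-equivariance argument.

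First I would apply the one-sided restriction Eq.~\eqref{equ:SW_duality_restriction_partitions} to each of the two tensor factors of $K=S^{\otimes m}\otimes\overline{S'}^{\otimes n}$ separately. Taking complex conjugates to handle the dual factors (the dual Schur transform $\overline{U^{n}_{\rm Sch}}$ being simply the entrywise conjugate of $U^{n}_{\rm Sch}$), this yields Eq.~\eqref{equ:subspace_S_m_S_n_irreps}: the image $(U^{m}_{\rm Sch}\otimes\overline{U^{n}_{\rm Sch}})(K)$ is supported only on summands indexed by $\lambda\vdash_{d}m$ with $l(\lambda)\leq r$ and $\mu\vdash_{d}n$ with $l(\mu)\leq r'$. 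Next, for each such pair $(\lambda,\mu)$ I would decompose the $SU(d)$-module $\mathcal{Q}_{\lambda}^{d}\otimes\overline{\mathcal{Q}_{\mu}^{d}}$ into irreps via Lemma~\ref{lem:restriction_tensor_product_irreps}: Fact~\ref{lem:partition_dual_irrep} identifies $\overline{\mathcal{Q}_{\mu}^{d}}\cong\mathcal{Q}_{\overline{\mu}}^{d}$, the Littlewood--Richardson bounds of Lemma~\ref{lem:littlewood-richardson_rules} constrain the rows of every $\nu$ with $c^{\nu}_{\lambda,\overline{\mu}}>0$, and Fact~\ref{lem:staircase_add_column} converts $\nu$ to an equivalent staircase $\gamma\vdash_{d}(m,n)$ satisfying $|\{i:\gamma_{i}>0\}|\leq r$ and $|\{i:\gamma_{i}<0\}|\leq r'$. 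Hence every $SU(d)$-irrep occurring in $(U^{m}_{\rm Sch}\otimes\overline{U^{n}_{\rm Sch}})(K)$ carries a staircase label of this restricted type.

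Finally I would run $U_{\rm Sch}^{m,n}$ on $K$ directly and invoke mixed Schur--Weyl duality (Theorem~\ref{thm:mixed_SW_duality}) to get Eq.~\eqref{equ:subspace_A_m,n_irreps}, namely $U_{\rm Sch}^{m,n}(K)\subseteq\bigoplus_{\gamma\vdash_{d}(m,n)}\mathcal{P}_{\gamma}^{d}\otimes\mathcal{Q}_{\gamma}^{d}$. Both $U_{\rm Sch}^{m,n}$ and $U^{m}_{\rm Sch}\otimes\overline{U^{n}_{\rm Sch}}$ are $SU(d)$-equivariant unitaries, so $U_{\rm Sch}^{m,n}(K)$ and $(U^{m}_{\rm Sch}\otimes\overline{U^{n}_{\rm Sch}})(K)$ are each isomorphic to $K$ as $SU(d)$-representations, hence to one another; by uniqueness of the isotypic decomposition (Schur's Lemma) the multiset of $SU(d)$-irreps on the two sides agree. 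Combining this with the previous step, only staircases $\gamma$ with $|\{i:\gamma_{i}>0\}|\leq r$ and $|\{i:\gamma_{i}<0\}|\leq r'$ can appear in Eq.~\eqref{equ:subspace_A_m,n_irreps}, which is the asserted inclusion. The part I expect to require the most care is this last equivariance step: one must be precise that the $SU(d)$-isotypic content of $K$ is an intrinsic invariant, independent of which intertwiner is used to exhibit the decomposition, and that conjugating the Schur transform on the $n$ dual factors is compatible with the action $U\mapsto U^{\otimes m}\otimes\overline{U}^{\otimes n}$ in the right way. Everything else is bookkeeping with partitions and staircases that has already been carried out in Appendix~\ref{subs:irreps_of_SUd_in_tensor_product}.
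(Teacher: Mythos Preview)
Your proposal is correct and follows essentially the same approach as the paper's own proof: apply the one-sided rank restriction to each tensor factor to obtain Eq.~\eqref{equ:subspace_S_m_S_n_irreps}, decompose each $\mathcal{Q}_{\lambda}^{d}\otimes\overline{\mathcal{Q}_{\mu}^{d}}$ via Lemma~\ref{lem:restriction_tensor_product_irreps}, and then compare the $SU(d)$-isotypic content with that of $U_{\rm Sch}^{m,n}(K)$ using Schur's Lemma. If anything, your final paragraph spells out the equivariance/invariance justification more carefully than the paper does.
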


\end{document}